\theoremstyle{thmstyleone}%
\newtheorem{theorem}{Theorem}
\newtheorem{lemma}{Lemma}
\newtheorem{proposition}[theorem]{Proposition}%
\theoremstyle{thmstyletwo}%
\newtheorem{remark}{Remark}
\theoremstyle{thmstylethree}%
\newenvironment{hyp}[1]{
  \begin{enumerate}[label=\textbf{\sf(#1\arabic*)},resume=hyp#1]\begin{sf}}
{\end{sf}\end{enumerate}}
\newenvironment{hypn}[1]{
  \begin{enumerate}[label=\textbf{\sf(#1)}]\begin{sf}}
{\end{sf}\end{enumerate}}
\newcounter{notecounter}
\newcommand{\pik}[1]{\pi_0^{#1}}
\newcommand{\abs}[1]{\left\lvert#1\right\rvert}
\newcommand{\absm}[1]{\lvert#1\rvert}
\newcommand{\convasbis}{\ \overset{\PP^{\bar Q}_{\xip}-a.s.}{{\longrightarrow}}\ }
\newcommand{\convaszeta}{\ \overset{\PP^{\bar Q}_{\zeta}-a.s.}{{\longrightarrow}}\ }
\newcommand{\convergesto}[2]{\underset{#1\rightarrow #2}{\longrightarrow}}
\newcommand{\convproba}[1]{\ \overset{#1-\mathrm{prob}}{{\longrightarrow}}\ }
\newcommand{\convlaw}[1]{\overset{#1-law}{\leadsto}}
\newcommand{\dd}[1]{\mathrm{d}#1}
\renewcommand{\det}{\mathrm{det}}
\newcommand{\EE}{\mathbb E}
\newcommand{\ens}[1]{\mathsf{#1}}
\newcommand{\eqdef}{:=}
\newcommand{\eqsp}{\,}
\newcommand{\ess}{\mathrm{ESS}}
\newcommand{\eset}{\mathbb{E}}
\newcommand{\espcond}[2]{\mathbb{E} \left[ #1 \middle \vert #2 \right]}
\newcommand{\espcondmkv}[4]{\mathbb{E}_{#1}^{#2} \left[ #3 \middle \vert #4 \right]}
\newcommand{\Fcal}{\mathcal{F}}
\newcommand{\floor}[1]{\left\lfloor #1\right\rfloor }
\newcommand{\fracpart}[1]{\left\langle #1 \right\rangle }
\newcommand{\funcset}[2]{\mathsf{F}_{#2}(#1)}
\renewcommand{\geq}{\geqslant}
\newcommand{\indi}[1]{\mathbf{1}_{#1}}
\newcommand{\indiacc}[1]{\mathbf{1}_{\{#1\}}}
\renewcommand{\leq}{\leqslant}
\newcommand{\lr}[1]{\left(#1 \right)}
\newcommand{\lrb}[1]{\left[#1 \right]}
\newcommand{\lrbm}[1]{[#1]}
\newcommand{\lrc}[1]{\left\{#1 \right\}}
\newcommand{\lrcb}[1]{\left\{#1 \right\}}
\newcommand{\mcbb}{\mathcal B}
\newcommand{\mcf}{\mathcal{F}}
\newcommand{\measureset}{\mathsf{M}}
\newcommand{\Ncal}{\mathcal{N}}
\newcommand{\norm}[1]{\left \| #1 \right \|}
\newcommand{\nset}{\mathbb N}
\newcommand{\nsetzero}{{\nset_0}}
\newcommand{\NN}{\mathbb{N}}
\newcommand{\piaux}{\mu}
\newcommand{\posfunc}[1]{\mathsf{F}_+(#1)}
\newcommand{\PE}{\mathbb E}
\newcommand{\PP}{\mathbb P}
\newcommand{\Prob}{\mathbb{P}}
\newcommand{\rej}{S}
\newcommand{\rmd}{\mathrm d}
\newcommand{\rme}{\mathrm e}
\newcommand{\Ropt}{R_{\scriptscriptstyle \mathrm{opt}}}
\newcommand{\Ropttilde}{\tilde R_{\scriptscriptstyle \mathrm{opt}}}
\newcommand{\Rpm}{R_{\scriptscriptstyle \mathrm{pm}}}
\newcommand{\Rpmtilde}{\tilde R_{\scriptscriptstyle \mathrm{pm}}}
\newcommand{\rset}{\mathbb{R}}
\newcommand{\seqq}[2]{\{#1: \eqsp #2\}}
\newcommand{\sett}[2]{\lrc{#1: \eqsp #2}}
\newcommand{\weaklim}[1]{\overset{#1}{\rightsquigarrow}}
\newcommand{\xip}{{\xi'}}
\newcommand{\Xset}{{\mathsf X}}
\newcommand{\Xaux}{\tilde{X}}
\newcommand{\Xsigma}{\mathcal X}
\newcommand{\Yset}{{\mathsf Y}}
\newcommand{\Ysigma}{\mathcal Y}
\newcommand{\Var}{\mathbb{V}\mathrm{ar}}
\begin{document}



\title{The Importance Markov Chain}


\author[1]{Charly Andral}

\author[2]{Randal Douc}

\author[2]{Hugo Marival}

\author[1,3]{Christian P. Robert}

\affil[1]{
    {CEREMADE, CNRS, UMR 7534, Université Paris-Dauphine, PSL University}, 75016, Paris, France
}

\affil[2]{
    {SAMOVAR, Telecom Sudparis, Institut
    Polytechnique de Paris}, 9 rue Charles
    Fourier, {91820, Evry}, France
}

\affil[3]{
    {Department of Statistics, University of Warwick}, Coventry, {CV4 7AL}, UK
}

\affil[ ]{Emails: \{andral, xian\}@ceremade.dauphine.fr, \{randal.douc, hugo.marival\}@telecom-sudparis.eu}

\maketitle

\begin{abstract}
    The Importance Markov chain is a novel algorithm bridging the gap between rejection sampling and importance sampling, moving from one to the other through a tuning parameter. Based on a modified sample of an instrumental Markov chain targeting an instrumental distribution (typically via a MCMC kernel), the Importance Markov chain produces an extended Markov chain where the marginal distribution of the first component converges to the target distribution. For example, when targeting a multimodal distribution, the instrumental distribution can be chosen as a tempered version of the target which allows the algorithm to explore its modes more efficiently.
    We obtain a Law of Large Numbers and a Central Limit Theorem as well as geometric ergodicity for this extended kernel under mild assumptions on the instrumental kernel. Computationally, the algorithm is easy to implement and preexisting libraries can be used to sample from the instrumental distribution.

\end{abstract}
\providecommand{\keywords}[1]
{
  \small	
  \textbf{\textit{Keywords---}} #1
}

\keywords{Markov chain Monte Carlo,importance sampling, Monte Carlo methods, ergodicity, regeneration}

\section{Introduction}\label{sec1}

In Monte Carlo methods \cite{metropolisMonteCarloMethod1949} and in particular in computational Bayesian statistics, sampling is used to construct estimates for quantities depending on problem-specific distributions. As a first approach, one can simulate independently according to another
distribution, called the \textit{instrumental} 
distribution, and use this sample to build an estimate of the quantity of interest (see, e.g.,  \cite{robertMonteCarloStatistical2010} for a general introduction to Monte Carlo methods). The most well-known example is the importance sampling (IS) technique \cite{kahnModificationMonteCarlo1950}, which produces a weighted
sample to approximate $\pi(f) = \int f(x)\pi(x) \dd x$ where $\pi$ is a given distribution (by an abuse of
notation, we also denote $\pi$ its density with respect to a dominating measure $\rmd x$). Importance sampling is based on rewriting the quantity of interest as $\pi(f)=\int f(x) \frac{\pi(x)}{\tilde \pi (x)} \tilde \pi(x) \dd x$ for any density $\tilde \pi$ that dominates $\pi$. Then, $\pi(f)$ can be estimated by sampling independently $X_1,X_2,\cdots$ from the instrumental distribution $\tilde \pi$ and by returning the estimate $\tilde I = n^{-1} \sum_{i=1}^n \frac{\pi(X_i)}{\tilde \pi(X_i)} f(X_i)$. It is fundamental to recall here that importance sampling does not deliver a sample distributed from $\pi$. 
In contrast, rejection sampling allows to construct a perfect sample according to $\pi$ but at the cost that a portion of the sampled points are rejected. To be more specific, if we assume that $\pi \leq M \tilde \pi$ for some constant $M$, then we sample independently $X_1,X_2, \cdots \sim_{iid} \tilde \pi$ and $U_1, U_2, \cdots \sim_{iid} \mathcal{U}(0,1)$ until the condition $U_i < \frac{\pi(X_i)}{M \tilde \pi(X_i)}$ is met. For the exit index $i$, setting $Y=X_i$, it turns out that the law of the accepted candidate $Y$ is then exactly  $\pi$ (see \cite{devroyeNonuniformRandomVariate1986}).

Another approach is proposed by Markov Chain Monte Carlo (MCMC) methods: instead of constructing an independent and identically distributed (iid) sample, an MCMC algorithm provides a Markov chain (thus a dependent sample), that converges to the distribution of interest. The most common MCMC algorithm is the Metropolis-Hastings algorithm \cite{metropolisEquationStateCalculations1953,hastingsMonteCarloSampling1970}. 
Note that MCMC and IS are not incompatible, and the idea of using a Markov chain for the \textit{instrumental} distribution appeared as soon as 1963 \cite{fosdickMonteCarloComputations1963}, and is mentioned by Hasting in \cite{hastingsMonteCarloSampling1970}. 
More recently, many algorithms combine IS and MCMC \cite{botevMarkovChainImportance2013,raicescruzIterativeImportanceSampling2022,schusterMarkovChainImportance2021a}.  

The Importance Markov Chain (IMC) algorithm uses those ideas in a novel way.
Indeed, while most MCMC algorithms try to adjust the proposals to explore the support of the target distribution efficiently, the IMC algorithm allows to target a more friendly \textit{instrumental} distribution which is then transformed into the initial target with IS. More specifically, the \textit{instrumental} Markov chain is transformed into an \textit{augmented} Markov chain targetting the distribution of interest on its first marginal. This is different from classic subsampling or thinning of the chain that preserve the distribution \cite{maceachernSubsamplingGibbsSampler1994,linkThinningChainsMCMC2012,owenStatisticallyEfficientThinning2017}. The instrumental distribution is considered as a given. Indeed, our aim in this paper is to establish properties that are preserved by our transformation for a given \textit{instrumental} distribution---namely a law of large numbers (LLN), a Central Limit Theorem (CLT) and geometric ergodicity. 

Of course, adding a resampling step to a classical importance sampling based on a $\tilde \pi$-sample $(\tilde X_1,\ldots,\tilde X_n)$ may lead to a random variable with distribution $ \hat \pi_n $ close to the target distribution $\pi$. But the total variation norm between the two distributions $ \hat \pi_n $ and $\pi$ is typically of order $O(1/n)$ whereas our Importance Markov chain, under mild assumptions, is geometrically ergodic, showing that the decrease in the total variation norm may be geometrically fast with respect to $n$. 

The Importance Markov chain, in the specific setting with independent proposals, is related to previous works on \textit{Self Regenerative Markov chains} \cite{sahuSelfregenerativeMarkovChain2003,gasemyrMarkovChainMonte2002} and on Dynamic Weighting Monte Carlo \cite{wongDynamicWeightingMonte1997,liuTheoryDynamicWeighting2001}. It was developed in the dependent case in \cite{malefakiConvergenceProperlyWeighted2008} but the framework there was restrained to a semi-Markov formulation. 

The article is organized as follows:
\begin{enumerate}
    \item We first define the \textit{Rejection Markov chain}, a generalization of the rejection sampling in a context of MCMC sampling. This part allows us to define the rejection kernel used further on, and provides some intuition for the algorithm in the specific case where there exists a known constant $M$ such that the density ratio $\frac{\pi}{M\tilde\pi}$ is upper-bounded by $1$.
    \item We then generalize the Rejection Markov chain using repetitions to allow the density ratio $\frac{\pi}{M\tilde\pi}$ to be greater than $1$, thereby relieving the constraint of the first part. The idea is similar to IS as the number of repetitions is proportional to the density ratio, to the exceptions that: (1) the number of repetitions is a random integer and the constraint is simply that its expectation is proportional to the density ratio; (2) the output is a true random sample and not a weighted one as in classical IS. We use an extended space to construct an augmented Markov chain composed of repetitions of the instrumental chain as its first component and an integer as its second, keeping track of the remaining number of remaining repetitions. We then proceed to establish some theoretical properties, under mild assumptions, notably a law of large numbers, a Central Limit Theorem, a geometric ergodicity property and some uniqueness results.
    \item Finally, we illustrate the IMC on two synthetic examples. The first is a multidimensional mixture of Gaussian distributions, using as instrumental distribution a tempered version of the target, and a NUTS kernel. The second focuses on an i.i.d. sample from the instrumental distribution, defined by a normalizing flow trained to approximate a multimodal target up to dimension 25. 
\end{enumerate}

\section{Notations}

Let us denote $\rset^+ = [0,\infty)$, $\nset = \{1,2,...\}$ and $\nsetzero = \nset \cup \{0\}$.
We use the standard convention that $\prod_{k=m}^{n}=1$ if $m>n$. For
integers $k \leq \ell$, the notation $[k:\ell]$ stands for the set
$\{k,\ldots,\ell\}$ and in case where $k>\ell$, $[k:\ell]$ is the
empty set. Moreover, $u_{k:\ell}=(u_k,u_{k+1},\ldots,u_\ell)$ for all
$k\leq \ell$ and $u_{k:\infty}=(u_\ell)_{\ell \geq k}$. If a space
$\Xset$ is equipped with a $\sigma$-field $\Xsigma$, we denote by
$\posfunc{\Xset}$ the set of all nonnegative measurable functions with
respect to $\Xsigma$, that is, we make implicit the dependence on the
$\sigma$-field $\Xsigma$ in the notation $\posfunc{\Xset}$. Similarly $\funcset{\Xset}{b}$ is the set of all bounded measurable functions on $\Xset$ and $\funcset{\Xset}{b+}=\funcset{\Xset}{b} \cap \funcset{\Xset}{+}$. Moreover,
we denote by $\measureset_{1}(\Xset)$ the set of  probability measures
on $(\Xset,\Xsigma)$. For a non-negative real number $x$, we denote the
floor function by $\floor{x}$ and the fractional part by
$\fracpart{x}$ and hence $x = \floor{x} + \fracpart{x}$. The positive part of $x$ is written $(x)^+$.  

If $P,Q$ are Markov kernels on $\Xset \times \Xsigma$, $h$ a measurable function on $\Xset$, $\nu$ a measure on $(\Xset,\Xsigma)$ we define:

\begin{itemize}
    \item $Ph(x) := \int_\Xset P(x,\rmd y)h(y)$, for $x \in \Xset$,
    \item $\nu P(\ens{A}) := \int_\Xset \nu(\rmd x)P(x,\ens{A})$ for $A \in \Xsigma$,
    \item $PQ(x,\ens{A}) := \int_\Xset P(x,\rmd y)Q(y,\ens{A})$  for $x \in \Xset$ and $\ens{A} \in \Xsigma$,
    \item $\nu(h) := \int_\Xset h(x)\nu(\rmd x)$, also denoted $\nu h$ if the context is clear.    
\end{itemize}

Furthermore, we simply denote $P^0 := I$ and for $k \in  \nset$, $P^k := P P^{k-1} = P^{k-1}P$.
Last, a kernel $P$ is said \textit{sub-Markovian} if for all $x\in\Xset$, $P(x,\Xset)\leq 1$.

\section{The Rejection Markov chain}

Rejection sampling is a standard in Monte Carlo simulation. 
In this algorithm, we create samples distributed according to the target $\pi$ by subsampling among a batch of iid random variables distributed according to the instrumental $\tilde \pi$.

Our Markov rejection algorithm closely resembles the rejection algorithm except that we no longer need to subsample from an i.i.d. batch of random variables exactly distributed according to $\tilde \pi$, which can be restrictive. Instead, we rely on a Markov chain targeting
$\tilde \pi$, i.e. generated by a Markov kernel with invariant probability $\tilde \pi$, which can be achieved for example via a Metropolis Hastings algorithm. Note that it is sufficient to know $\tilde \pi$ up to a normalizing constant. Subsampling is done by accepting each candidate sample $X$ with probability $\rho(X)$ where $\rho:\Xset\to [0,1]$ is a well-chosen function. 

\subsection{Formal definition}

Let $(\Xset,\Xsigma)$ be a measurable space. For a given  Markov
kernel $Q$ on $\Xset \times \Xsigma$, we denote by $\PP_\xi^Q$ the
probability measure induced on $(\Xset^\nsetzero,\Xsigma^{\otimes \nsetzero})$
by the Markov kernel $Q$ and the initial distribution $\xi$, and by
$\PE^Q_\xi$ the associated expectation operator. If $\xi = \delta_x$
for some $x \in \Xset$, we simply use $\EE_x^Q :=
\EE^Q_{\delta_x}$. On $(\Xset^\nsetzero,\Xsigma^{\otimes \nsetzero})$, we define $X_\ell$ as the
projection on $\ell^{th}$-component, i.e., for any $w=(w_\ell)_{\ell \in \nsetzero}\in\Xset^\nsetzero$, $X_\ell(w)=w_\ell$, and $\theta$ the shift operator on $\Xset^\nsetzero$ such that $\theta:(w_0, w_1, ...)\mapsto (w_1, w_2, ...)$. For any measurable
function $\rho: \Xset \to [0,1]$, the rejection kernel
$\rej$ is defined  as follows:
\begin{equation} 
    \label{eq:def:rej}
    \rej h(x)= \sum_{k=1}^\infty  \PE_x^Q \lrb{h(X_k) \rho(X_k)   \prod_{i=1}^{k-1} (1-\rho(X_i))},
\end{equation}
where $x \in \Xset$ and $h$ is any nonnegative measurable function on
$(\Xset,\Xsigma)$. A transition according to $S$ is obtained by generating a Markov chain $\seqq{X_k}{k\in\nsetzero}$ according to the kernel $Q$ and by selecting the first accepted candidate among 
$\seqq{X_k}{k\in\nsetzero}$ with the success
probability sequence $\seqq{\rho(X_k)}{k\in\nsetzero}$. More precisely, define $\Yset=\Xset \times [0,1]$ and $\Ysigma= \Xsigma \otimes \mcbb([0,1])$ and let $G$ be the Markov kernel on $\Yset \times \Ysigma$ such that for all $y=(x,u) \in \Yset$ and all $\ens{A}\in\Ysigma$,
\begin{equation}
    G(y,\ens{A})=\int_{\Yset} \indi{\ens{A}}(x',u') Q(x,\rmd x')  \rmd u' \eqsp.
    \label{eq:kernelG}
\end{equation}
Therefore if $Y'=(X',U') \sim G(y,\cdot)$, then $X'$ and $U'$ are independent and marginally, $X'\sim Q(x,\cdot)$ and $U'\sim \mathcal{U}(0,1)$. For the Markov chain $\seqq{Y_k=(X_k,U_k)}{k\in \nsetzero}$ with Markov kernel $G$, define the first return time to the set $\ens D=\sett{y=(x,u)  \in \Yset}{u\leq \rho(x)}$ by 
\begin{equation}
        \sigma_{\ens D}=\inf \sett{k\geq 1}{Y_k \in \ens D}\eqsp.  
\end{equation}
Then, $Sh(x)=\PE_{\delta_x \otimes \gamma}^{G}[\indiacc{\sigma_{\ens D}<\infty}h(X_{\sigma_{\ens D}})]$ where $\gamma$ is any probability measure
on $[0,1]$, showing on the side that $S\indi{\Xset}(x)=\PP_{\delta_x \otimes \gamma}^{G}(\sigma_{\ens D}<\infty) \leq 1$, which implies that the kernel $S$ is sub-Markovian.

\begin{proposition}
    \label{prop:rejection}
    Let $Q$ be a Markov kernel on $\Xset \times \Xsigma$ with invariant
    probability measure $\piaux \in \measureset_1(\Xset)$ and let
    $\rho: \Xset \to [0,1]$ be a measurable function. Provided that $\mu(\rho)>0$, the probability measure $\nu$ defined by
    \begin{equation}
        \label{eq:def:nu:mu}
        \nu(\ens{A})=\frac{\int_{\ens{A}} \rho(x) \piaux(\rmd x)}{\int_\Xset \rho(x) \piaux(\rmd x)}\,, \quad \ens{A}\in\Xsigma\eqsp,
    \end{equation}
    is invariant with respect to $S$, i.e.  $\nu \rej=\nu$.
\end{proposition}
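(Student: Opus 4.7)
The plan is to lift the problem to the extended Markov kernel $G$ on $\Yset=\Xset\times[0,1]$ from~\eqref{eq:kernelG} and to recognize $\nu$ as the $\Xset$-marginal of the restriction of an invariant measure of $G$ to the acceptance set. First, one checks that $\lambda := \piaux \otimes \unif([0,1])$ is invariant for $G$: if $(X,U)\sim\lambda$ and $(X',U')\sim G((X,U),\cdot)$, then $X'\sim\piaux Q=\piaux$ by invariance of $\piaux$ under $Q$, and independently $U'\sim\unif([0,1])$. Since $\piaux(\rho)>0$ translates to $\lambda(\ens D)=\piaux(\rho)>0$, the normalized restriction $\lambda_{\ens D}:=\lambda(\cdot\cap\ens D)/\piaux(\rho)$ is a well-defined probability on $\ens D$, and a direct computation shows that its $\Xset$-marginal is exactly $\nu$.

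Using the representation $\rej h(x)=\PE^G_{(x,u)}\lrb{\indiacc{\sigma_{\ens D}<\infty}h(X_{\sigma_{\ens D}})}$ (which does not depend on $u$), the identity $\nu\rej=\nu$ for $h\in\funcset{\Xset}{b+}$ rewrites as
$$
\PE_\lambda^G\!\lrb{\indiacc{Y_0\in\ens D}\indiacc{\sigma_{\ens D}<\infty}h(X_{\sigma_{\ens D}})} = \PE_\lambda^G\!\lrb{\indiacc{Y_0\in\ens D}h(X_0)}.
$$
To prove this, I would extend $\PP_\lambda^G$ to a two-sided stationary Markov chain $(Y_n)_{n\in\zset}$ via Kolmogorov's extension and the time-reversal kernel of $G$ under $\lambda$, so that the shift becomes an invertible measure-preserving transformation. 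Decomposing the left-hand side according to the value of $\sigma_{\ens D}=k\geq 1$ and then shifting each term by $-k$ via stationarity yields
$$
\sum_{k\geq 1}\PE_\lambda^G\!\lrb{\indiacc{Y_{-k}\in\ens D,\,Y_{-k+1},\ldots,Y_{-1}\notin\ens D}\indiacc{Y_0\in\ens D}h(X_0)}.
$$
The events indexed by $k\geq 1$ are pairwise disjoint and their union is the event that the chain visits $\ens D$ at some strictly negative time, which by Poincaré's recurrence applied to the two-sided shift has full $\lambda$-probability on $\{Y_0\in\ens D\}$. The sum therefore collapses to the right-hand side.

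The main obstacle is that $\rej$ is only sub-Markovian in general, so $\nu\rej$ need not a priori be a probability measure: one must first establish that $\sigma_{\ens D}<\infty$ $\lambda_{\ens D}$-almost surely. This is precisely what Poincaré's recurrence on the two-sided extension delivers, once $\lambda$ has been identified as an invariant probability for the $G$-chain with $\lambda(\ens D)>0$; the stationary-shift argument above then closes the proof.
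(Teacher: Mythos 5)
Your proof is correct, but it takes a genuinely different route from the paper's. The paper's argument is purely algebraic and one step long: it derives the recursion $Sh=Q(\rho h)+Q(\bar\rho\, Sh)$ by conditioning on the first draw, integrates against $\piaux$ using $\piaux Q=\piaux$, and cancels the finite quantity $\piaux(Sh)$ (finiteness holds because $S$ is sub-Markovian and $h$ bounded) to land directly on $\piaux(\rho Sh)=\piaux(\rho h)$, i.e. $\nu S=\nu$. You instead lift to the auxiliary chain $G$ on $\Xset\times[0,1]$, observe that $\lambda=\piaux\otimes\unif([0,1])$ is $G$-invariant, identify $\nu$ as the marginal of the normalized trace of $\lambda$ on the acceptance set $\ens D$, and recognize $S$ started from $\nu$ as the induced (first-return) chain on $\ens D$; invariance of the trace measure then follows from your two-sided stationary extension, the shift-by-$-k$ decomposition over the last visit to $\ens D$ before time $0$, and Poincar\'e recurrence. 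All the steps check out: the events you sum over are indeed pairwise disjoint with union $\{Y_0\in\ens D\}\cap\{\exists j\geq 1,\ Y_{-j}\in\ens D\}$, and recurrence for the inverse shift fills that set up to a $\lambda$-null set. Your approach is essentially the Kac/induced-transformation viewpoint: it is heavier machinery (two-sided extension, recurrence) than the paper needs, but it is more structural, and it makes transparent \emph{why} the result holds and that $\PP^G_{\lambda_{\ens D}}(\sigma_{\ens D}<\infty)=1$, a fact the paper's computation only yields implicitly by taking $h=\indi{\Xset}$ in $\nu S h=\nu h$.
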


\begin{proof}
Define $\bar \rho\eqdef 1-\rho$. 
    For any bounded function $h\in \posfunc{\Xset}$ and any $x\in\Xset$,

    \begin{align}
        Sh(x)  &= \PE_x^Q \lrb{\sum_{k=1}^\infty \rho(X_k)  h(X_k)
          \prod_{i=1}^{k-1} \bar \rho(X_i)}    \nonumber \\
              & = \PE_x^Q \lrb{\rho(X_1)  h(X_1) }+  \PE_x^Q \lrb{ \bar \rho(X_1) \sum_{\ell=1}^\infty
                \rho(X_{\ell+1})  h(X_{\ell+1}) \prod_{j=1}^{\ell-1}
                \bar \rho(X_{j+1})} \nonumber \\
              & =Q(\rho h)(x)+Q(\bar \rho Sh)(x)\eqsp. \label{eq:S recursive}
    \end{align}
    Integrating with respect to $\mu$ yields
    \begin{equation*}
        \piaux Sh= \piaux Q(\rho h)+\piaux Q(\bar \rho Sh) = \piaux
        (\rho h)+\piaux (\bar \rho Sh)\eqsp,
    \end{equation*}
    where we used $\mu Q=\mu$ in the last equality. Since $h$ is
    bounded, $\piaux Sh<\infty$. Retrieving $\piaux Sh$ on
    both sides, we finally obtain $\piaux (\rho S h)= \piaux(\rho h)$. Hence $\nu(S h)=\nu(h)$.
\end{proof}

\subsection{Application to sampling}
Let $\pi \in \measureset_1(\Xset)$ be the \textit{target} distribution and
denote by $\tilde \pi \in \measureset_1(\Xset)$ an \textit{instrumental} distribution. As for rejection or importance sampling, the goal is to produce a sample targeting $\pi$ using a sample of $\tilde \pi$, here obtained by using a Markov kernel $Q$. We denote $\seqq{\tilde X_i}{i\in \nsetzero}$ a Markov chain with transition kernel $Q$ and make the following hypothesis on $Q$ :

\begin{hyp}{H}
    \item \label{hyp:Qinv} The  Markov kernel $Q$ admits $\tilde \pi$ as invariant probability measure.
\end{hyp}
We also need the following domination assumption, which is compulsory for rejection sampling.
\begin{hypn}{H$_{\mbox{\tiny rej}}$}
    \item \label{hyp:condition:rejet} There exists $M>0$ such that $\pi \leq M \tilde \pi.$
\end{hypn}
Then, we can use \eqref{eq:def:nu:mu} to define $\rho$ such that $\pi$ is the invariant probability measure for the Markov kernel $S$. Indeed, if $\mu = \tilde \pi$, we get $\nu = \pi$ in \eqref{eq:def:nu:mu} by defining 
\begin{equation*}
    \rho \propto \frac{\rmd \pi}{\rmd \tilde \pi} \eqsp.
\end{equation*}
If in addition, we want $\rho$ to take values in $[0,1]$, we may pick
\begin{equation}
    \label{eq:rho_rejection}
    \rho(x) = \frac{1}{M}\frac{\rmd \pi}{\rmd \tilde \pi}(x) \eqsp,
\end{equation}
for $\tilde \pi$-almost all $x\in \Xset$. From Proposition~\ref{prop:rejection}, we deduce immediately:
\begin{theorem}
    Assume \ref{hyp:Qinv} and \ref{hyp:condition:rejet} and take $\rho$ as defined in \eqref{eq:rho_rejection}. Then $S$ is $\pi$-invariant.
\end{theorem}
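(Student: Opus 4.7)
The plan is to deduce the theorem directly from Proposition~\ref{prop:rejection} by verifying that its hypotheses are satisfied with the specific choice $\piaux = \tilde \pi$ and $\rho(x) = M^{-1} (\rmd \pi/\rmd \tilde \pi)(x)$, and then checking that the invariant measure $\nu$ produced by the proposition coincides with $\pi$.

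First, I would observe that \ref{hyp:Qinv} gives exactly the invariance of $\tilde \pi$ under $Q$ required by the proposition. Next, I would check measurability and the range condition on $\rho$: by \ref{hyp:condition:rejet}, we have $\pi \leq M \tilde \pi$, so the Radon--Nikodym derivative $\rmd \pi/\rmd \tilde \pi$ is bounded by $M$, $\tilde \pi$-almost everywhere, which guarantees that $\rho$ takes values in $[0,1]$ outside a $\tilde \pi$-null set (and one may redefine $\rho$ on this null set without affecting the kernel $S$).

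Then I would verify the non-degeneracy condition $\piaux(\rho)>0$. A direct computation gives
\begin{equation*}
\tilde \pi(\rho) = \int_\Xset \frac{1}{M} \frac{\rmd \pi}{\rmd \tilde \pi}(x)\, \tilde \pi(\rmd x) = \frac{1}{M} \pi(\Xset) = \frac{1}{M} > 0\eqsp.
\end{equation*}
With this in hand, Proposition~\ref{prop:rejection} applies and yields an $S$-invariant probability measure $\nu$ given by \eqref{eq:def:nu:mu}. To conclude, I would compute $\nu$ explicitly: for any $\ens{A} \in \Xsigma$,
\begin{equation*}
\nu(\ens{A}) = \frac{\int_{\ens A} \rho(x)\, \tilde \pi(\rmd x)}{\int_\Xset \rho(x)\, \tilde \pi(\rmd x)} = \frac{M^{-1}\pi(\ens{A})}{M^{-1}} = \pi(\ens{A})\eqsp,
\end{equation*}
so $\nu = \pi$ and therefore $\pi S = \pi$.

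There is no real obstacle here, as the statement is announced in the text as an immediate consequence of Proposition~\ref{prop:rejection}. The only mild subtlety is the handling of the $\tilde \pi$-null set on which $\rho$ could fall outside $[0,1]$, but this is a cosmetic issue resolved by redefining $\rho$ on that set; all integrals against $\tilde \pi$ (and hence the definitions of $S$ and $\nu$) are unaffected.
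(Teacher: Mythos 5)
Your proof is correct and follows exactly the route the paper intends: the theorem is stated as an immediate consequence of Proposition~\ref{prop:rejection} with $\piaux=\tilde\pi$ and $\rho = M^{-1}\,\rmd\pi/\rmd\tilde\pi$, and your verification that $\rho\in[0,1]$, that $\tilde\pi(\rho)=1/M>0$, and that $\nu=\pi$ in \eqref{eq:def:nu:mu} is precisely the computation sketched in the text preceding the theorem. No issues.
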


Note that standard rejection sampling consists in applying a transition according to $S$ to the particular case where  $Q(x,\cdot)=\tilde \pi(\cdot)$, $\mu=\tilde
    \pi$, $\rho$ as in (\ref{eq:rho_rejection}) and hence $\nu=\pi$.

\section{The Importance Markov chain}
As for classical rejection sampling, the Markov chain rejection sampling
suffers from the drawback that \ref{hyp:condition:rejet} may be
satisfied only with a prohibitively large $M$ (or worse,
\ref{hyp:condition:rejet} may not even be satisfied). The sampling is in that case
inefficient since the average acceptance ratio is equal to $1/M$.
Actually, $\rho$ can be interpreted in the following way: given $\tilde X_{k-1}$, we draw a new point $\tilde X_k \in \Xset$ according to $Q$ and insert $\tilde N_k \sim \text{Ber}(\rho(\tilde X_k))$ replica in current sample. 
Therefore $\rho(\tilde X_k)=\EE[\tilde N_k \vert \tilde X_k]$. This can be viewed as the equivalent of the weight of $\tilde X_k$ in a importance sampling context.

The idea with $\varrho: \Xset \to \rset^+$ is similar but we now
offer to replicate $\tilde X_k$ a random number of times $\tilde N_k$
where the conditional 
expectation of the random integer  $\tilde N_k \in\nsetzero$ w.r.t. $\tilde
X_k$ is equal to $\varrho(\tilde X_k)$, as sketched in \Cref{algo:imc0}. This relates to \cite{gasemyrMarkovChainMonte2002} where the author conditions the weights of his estimate to be nonnegative integers.

\subsection{The extended Markov chain}

Let $\pi$ and $\tilde \pi$ be two probability measures on $(\Xset
    ,\Xsigma)$ and let $\kappa$ be a positive real number. Assume
that $\pi$ is dominated by $\tilde \pi$ and let $\varrho_\kappa: \Xset \to
    \rset^+$ be a measurable function such that
\begin{equation}
    \label{eq:def rho kappa}
    \varrho_\kappa (x)=\kappa \frac{\rmd \pi}{\rmd \tilde \pi} (x) \eqsp,
\end{equation}
for $\tilde \pi$-almost all $x\in\Xset$. 
For $\kappa=1$ let us simply denote
\begin{equation}
    \label{eq:def:rho:1}
    \varrho := \varrho_1 .
\end{equation} Let $\tilde R$ be a Markov kernel
on $\Xset \times  {\mathcal P} (\nsetzero)$ and by an abuse of notation, let us write 
$\tilde
    R(x,n)=\tilde R
    (x,\{n\})$ 
    for any $(x,n)\in\Xset \times  \nsetzero$. The distribution
$\tilde R(\Xaux_k,\cdot)$ will be used to draw the number $\tilde N_k$ of
replications of $\Xaux_k$ under the \textit{unbiasedness assumption:}
\begin{hyp}{H}
    \item  \label{hyp:unbiased} For all $x\in\Xset$,
    $$
        \sum_{n=0}^\infty \tilde R(x,n)n= \varrho_\kappa (x)\eqsp.
    $$

\end{hyp}
In this paper, we will face two cases : (1) $\tilde R$ and $\varrho_\kappa$ are available in closed form up to a normalizing constant, (2) $\tilde R$ can be generated under \ref{hyp:unbiased} without the explicit knowledge of $\varrho_\kappa$. Furthermore, $\kappa$ can be chosen arbitrarily even though we will discuss some interesting choices later on (see \Cref{subsec:optimal} and \Cref{subsec:choice:kappa}).
\begin{algorithm}[h!]
    \caption{}
    \label{algo:imc0}
    \begin{algorithmic}[1]
        \State $X=[\ ]$
        \State Set an arbitrary $\tilde X_0$.
        \For{$k\gets 1$ to $n$}
        \State Draw $\Xaux_k \sim Q (\Xaux_{k-1}, \cdot)$ and $\tilde N_k \sim \tilde R(\Xaux_k,\cdot)$
        \State Append $\tilde N_k$ replicas of $\Xaux_k$ to $X$
        \EndFor
        \State \textbf{output:} $X$
    \end{algorithmic}
\end{algorithm}

As seen below, if the Markov kernel $Q$ is $\tilde \pi$-invariant, then the output sequence $X=(X_0,X_1,\ldots)$ of Algorithm~\ref{algo:imc0} targets $\pi$. However, $X$ is not a Markov chain {\it per se}, and in order to study its ergodic properties, we need add a second component $N$ to the sequence $X$ so that the augmented sequence $(X,N)$ becomes a Markov chain. This is done by rewriting Algorithm~\ref{algo:imc0} as Algorithm~\ref{algo:imc}.

\begin{algorithm} [h!]
    \caption{Importance Markov chain (IMC)}
    \label{algo:imc}
    \begin{algorithmic}[1]
        \State $\ell \gets 0$
        \State Set an arbitrary $\tilde X_0$.
        \For{$k\gets 1$ to $n$}
        \State Draw $\Xaux_k \sim Q (\Xaux_{k-1}, \cdot)$ and $\tilde N_k \sim \tilde R(\Xaux_k,\cdot)$
        \State Set $N_\ell=\tilde N_k$
        \While{$N_\ell \geq 1$}
        \State Set $(X_\ell,N_\ell)\gets (\Xaux_k,N_\ell-1)$
        \State Set $\ell \gets \ell+1$
        \EndWhile
        \EndFor
    \end{algorithmic}
\end{algorithm}

Let us describe the transition of the extended Markov chain $\seqq{(X_\ell,N_\ell)}{\ell  \in \nsetzero}$.
From Algorithm~\ref{algo:imc}, we can see that $(X_\ell,N_\ell)$ is updated according to two different moves. Either we are already inside the while loop described in lines 6-9 of Algorithm~\ref{algo:imc}, in which case, $N_\ell \geq 1$ and the update is simply $(X_{\ell+1},N_{\ell+1})=(X_\ell,N_\ell-1)$, or we are outside the while loop, in which case, $N_\ell=0$ and $X_\ell=\tilde X_k$ for some $k\in \nsetzero$. Then, the update of $(X_\ell,N_\ell)$ happens when we enter again the while loop, in which case $(X_{\ell+1},N_{\ell+1})=(\tilde X_T,\tilde N_T-1)$ where $T=\inf\sett{n >k}{\tilde N_n\neq 0}$.

The associated Markov kernel $P$ on $(\Xset \times
    \nsetzero) \times (\Xsigma \otimes \mathcal{P}(\nsetzero))$ is then defined by: for all $h \in \posfunc{\Xset \times \nsetzero}$,
\begin{equation} 
    \label{eq:def:P:Q-Rtilde}
    Ph(x,n)=\indiacc{n \geq 1} h(x,n-1) + \indiacc{n=0} \cdot \sum_{k,\ell=1}^\infty  \PE_x^Q \lrb{h(X_{k},\ell-1) \tilde R(X_{k},\ell) \prod_{i=1}^{k-1} \tilde R(X_i,0)}.
\end{equation}
To simplify this expression, let us introduce additional notation. Write
$\rho_{\tilde R}(x) =\tilde R (x,[1:\infty))$ and let $S$ be the Markov kernel on $\Xset \times  \Xsigma$ defined by
\begin{equation} \label{eq:def:S}
        Sf(x)=\sum_{k=1}^\infty \PE_x^Q\lrb{
        f(X_k)\rho_{\tilde R}(X_k)\prod_{i=1}^{k-1}(1-\rho_{\tilde R}(X_i))}\, ,
\end{equation}
for $ f\in \posfunc{\Xset} .$

The kernel $S$ is of the same form as in \eqref{eq:def:rej} except that
$\rho$ is now replaced by $\rho_{\tilde R}$. Note that by construction, $\rho_{\tilde R}$ is $[0,1]$-valued and $\rho_{\tilde R} (x)$ can be interpreted as the probability for an $\tilde X_i=x$ drawn from $Q$ to be accepted for the chain $(X_\ell)$, in which case, we keep at least one replica of $\tilde X_i$.

Then, the extended Markov kernel $P$ writes, for $h \in \posfunc{\Xset \times \nsetzero}$:
\begin{equation}
        \label{eq:def:P}
        Ph(x,n)=\indiacc{n \geq 1} h(x,n-1)  +\indiacc{n=0} \sum_{n'=0}^\infty  \int_{\Xset}  S(x,\rmd x')  R(x',n')h(x',n') \eqsp,
\end{equation}
where $R$ is the Markov kernel on $\Xset\times {\mathcal P} (\nsetzero)$
defined by
\begin{equation}
    \label{eq:R}
    R(x,n)\eqdef  \tilde
    R(x,n+1)/\rho_{\tilde R} (x)\,, \quad (x,n)\in \Xset \times
    \nsetzero\eqsp,
\end{equation}
and where we, again, make the abuse of notation $R(x,n)\eqdef
    R(x,\{n\})$.

Note that \eqref{eq:def:P:Q-Rtilde} and \eqref{eq:def:P} give two
different but equivalent decompositions of 
$P$, for the sampling step (when $n=0$). In \eqref{eq:def:P:Q-Rtilde}, we sample $\tilde
X_i$ according to $Q$ and then use $\tilde R(\tilde X_i,\cdot)$ to
draw a number of replicas $\tilde N_i$, until it is larger than $1$, in which case we retain $\tilde X_i$ and $\tilde N_i-1$, the number of remaining replicas. 
In \eqref{eq:def:P}, we bypass the rejection step by drawing directly a new accepted point $X_i$ using $S$ and then the number of remaining replicas from $R(X_i,\cdot)$, which corresponds to the law of $\tilde N-1$ conditionnally on $\{\tilde N\geq 1\}$ when $\tilde N\sim \tilde R(X_i,\cdot)$.

\begin{remark}
    The unbiasedness assumption \ref{hyp:unbiased} is closely related to the notion of \textit{correctly weighted} density developed in \cite{wongDynamicWeightingMonte1997,liuTheoryDynamicWeighting2001}. Write $\hat \pi(\dd x\dd n) = \tilde \pi (\dd x) \tilde R(x,\dd n)$ a joint distribution on  $\Xset \times \nsetzero$. Then, under \ref{hyp:unbiased}, $\sum_{n \in \NN} \tilde \pi (\dd x) \tilde R(x,n)n  = \kappa \pi(\dd x)$ so $\hat  \pi$ is \textit{correctly weighted}. And by construction, the kernel $Q(x,\dd y) \tilde R(y,dn)$ that generates the samples $(\tilde X_i, \tilde N_i)_{i \in \NN}$ of \Cref{algo:imc0} admits $\hat \pi$ as an invariant probability distribution (see \Cref{lemma:invMeas:Qbar}).
\end{remark}

\begin{remark}
    While importance sampling requires exact simulations from $\tilde \pi$, the IMC method only relies on a Markov kernel $Q$ targetting $\tilde \pi$. This allows us to extend the set of usable instrumental distributions. 
\end{remark}

\begin{remark} \label{remark:MH}
    Perhaps surprisingly, Metropolis-Hastings (MH) algorithms
    can be cast into the framework of importance Markov
    chains. Indeed, take a MH algorithm with proposition kernel
    $A(x,dy)$  and acceptance rate $\alpha(x,y)$, targeting
    $\pi$. Following the framework of \cite{doucVanillaRaoBlackwellization2011}, the accepted points $\seqq{\tilde X_i}{i \in\nsetzero}$ form a Markov chain with Markov kernel $Q(x,dy)$ proportional to $\alpha(x,y) A(x,dy)$. Before moving to a new accepted point, $\tilde X_i =x$ is repeated a random number of times $\tilde N_i$ that follows (conditionally on $\tilde X_i =x$) a geometric distribution with success probability $p(x) := \int_\Xset \alpha(x,y)A(x,dy)$. Then it can be shown that $Q$ is $\tilde \pi$-invariant where $\tilde \pi(dx) \propto p(x)\pi(dx)$, hence \ref{hyp:Qinv} holds. Define $\tilde R(x,\cdot)$ as the geometric distribution with parameter $p(x)$. Choosing $\kappa = 1/ \int_\Xset p(x)\pi(\dd x)$, $\varrho_\kappa$ defined in \eqref{eq:def rho kappa} writes $\varrho_\kappa(x) = 1/p(x)$, hence \ref{hyp:unbiased} holds. Then, with these choices of $Q$ and $\tilde R$, $(\tilde X_i,\tilde N_i)$ corresponds to the output of the IMC algorithm defined in \Cref{algo:imc0}.   

\end{remark}

\subsection{Invariant probability measure}

\subsubsection{Existence}
Let $\bar \pi$ be the measure on $\Xset \times
    \nsetzero$ defined by: for any $h \in \posfunc{\Xset \times \nsetzero}$,
\begin{align}
    \label{eq:def:pibar}
    \bar \pi(h) & =\kappa^{-1} \sum_{n=1}^\infty \int_{\Xset}  \tilde \pi(\rmd x)
    \tilde R(x,n)\sum_{k=0}^{n-1} h(x,k) \nonumber                                         \\
                      & =\kappa^{-1} \sum_{\ell=0}^\infty \int_{\Xset}  \tilde \pi(\rmd x)
    \rho_{\tilde R} (x) R(x,\ell)\sum_{k=0}^{\ell} h(x,k)\eqsp,
\end{align}
where the last equality follows from \eqref{eq:R} and the change of
variable $\ell=n-1$.

\begin{proposition}
    \label{prop:invariance:imc}
    Assume \ref{hyp:Qinv} and \ref{hyp:unbiased}. 
    Let $P$ be the Markov kernel defined in \eqref{eq:def:P} and let $\bar \pi$ be the probability measure on $\Xset \times \nsetzero$ defined in \eqref{eq:def:pibar}. Then, 
    \begin{enumerate}[label=(\roman*)]
        \item \label{item:invarince:imc:two} the Markov kernel $P$ is $\bar \pi$-invariant,
        \item \label{item:invarince:imc:one} the marginal of $\bar \pi$ on the first component is
        $\pi$.
    \end{enumerate}
\end{proposition}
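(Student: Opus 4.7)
The plan is to establish (ii) first, since it will also show $\bar\pi(\mathbf 1)=1$ and hence that $\bar\pi$ is indeed a probability measure. Taking $h(x,n)=f(x)$ independent of $n$ in the first form of \eqref{eq:def:pibar}, the inner sum $\sum_{k=0}^{n-1}f(x)$ collapses to $n\,f(x)$, and the unbiasedness assumption \ref{hyp:unbiased} gives $\sum_{n\geq 1}\tilde R(x,n)\,n = \varrho_\kappa(x) = \kappa\,\tfrac{\rmd\pi}{\rmd\tilde\pi}(x)$, yielding $\bar\pi(f)=\int f\,\rmd\pi$ directly.

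For the invariance statement (i), I will work from the second expression of $\bar\pi$ in \eqref{eq:def:pibar} (the one involving $R$ and $\rho_{\tilde R}$). Decomposing $Ph(x,k)$ via \eqref{eq:def:P} according to whether $k=0$ or $k\geq 1$ yields
\[
\sum_{k=0}^\ell Ph(x,k) \;=\; Ph(x,0) + \sum_{j=0}^{\ell-1} h(x,j),
\]
splitting $\bar\pi(Ph)$ into a ``sampling'' piece and a ``replica'' piece. The sampling piece does not depend on $\ell$, so after using $\sum_\ell R(x,\ell)=1$ it takes the form
\[
\kappa^{-1}\int_{\Xset}\tilde\pi(\rmd x)\,\rho_{\tilde R}(x)\sum_{n'=0}^\infty\int_{\Xset} S(x,\rmd x')\,R(x',n')\,h(x',n').
\]

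The main step will be absorbing $S$ using Proposition~\ref{prop:rejection}, applied with $\piaux=\tilde\pi$ (valid by \ref{hyp:Qinv}) and $\rho=\rho_{\tilde R}$ (which is $[0,1]$-valued by construction). Read unnormalized, the conclusion $\nu S=\nu$ becomes the identity $\int\tilde\pi(\rmd x)\rho_{\tilde R}(x)S(x,\rmd x') = \tilde\pi(\rmd x')\rho_{\tilde R}(x')$, and the sampling contribution collapses to $\kappa^{-1}\int\tilde\pi(\rmd x')\rho_{\tilde R}(x')\sum_{n'}R(x',n')h(x',n')$. The replica piece, after swapping the two nonnegative sums, equals $\kappa^{-1}\int\tilde\pi(\rmd x)\rho_{\tilde R}(x)\sum_{j\geq 0}h(x,j)\,R(x,[j+1{:}\infty))$. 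Adding the two pieces and using $R(x,j)+R(x,[j+1{:}\infty))=R(x,[j{:}\infty))$, followed by one final reswap of sums, recovers exactly the second form of $\bar\pi(h)$ in \eqref{eq:def:pibar}.

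I expect the bookkeeping to be the only genuine obstacle: the indicator splitting of $P$, the telescoping tail identity for $R$, and the $S$-invariance must line up in the right order. The only non-routine point is that Proposition~\ref{prop:rejection} is stated for the normalized probability $\nu$, so I will need to note once that its proof equally yields the unnormalized invariance of $\rho_{\tilde R}\cdot\tilde\pi$ under $S$; this invocation is legitimate since $\tilde\pi(\rho_{\tilde R})>0$, a fact that follows from \ref{hyp:unbiased} via $\tilde\pi(\varrho_\kappa)=\kappa>0$. All remaining manipulations are exchanges of nonnegative sums and integrals, justified by Fubini--Tonelli.
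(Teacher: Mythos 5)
Your proposal is correct and follows essentially the same route as the paper: part (ii) by direct substitution of $h(x,n)=f(x)$ together with \ref{hyp:unbiased}, and part (i) by splitting $P$ into its replica and sampling terms, invoking Proposition~\ref{prop:rejection} with $\piaux=\tilde\pi$, $\rho=\rho_{\tilde R}$ (whose proof indeed yields the unnormalized identity $\tilde\pi(\rho_{\tilde R}\,Sh)=\tilde\pi(\rho_{\tilde R}\,h)$ you need), and concluding via the tail identity $R(x,n')+R(x,[n'+1{:}\infty))=R(x,[n'{:}\infty))$. The only cosmetic difference is that the paper first rewrites $\bar\pi$ in the tail form \eqref{eq:def;new:pibar} before substituting $Ph$, whereas you substitute first and swap sums afterward.
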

\begin{proof}
    We start with \ref{item:invarince:imc:two}. Let $h \in \posfunc{\Xset \times \nsetzero}$. Interchanging the sum in $
        \ell$ and the sum in $k$ in (\ref{eq:def:pibar}) yields
    \begin{equation}
        \label{eq:def;new:pibar}
        \bar \pi (h) =\kappa^{-1} \sum_{k=0}^{\infty} \int_{\Xset}
        \tilde \pi(\rmd x)\rho_{\tilde R} (x) R(x,[k:\infty)) h(x,k)
        \eqsp.
    \end{equation}
    We now replace $h$ by $Ph$ and combine with the expression of $Ph$ given in (\ref{eq:def:P}), we then obtain

    \begin{align*}
        \bar \pi (Ph)  &=\kappa^{-1} \sum_{k=1}^{\infty} \int_{\Xset}
        \tilde \pi(\rmd x) \rho_{\tilde R} (x) R(x,[k:\infty) )
        h(x,k-1)  +\kappa^{-1} \sum_{n'=0}^\infty  \int_{\Xset}
        \tilde \pi(\rmd x)\rho_{\tilde R} (x) \int_{\Xset} S(x,\rmd x')
        R(x',n')h(x',n')\\
        & =\kappa^{-1} \sum_{n'=0}^{\infty} \int_{\Xset}
        \tilde \pi(\rmd x)\rho_{\tilde R} (x) R(x,[n'+1:\infty))
        h(x,n')  +\kappa^{-1} \sum_{n'=0}^\infty  \int_{\Xset}
        \tilde \pi(\rmd x)\rho_{\tilde R} (x)
        R(x,n')h(x,n')\eqsp,
    \end{align*}
    where the last equality follows (a) from the change of variable
    $n'=k-1$ for the first term of the rhs and (b) from Proposition~\ref{prop:rejection} applied, under \ref{hyp:Qinv}, to $\rho=\rho_{\tilde R}$
and $\piaux=\tilde \pi$ for the second term. 
Noting that $R(x,[n'+1:\infty)) +R(x,n')=R(x,[n':\infty))$, we finally get
\begin{equation}
    \begin{split}
        \bar \pi (Ph)&=\kappa^{-1} \sum_{n'=0}^{\infty} \int_{\Xset} \tilde \pi(\rmd x)\rho_{\tilde R} (x) R(x,[n':\infty)) h(x,n') =\bar \pi (h) \eqsp,
    \end{split}
\end{equation}
where (\ref{eq:def;new:pibar}) is used to obtain the last equality.

We now turn to \ref{item:invarince:imc:one}. For any $\ens{A}\in \Xsigma$, applying (\ref{eq:def:pibar})
with $h(x,k)=\indi{\ens{A}} (x)$ yields under \ref{hyp:unbiased}
\begin{equation}
    \begin{split}
        \bar \pi(\ens{A} \times \nsetzero)&=\kappa^{-1}\int_\Xset \tilde \pi(\rmd
        x)
        \lr{\sum_{n=1}^\infty \tilde R(x,n) n} \indi{\ens{A}}(x)= \kappa^{-1} \tilde \pi
        (\varrho_\kappa \indi{\ens{A}})=\pi (\ens{A}) \eqsp.
    \end{split}
\end{equation}
\end{proof}

\subsection{Uniqueness}

\begin{proposition} \label{prop:uniqueness}
    Assume \ref{hyp:Qinv} and \ref{hyp:unbiased}. If any invariant
    measure for $Q$ is proportional to $\tilde \pi$ (defined in \ref{hyp:Qinv}), then $\bar \pi$ defined in \eqref{eq:def:pibar} is the unique invariant probability measure for $P$. 
\end{proposition}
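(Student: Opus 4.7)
The plan is to take an arbitrary $P$-invariant probability measure $\bar\mu$, decompose it along its second coordinate as $\bar\mu_n(\cdot):=\bar\mu(\cdot\times\{n\})$ for $n\in\nsetzero$, and conclude $\bar\mu=\bar\pi$ by first pinning down $\bar\mu_0$. Testing $\bar\mu P=\bar\mu$ against $(x,n)\mapsto\indi{A}(x)\indiacc{n=m}$ in expression \eqref{eq:def:P} of $P$ yields, for every $m\in\nsetzero$ and $A\in\Xsigma$, the recursion
\begin{equation*}
    \bar\mu_m(A)=\bar\mu_{m+1}(A)+\int_A(\bar\mu_0 S)(\rmd x)\,R(x,m),
\end{equation*}
which iterates to $\bar\mu_n(\rmd x)=\bar\mu_0(\rmd x)-(\bar\mu_0 S)(\rmd x)\,R(x,[0:n-1])$ for $n\geq 1$. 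Since $\bar\mu$ is a probability measure, $\sum_n \bar\mu_n(\Xset)=1$ and hence $\bar\mu_n(\Xset)\to 0$; combined with $R(x,[0:n-1])\uparrow 1$ (because $R(x,\cdot)$ is a probability on $\nsetzero$), monotone convergence forces $\bar\mu_0=\bar\mu_0 S$, i.e., $\bar\mu_0$ is $S$-invariant. Since $S(x,\{\rho_{\tilde R}=0\})=0$ for every $x$, one also obtains $\bar\mu_0(\{\rho_{\tilde R}=0\})=0$.

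The main step is then to upgrade the uniqueness hypothesis on $Q$ into one for $S$, i.e., to show $\bar\mu_0\propto\rho_{\tilde R}\tilde\pi$. For that I plan to lift $\bar\mu_0$ to the auxiliary kernel $G$ on $\Yset=\Xset\times[0,1]$ from \eqref{eq:kernelG}, using the set $\ens D=\{(x,u):u\leq\rho_{\tilde R}(x)\}$. Define
\begin{equation*}
    \eta_{\ens D}(\rmd x\,\rmd u):=\bar\mu_0(\rmd x)\,\rho_{\tilde R}(x)^{-1}\,\indiacc{u\leq\rho_{\tilde R}(x)}\,\rmd u
\end{equation*}
on $\ens D$. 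A direct computation, using that the return kernel $S_{\ens D}$ of $G$ to $\ens D$ has $\Xset$-marginal $S$ and is uniform in $u$ over $[0,\rho_{\tilde R}(\cdot)]$, shows that $\eta_{\ens D}$ is $S_{\ens D}$-invariant precisely because $\bar\mu_0 S=\bar\mu_0$. The associated excursion measure
\begin{equation*}
    \tilde\nu(A):=\int\eta_{\ens D}(\rmd y_0)\,\PE^G_{y_0}\!\left[\sum_{k=0}^{\sigma_{\ens D}-1}\indi{A}(Y_k)\right],\qquad A\in\Ysigma,
\end{equation*}
is then $\sigma$-finite and $G$-invariant. Its first marginal on $\Xset$ is $Q$-invariant, hence proportional to $\tilde\pi$ by the uniqueness hypothesis, which forces $\tilde\nu=c\,\tilde\pi\otimes\unif$ for some $c>0$. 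Restricting this identity to $\ens D$ recovers $\eta_{\ens D}$, and identifying the two expressions yields $\bar\mu_0(\rmd x)=c\,\rho_{\tilde R}(x)\,\tilde\pi(\rmd x)$.

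Plugging $\bar\mu_0=c\rho_{\tilde R}\tilde\pi$ and $\bar\mu_0 S=\bar\mu_0$ back into the first-paragraph recursion gives $\bar\mu_n(\rmd x)=c\,\rho_{\tilde R}(x)\,\tilde\pi(\rmd x)\,R(x,[n:\infty))$, which matches the rewriting \eqref{eq:def;new:pibar} of $\bar\pi$ up to the factor $c\kappa$. Since $\bar\mu$ and $\bar\pi$ are both probability measures, $c\kappa=1$ and $\bar\mu=\bar\pi$. The main obstacle is the second paragraph: carefully verifying the $S_{\ens D}$-invariance of $\eta_{\ens D}$ and that $\tilde\nu$ is a well-defined $\sigma$-finite $G$-invariant measure, so that the uniqueness assumption on $Q$ can actually be transferred to $S$.
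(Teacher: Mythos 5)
Your proof is correct and follows essentially the same route as the paper's: the same slice-wise decomposition and recursion in $n$, the same reduction to $\bar\mu_0 S=\bar\mu_0$, and the same lifting of $\bar\mu_0$ to the auxiliary kernel $G$ on $\Xset\times[0,1]$ via an excursion measure from $\ens D$ to transfer the uniqueness hypothesis from $Q$ to $S$. The one step you flag as the main obstacle — checking that the excursion measure is $G$-invariant because $\bar\mu_0 S=\bar\mu_0$ — is precisely the computation the paper carries out explicitly (its verification that $\pi_1 G=\pi_1$), and it goes through as you anticipate.
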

\begin{proof}
See \ref{secA1}.
\end{proof}
The uniqueness of the invariant probability measure $\bar \pi$ for $P$, as stated in \Cref{prop:uniqueness} allows to obtain the Birkhoff ergodic theorem (\cite[Theorem 5.2.9]{doucMarkovChains2018}): for any measurable function $g: \Xset \times \nsetzero \rightarrow \rset$ such that $ \bar \pi( \abs{g}) < \infty$,
    \begin{equation}
        \lim_{n \rightarrow \infty} n^{-1} \sum_{k=0}^{n-1} g(X_k,N_k) =  \bar \pi(g), \quad \PP_{\bar \pi}^P-a.s.
    \end{equation}
Although reassuring, the law of large numbers holds $\PP_{\bar \pi}^P-a.s.$, i.e. the initial distribution is set to be the invariant probability measure $\bar \pi$, which is not realistic from a practical point of view. Consequently, we will now turn to conditions under which the law of large numbers holds, irrespective to the initial distribution.   
\subsection{Law of large numbers}
To establish a strong law of large numbers for the kernel $P$, we rely on the single hypothesis that the instrumental kernel $Q$ satisfies a law of large numbers. More precisely if the instrumental kernel $Q$ satisfies a law of large numbers for any initial distribution $\xi\in\measureset_1(\Xset)$, Theorem \ref{thm: SLLN of P} will show that it is also the case for the importance Markov kernel $P$.

\begin{hypn}{H$_{\mbox{\tiny lln}}$}

    \item For every $\xi \in \measureset_1(\Xset)$ and measurable function $g: \Xset \rightarrow \rset$ such that $\tilde \pi( \abs{g}) < \infty$,
    \begin{equation*}
        \lim_{n \rightarrow \infty} n^{-1} \sum_{k=0}^{n-1} g(\tilde X_k) = \tilde \pi(g), \quad \PP_\xi^Q-a.s.
    \end{equation*} \label{hyp:SLLN of Q}
\end{hypn}

\begin{theorem}
    \label{thm: SLLN of P}
    Assume \ref{hyp:Qinv} and \ref{hyp:SLLN of Q}. Then,
    for every $\xi \in \measureset_1(\Xset \times \nsetzero)$ and measurable function $g: \Xset \times \nsetzero \rightarrow \rset$ such that $ \bar \pi( \abs{g}) < \infty$,
    \begin{equation}
        \lim_{n \rightarrow \infty} n^{-1} \sum_{k=0}^{n-1} g(X_k,N_k) =  \bar \pi(g), \quad \PP_\xi^P-a.s.
    \end{equation}
\end{theorem}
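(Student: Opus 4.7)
The strategy is to reconstruct the augmented chain block by block from the joint instrumental chain $(\tilde X_k,\tilde N_k)_{k\geq 1}$ of Markov kernel $\bar Q((x,n),\rmd x'\rmd n')=Q(x,\rmd x')\tilde R(x',\rmd n')$, and then to reduce everything to \ref{hyp:SLLN of Q}. Starting from $(X_0,N_0)\sim\xi$, \Cref{algo:imc} keeps the augmented chain at $X_0$ for $N_0+1$ steps (with $N_\ell$ counting down to $0$), and afterwards enters successive blocks: the $k$-th block exists only when $\tilde N_k\geq 1$, in which case it contributes the $\tilde N_k$ augmented pairs $(\tilde X_k,\tilde N_k-1),\ldots,(\tilde X_k,0)$. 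Writing $T_m=(N_0+1)+\sum_{k=1}^m\tilde N_k$ and $B_k=\sum_{j=0}^{\tilde N_k-1}g(\tilde X_k,j)$ (with $B_k=0$ when $\tilde N_k=0$), this yields
$$\sum_{\ell=0}^{T_m-1}g(X_\ell,N_\ell)=\sum_{\ell=0}^{N_0}g(X_0,N_0-\ell)+\sum_{k=1}^m B_k,$$
where the initial partial block is $\PP_\xi^P$-a.s.\ finite and hence vanishes after division by $n$.

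The crux is to establish, from \ref{hyp:SLLN of Q}, a SLLN for $(\tilde X_k,\tilde N_k)$, which is Markov with invariant distribution $\hat\pi(\rmd x\rmd n)=\tilde\pi(\rmd x)\tilde R(x,\rmd n)$: for any $H$ with $\hat\pi(|H|)<\infty$ and any $\xi\in\measureset_1(\Xset\times\nsetzero)$,
$$m^{-1}\sum_{k=1}^m H(\tilde X_k,\tilde N_k)\;\xrightarrow[m\to\infty]{\PP_\xi^{\bar Q}\text{-a.s.}}\;\hat\pi(H).$$
I would prove this by splitting $H(\tilde X_k,\tilde N_k)=h(\tilde X_k)+[H(\tilde X_k,\tilde N_k)-h(\tilde X_k)]$, where $h(x)=\sum_n H(x,n)\tilde R(x,n)=\PE[H(\tilde X_k,\tilde N_k)\mid\tilde X_k=x]$: the first sum obeys \ref{hyp:SLLN of Q} with limit $\tilde\pi(h)=\hat\pi(H)$ (noting $\tilde\pi(|h|)\leq\hat\pi(|H|)<\infty$), while the second is a sum of terms that are conditionally independent and centered given the $\sigma$-algebra of $(\tilde X_k)_{k\geq 1}$. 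Their almost-sure vanishing follows by truncating at level $A$, applying Kolmogorov's SLLN conditionally on $(\tilde X_k)_{k\geq 1}$ to the resulting bounded independent summands, and then letting $A\to\infty$ using $\hat\pi(|H|)<\infty$. This joint-chain SLLN is the main technical obstacle, because it must be produced solely from integrability of $H$ and from \ref{hyp:SLLN of Q}, with no extra ergodicity or moment assumption on $Q$.

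Once the joint SLLN is in hand, the rest is bookkeeping. Taking $H(x,n)=n$ and using \ref{hyp:unbiased} with $\tilde\pi(\varrho_\kappa)=\kappa\,\tilde\pi(\rmd\pi/\rmd\tilde\pi)=\kappa$ yields $T_m/m\to\kappa$ a.s. Taking $H(x,n)=\sum_{j=0}^{n-1}g(x,j)$ and swapping sums gives
$$\hat\pi(H)=\int\tilde\pi(\rmd x)\sum_{j\geq 0}g(x,j)\,\tilde R(x,[j+1:\infty))=\kappa\,\bar\pi(g),$$
by the identity $\rho_{\tilde R}(x)R(x,[k:\infty))=\tilde R(x,[k+1:\infty))$ and \eqref{eq:def;new:pibar}; hence $m^{-1}\sum_{k=1}^m B_k\to\kappa\,\bar\pi(g)$. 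For an arbitrary $n$, set $m(n)=\inf\{m:T_m>n\}$, so that $T_{m(n)-1}\leq n<T_{m(n)}$ and $m(n)/n\to 1/\kappa$ a.s. The straddling remainder $\bigl|\sum_{\ell=T_{m(n)-1}}^{n-1}g(X_\ell,N_\ell)\bigr|$ is dominated by $B^*_{m(n)}:=\sum_{j=0}^{\tilde N_{m(n)}-1}|g(\tilde X_{m(n)},j)|$, and $B^*_{m(n)}/m(n)\to 0$ a.s.\ because $m^{-1}\sum_{k=1}^m B^*_k$ converges by the joint SLLN applied to $|g|$. Assembling these pieces yields
$$\frac{1}{n}\sum_{\ell=0}^{n-1}g(X_\ell,N_\ell)=\frac{m(n)}{n}\cdot\frac{1}{m(n)}\sum_{k=1}^{m(n)}B_k+o(1)\xrightarrow[n\to\infty]{\text{a.s.}}\frac{1}{\kappa}\cdot\kappa\,\bar\pi(g)=\bar\pi(g),$$
which is the claim.
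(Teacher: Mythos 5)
Your overall architecture (block decomposition of the augmented chain, a joint SLLN for the kernel $\bar Q(x,n;\rmd x'\rmd n')=Q(x,\rmd x')\tilde R(x',\rmd n')$ with invariant law $\hat\pi$, then renewal bookkeeping with $T_m/m\to\kappa$ and a vanishing straddling block) is viable, and the bookkeeping identities you use, in particular $\hat\pi(H)=\kappa\,\bar\pi(g)$ via $\rho_{\tilde R}(x)R(x,[k{:}\infty))=\tilde R(x,[k{+}1{:}\infty))$, are correct. The joint SLLN you need is true and is in fact Lemma~\ref{lemma:SLLN:Qbar} of the paper. However, your proposed proof of that lemma has a genuine gap. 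Conditionally on the $X$-chain, the centered summands $Z_k=H(\tilde X_k,\tilde N_k)-h(\tilde X_k)$ are independent but need not have second moments, so after truncating at level $A$ you must still control $\limsup_m m^{-1}\sum_{k\le m}|Z_k|\indiacc{|Z_k|>A}$. These leftover terms are nonnegative with conditional means $\psi_A(\tilde X_k)$ whose Ces\`aro averages are small by \ref{hyp:SLLN of Q}, but small means do not bound the $\limsup$ of the empirical average of heavy-tailed independent variables: the standard fix (growing truncation at level $k$ plus Borel--Cantelli applied to $\sum_k\PP(|Z_k|>k)$) requires either stationarity of $(\tilde X_k)$ or a moment condition, neither of which is available when the chain starts from an arbitrary $\xi$ and the only hypothesis on $Q$ is \ref{hyp:SLLN of Q}. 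The one-sided argument you do get (monotone truncation) yields only $\liminf_m m^{-1}\sum_k H(\tilde X_k,\tilde N_k)\ge\hat\pi(H)$ for $H\ge 0$, not the matching upper bound.

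The paper closes exactly this obstacle by a different route, which you should adopt: by \cite[Proposition 3.5]{doucBoostYourFavorite2022}, for a kernel with invariant probability measure, the SLLN for every initial distribution is \emph{equivalent} to the property that every bounded harmonic function is constant. The paper's proof of Theorem~\ref{thm: SLLN of P} therefore never touches trajectories at all: it takes a bounded harmonic $\bar h$ for $P$, uses $P\bar h(x,n)=\bar h(x,n-1)$ for $n\ge1$ to see that $\bar h$ does not depend on $n$, then shows the resulting $h$ satisfies $h=Sh$ and, via the recursion \eqref{eq:S recursive}, $h=Qh$; \ref{hyp:SLLN of Q} (through the same equivalence) forces $h$ to be constant. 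The identical device proves Lemma~\ref{lemma:SLLN:Qbar} for $\bar Q$. If you substitute this harmonic-function argument for your truncation step, your block-decomposition proof becomes a correct (if longer) alternative to the paper's; as written, the key lemma is unproved.
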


\begin{proof}
    The proof relies on \cite[Proposition 3.5]{doucBoostYourFavorite2022}, which relates \ref{hyp:SLLN of Q} to a property on the harmonic functions for $Q$ (i.e. measurable functions $h$ such that $Qh=h$). More precisely, it states that for any Markov kernel $Q$ satisfying  $\tilde \pi Q = \tilde \pi$ for some $\tilde \pi \in \measureset_1(\Xset)$, \ref{hyp:SLLN of Q} is equivalent to \ref{hyp: harmonic function} defined as follows:

    \begin{hypn}{H$_{\mbox{\tiny hrm}}$}
        \item Any bounded harmonic function $h: \Xset \rightarrow \rset$ for $Q$  is constant. \label{hyp: harmonic function}
    \end{hypn}
    Hence, proving Theorem \ref{thm: SLLN of P} is equivalent to showing that any bounded harmonic function for $P$ is constant.

    Let $ \bar h : \Xset\times\nsetzero\rightarrow\rset$ be a bounded harmonic function for $P$. Then for $n>0$, 
    \begin{equation}
        \bar h(x,n) = P \bar h(x,n) = \bar h(x,n-1),
    \end{equation}
    where the last equality comes from \eqref{eq:def:P}. 
    Thus $\bar h$ does not depend on its second argument and we can define a measurable function $h:\Xset\rightarrow\rset$ such that 
    \begin{equation} \label{eq:def:h}
        h(x) = \bar h (x,n)
    \end{equation}
    for all $(x,n)\in\Xset\times\nsetzero$.
    Now,
   \begin{equation*}
        h(x) = \bar h(x,0) = P\bar h(x,0) = \int_{\Xset \times \nsetzero} S(x,dx') R(x',dn) \bar h(x,n) =Sh(x)
   \end{equation*}
   using the expression of $P$ in  \eqref{eq:def:P} as well as \eqref{eq:def:h}.
    From the recursive expression for $S$ in \eqref{eq:S recursive} we have:
    \begin{align*}
        h(x) & = Sh(x)                            \\
             & = Q(\rho h)(x) + Q((1-\rho)Sh)(x)  \\
             & = Q( \rho h)(x) + Q((1- \rho)h)(x) \\
             & = Qh(x).
    \end{align*}
    Therefore $h$ is harmonic for $Q$, and since it is also bounded, \ref{hyp:SLLN of Q} implies that it is constant.
    Then \eqref{eq:def:h} shows that $\bar h$ is constant which concludes the proof. 
\end{proof}

\subsection{Central Limit Theorem }

Let us now establish a Central Limit Theorem (CLT) associated to the Importance Markov chain for a particular function $h$, based on a similar hypothesis on the instrumental kernel $Q$ with the function $\varrho h$. More precisely, the Central Limit Theorem for $Q$ stems from the existence of a solution to the Poisson equation for this kernel. This is a quite common sufficient condition for a CLT, and although other conditions exist (references are given for example in \cite[Chap 21]{doucMarkovChains2018}), we choose this one for its simplicity in the proofs. To be more specific, we are interested in measurable functions $h: \Xset \to \rset$ such that if we define 
\begin{equation} \label{eq:h_0}
    h_0 := h- \pi h,
\end{equation}
the following condition holds:
\begin{hypn}{H$_{\mbox{\tiny Poiss}}$}
    \item \label{hyp:CLT}  The Poisson equation associated to $\varrho h_0$ for the kernel $Q$ on $\Xset$ admits a $\tilde \pi$-square integrable solution $H$, i.e. for all $x\in\Xset$,
    \[H(x) - QH(x) = \varrho h_0(x) \text { and } \tilde\pi H^2<\infty.\]
    In addition, $\int_{\Xset \times \nsetzero} n^2h(x)^2\tilde\pi(\rmd x)\tilde R(x,\rmd n)<\infty$.
\end{hypn}
\begin{remark}
    Note that $\tilde\pi(\varrho h_0) = 0$ since $\varrho = \frac{\rmd\pi}{\rmd\tilde\pi}$, hence this term does not appear in the Poisson equation.
\end{remark}
Under \ref{hyp:CLT}, \cite[Theorem 21.2.5]{doucMarkovChains2018} ensures that the Markov chain $(\tilde X_i)$ generated by the kernel $Q$ satisfies a Central Limit Theorem for the function $\varrho h_0$ :
\begin{equation}\label{eq:CLTstationary}
    \frac{1}{\sqrt{n}}\sum_{i=1}^{n}  \varrho(\tilde X_i) h_0(\tilde X_i) \convlaw{\Prob^Q_{\tilde\pi}} \mathcal{N}(0,\tilde \sigma^2(\varrho h_0)),
\end{equation}
where $ \tilde \sigma^2(\varrho h_0)= 2\tilde\pi\lr{\varrho h_0H} - \tilde\pi((\varrho h_0)^2)$. 
\Cref{lem:martingaleTCL} of \ref{sec:clt} combined with \ref{hyp:SLLN of Q} then extends the weak convergence under ${\Prob^Q_{\tilde\pi}}$ in the equation above to a weak convergence under ${\Prob^Q_{\xi}}$ for any $\xi\in\measureset_1(\Xset)$. 
We can now state the CLT for the Importance Markov chain in a formal manner.
\begin{theorem}
    \label{thm:CLT}
    Assume  \ref{hyp:Qinv}, \ref{hyp:unbiased}, \ref{hyp:SLLN of Q} and let $h: \Xset \to \rset$  be  a measurable function that satisfies \ref{hyp:CLT}. Then there exists a constant $\sigma^2(h) >0$ such that
        \begin{equation*}
            \frac{1}{\sqrt{n}}\sum_{i=1}^{n}  \lr{h(X_i) - \pi h} \convlaw{\Prob^P_{\chi}} \mathcal{N}(0,\sigma^2(h)),
        \end{equation*}
    where the distribution $\chi$ is defined by $\chi(f)=\int\xi(\rmd x) S(x,\rmd x')R(x',\rmd n')f(x',n')$. 
    Moreover, we have the following expression of $\sigma^2(h)$:
        \begin{equation} \label{eq:clt:var}
            \sigma^2(h) =   \kappa  \tilde \sigma^2(\varrho h_0) + \kappa^{-1} \hat \sigma^2(h_0,\kappa),
        \end{equation}
where \begin{description}
    \item [-] $\tilde \sigma^2(\varrho h_0)$ is the variance obtained in \eqref{eq:CLTstationary},
    \item  [-] $\hat \sigma^2(h_0,\kappa) := \int_\Xset h_0^2(x)\Var^{\tilde R}_x[N]\tilde \pi(dx)$,
    \item [-] $\Var^{\tilde R}_x[N] := \int_\nsetzero\tilde R(x,\rmd n)n^2 - \lr{\int_\nsetzero\tilde R(x,\rmd n)n}^2$.
\end{description}
\end{theorem}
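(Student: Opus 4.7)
My plan is to lift the IMC partial sum $S_n := \sum_{i=1}^n h_0(X_i)$ back onto the instrumental chain $(\tilde X_i,\tilde N_i)$ by exploiting the block structure, then decompose the result into an ergodic piece driven by $Q$ and a conditional martingale piece driven by $\tilde R$. Set $T_k=\sum_{i=1}^k \tilde N_i$ and $K_n=\max\{k\ge 0:\, T_k\le n\}$. Because each $\tilde X_i$ is repeated exactly $\tilde N_i$ consecutive times in the IMC output,
\begin{equation*}
S_n \;=\; \sum_{i=1}^{K_n}\tilde N_i\, h_0(\tilde X_i) + r_n, \qquad |r_n|\le \tilde N_{K_n+1}\,|h_0(\tilde X_{K_n+1})|.
\end{equation*}
Under \ref{hyp:Qinv}, \ref{hyp:unbiased} and \ref{hyp:SLLN of Q}, \Cref{thm: SLLN of P} applied to $(x,n)\mapsto n$ gives $T_k/k\to\kappa$ and hence $K_n/n\to 1/\kappa$ almost surely; the second-moment bound in \ref{hyp:CLT} then forces $r_n=o(\sqrt n)$.

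Since $\EE[\tilde N_i\mid\tilde X_i]=\varrho_\kappa(\tilde X_i)=\kappa\varrho(\tilde X_i)$, I further split
\begin{equation*}
\sum_{i=1}^{K_n}\tilde N_i h_0(\tilde X_i) \;=\; \kappa A_{K_n} + M_{K_n},
\end{equation*}
where $A_k := \sum_{i=1}^{k}\varrho(\tilde X_i) h_0(\tilde X_i)$ and $M_k := \sum_{i=1}^{k}\bigl(\tilde N_i-\varrho_\kappa(\tilde X_i)\bigr) h_0(\tilde X_i)$. For $A_{K_n}$, \ref{hyp:CLT} supplies the Poisson-equation CLT $k^{-1/2} A_k \rightsquigarrow \mathcal{N}(0,\tilde\sigma^2(\varrho h_0))$ under $\Prob^Q_\xi$ (via \Cref{lem:martingaleTCL}); combined with $K_n/n\to 1/\kappa$, an Anscombe-type random-index substitution yields $\kappa A_{K_n}/\sqrt n \rightsquigarrow \mathcal{N}(0,\kappa\tilde\sigma^2(\varrho h_0))$. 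For $M_{K_n}$, conditionally on $\mathcal G_\infty := \sigma(\tilde X_i:\, i\ge 1)$, the summands are independent and centered with variances $h_0(\tilde X_i)^2\Var^{\tilde R}_{\tilde X_i}[N]$; the SLLN for $Q$ gives $n^{-1}\sum_{i=1}^{K_n} h_0(\tilde X_i)^2\Var^{\tilde R}_{\tilde X_i}[N] \to \kappa^{-1}\hat\sigma^2(h_0,\kappa)$ a.s., and the second-moment condition in \ref{hyp:CLT} furnishes Lindeberg, so a conditional CLT yields $M_{K_n}/\sqrt n\rightsquigarrow \mathcal{N}(0,\hat\sigma^2(h_0,\kappa)/\kappa)$.

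The final step is joint convergence: $A_{K_n}$ is $\mathcal G_\infty$-measurable while the limiting law of $M_{K_n}/\sqrt n$ conditionally on $\mathcal G_\infty$ is deterministic Gaussian. A characteristic-function argument conditioned on $\mathcal G_\infty$ then shows that $(\kappa A_{K_n}/\sqrt n,\, M_{K_n}/\sqrt n)$ converges jointly to independent Gaussians, so that their sum converges to $\mathcal{N}\bigl(0,\;\kappa\tilde\sigma^2(\varrho h_0)+\kappa^{-1}\hat\sigma^2(h_0,\kappa)\bigr) = \mathcal{N}(0,\sigma^2(h))$. Slutsky absorbs $r_n$, and the initial law $\chi$ is harmless since it is just the one-step-ahead distribution from $\xi$ under $P$.

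The main obstacle is keeping the two random-time CLTs jointly valid at the same random index $K_n$: both the Anscombe step for $A_{K_n}$ and the conditional Lindeberg check for $M_{K_n}$ rely on the almost-sure rate $K_n/n\to 1/\kappa$ together with the uniform integrability supplied by $\int n^2 h^2\,\tilde\pi(\rmd x)\tilde R(x,\rmd n)<\infty$, and the orthogonality of the two pieces at the level of conditional laws is what produces the clean additive decomposition of the asymptotic variance.
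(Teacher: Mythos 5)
Your high-level plan parallels the paper's: lift the IMC sum back to the instrumental chain via the inverse time $K_n$, dispose of the boundary remainder, and apply a random-index CLT. The key structural divergence is the decomposition. The paper constructs a \emph{single} martingale through the Poisson equation for the joint kernel $\bar Q(x,n;\rmd x'\rmd n') = Q(x,\rmd x')\tilde R(x',\rmd n')$ with $F(x,n)=H_\kappa(x)+nh_0(x)-\varrho_\kappa(x)h_0(x)$, then invokes one random-index martingale CLT (\Cref{thm:martingale}), and the variance split into $\kappa\tilde\sigma^2(\varrho h_0)+\kappa^{-1}\hat\sigma^2(h_0,\kappa)$ falls out algebraically from $\EE_{\hat\pi}[\Delta M_1^2]$. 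You instead split \emph{first}, into $\kappa A_{K_n}$ (driven by $Q$) and $M_{K_n}$ (a conditional martingale given the $\tilde X$ path), and then must argue joint asymptotic independence. That is a genuinely different route and, if carried through, would give a nice probabilistic interpretation of the two variance terms.

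However, as stated the argument has a concrete gap: you assert that $A_{K_n}$ is $\mathcal G_\infty$-measurable, and this is what makes the conditional characteristic-function factorization close. It is false. The random index $K_n=\max\{k:\sum_{i\le k}\tilde N_i\le n\}$ depends on the realized $\tilde N_i$'s, not only on the $\tilde X_i$'s, so $A_{K_n}=\sum_{i=1}^{K_n}\varrho(\tilde X_i)h_0(\tilde X_i)$ is \emph{not} $\mathcal G_\infty$-measurable. Consequently the clean identity $\EE\bigl[e^{iu_1\kappa A_{K_n}/\sqrt n}\,\EE[e^{iu_2 M_{K_n}/\sqrt n}\mid\mathcal G_\infty]\bigr]$ does not hold, and the product structure you rely on is not available. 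The fix is to first replace $K_n$ by the deterministic $m_n=\lfloor n/\kappa\rfloor$ in \emph{both} pieces (an Anscombe step, for $A$ via the Poisson martingale approximant for $Q$, for $M$ via Doob's maximal inequality for the $\mathcal G_\infty$-conditional martingale $k\mapsto M_k$), after which $A_{m_n}$ genuinely is $\mathcal G_\infty$-measurable and the conditioning argument works. A related but smaller omission: the Anscombe substitution for $A_{K_n}$ itself cannot be applied directly to the non-martingale sum $A_k$; you need the Poisson decomposition $A_k = \tilde M_k + O(1)$ with $\tilde M_k$ a $Q$-martingale before invoking a random-index result like \Cref{thm:martingale}. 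The paper's single-martingale formulation for $\bar Q$ sidesteps both of these delicacies at the cost of a less transparent variance computation.
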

\begin{proof}
    See \ref{sec:clt}.
\end{proof}

\begin{remark}
    \label{rmk:choice of R}
    Note that the variance $\sigma^2(h)$ can be decomposed into two terms: $(1)$ $\tilde \sigma^2(\varrho h)$ is the variance coming from the instrumental chain, while $(2)$ $\hat \sigma^2 (h,\kappa)$ is the variance brought by the random number of repetitions of the instrumental chain.
\end{remark}    
    
\subsection{Minimizing the asymptotic variance}

\subsubsection{Optimal choice of the kernel $\tilde R$}
Following Remark \ref{rmk:choice of R}, one can notice that the expression $\hat \sigma ^2(h,\kappa) = \int_\Xset h^2(x)\Var^{\tilde R}_x[N]\tilde \pi(dx)$ directly depends on the variance of $N$ under $\tilde R(x,\cdot)$. 
Therefore, minimizing the variance associated to $\tilde R(x,\cdot)$, for $x \in \Xset$, leads to minimization of the asymptotic variance of the chain as defined in Theorem \ref{thm:CLT}. To help tuning $\tilde R$, we state the following lemma:

\begin{lemma}
    \label{lem:var:min:int:RV}
    Let $N$ be an integer-value random variable on some probability space $(\Omega, \mathcal{F}, \mathbb{P})$ such that $\EE[N] = \rho$ for a fixed $\rho \in \rset^+$. Then,
    
    \begin{equation}
        \Var(N) \geq \fracpart{\rho}(1 - \fracpart{\rho}) . 
    \end{equation}

    This bound is reached for $N = \floor{\rho} + S$, where $S \sim Ber(\fracpart{\rho})$
\end{lemma}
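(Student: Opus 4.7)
The plan is to reduce the problem to a single sharp pointwise inequality on integers and then take expectations. Write $\rho = m+\alpha$ with $m = \floor{\rho}$ and $\alpha = \fracpart{\rho} \in [0,1)$. The whole argument hinges on the observation that for any integer $k$ one has
\begin{equation*}
(k-m)(k-m-1) \geq 0,
\end{equation*}
with equality if and only if $k \in \{m,m+1\}$. This is the natural inequality here because $m$ and $m+1$ are the only two integers straddling $\rho$, and it is what lets us turn a quadratic constraint on integers into a linear one.

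Expanding the inequality gives $k^2 \geq (2m+1)k - m(m+1)$ for every integer $k$. Since $N$ is almost surely integer valued, this holds $\mathbb{P}$-almost surely with $k=N$, and taking expectations (using $\EE[N]=\rho = m+\alpha$) yields
\begin{equation*}
\EE[N^2] \;\geq\; (2m+1)(m+\alpha) - m(m+1) \;=\; m^2 + 2m\alpha + \alpha.
\end{equation*}
Subtracting $\rho^2 = m^2 + 2m\alpha + \alpha^2$ gives $\Var(N) \geq \alpha - \alpha^2 = \fracpart{\rho}(1-\fracpart{\rho})$, which is the claimed bound.

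For the equality statement, I would simply check directly that $N = m + S$ with $S \sim \mathrm{Ber}(\alpha)$ satisfies $\EE[N] = m + \alpha = \rho$ and $\Var(N) = \Var(S) = \alpha(1-\alpha)$, which saturates the bound. (One may additionally note that any minimizer must concentrate on $\{m,m+1\}$, because the pointwise inequality above is strict for every other integer, so the constraint $\EE[N]=\rho$ then forces $\PP(N=m+1)=\alpha$.)

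There is essentially no obstacle: once one spots the inequality $(k-m)(k-m-1)\geq 0$, the proof is a one-line expectation computation. The only thing to be careful about is allowing $N$ to take negative integer values in principle; the inequality and the argument go through unchanged since it is algebraic in $k$.
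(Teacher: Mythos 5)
Your proof is correct, and it takes a genuinely different (and arguably slicker) route than the paper's. The paper decomposes $N-\rho$ into its positive and negative parts, uses $\EE[(N-\rho)^+]=\EE[(N-\rho)^-]$ together with the pointwise bound $(N-\rho)^2\geq(1-\fracpart{\rho})(N-\rho)^+ +\fracpart{\rho}(N-\rho)^-$, and then finishes with a case analysis on whether $\PP(N>\rho)\geq\fracpart{\rho}$ or not. You instead observe the single quadratic inequality $(k-\floor{\rho})(k-\floor{\rho}-1)\geq 0$ for all integers $k$, expand, and take expectations; this replaces the case split by a one-line computation and, as a bonus, immediately characterizes the minimizers as those laws supported on $\{\floor{\rho},\floor{\rho}+1\}$ (the inequality is strict elsewhere), from which the stated optimal $N=\floor{\rho}+S$ with $S\sim\mathrm{Ber}(\fracpart{\rho})$ follows as the unique minimizer when $\fracpart{\rho}>0$. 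Both arguments rest on the same geometric fact that $\floor{\rho}$ and $\floor{\rho}+1$ are the integers nearest $\rho$; the only (harmless) point left implicit in your write-up is that one may assume $\EE[N^2]<\infty$, since otherwise $\Var(N)=\infty$ and the bound is trivial.
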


\begin{proof}
    Using $ 0 = \EE [N] - \rho = \EE[(N-\rho)^+] - \EE[(N-\rho)^-] $
    and $ (N-\rho)^2 \geq (1- \fracpart{\rho}) (N-\rho)^+ + \fracpart{\rho}(N-\rho)^- $,
    we get 
    \begin{equation*}
        \EE [(N-\rho)^2] \geq  \EE[(N-\rho)^+] = \EE[(N- \rho)^-] .
    \end{equation*}

    \begin{itemize}
        \item If $\PP(N> \rho) \geq \fracpart{\rho}$, then 
        \begin{equation*}
            \EE [(N-\rho)^2] \geq  \EE[(N-\rho)^+]\geq (1- \fracpart{\rho})P(N>\rho)\geq \fracpart{\rho}(1 - \fracpart{\rho}).
        \end{equation*}
        \item If $\PP(N> \rho) < \fracpart{\rho}\Leftrightarrow\PP (N \leq \rho ) > 1- \fracpart{\rho}$, then
        \begin{equation*}
            \EE [(N-\rho)^2] \geq  \EE[(N-\rho)^-]\geq \fracpart{\rho}P(N \leq \rho)\geq \fracpart{\rho}(1 - \fracpart{\rho}).
        \end{equation*}
    \end{itemize}
\end{proof}    
In the case where $\varrho_\kappa$ can be computed, we can use \Cref{lem:var:min:int:RV} to define  $\tilde R$ as:
\begin{equation}
    \Ropttilde = (1-\fracpart{\varrho_\kappa (x)}) \delta_{\floor{\varrho_\kappa(x)}} + \fracpart{\varrho_\kappa(x)} \delta_{\floor{\varrho_\kappa(x)}+1}.
    \label{eq:optimal:tildeR}
\end{equation}
This $\Ropttilde$ implies the following expression for $\Ropt$:
\begin{equation*}
    \Ropt= (1 - \fracpart{\varrho_\kappa(x)})\delta_{(\floor{\varrho_\kappa(x)}-1)^+} + \fracpart{\varrho_\kappa(x)} \delta_{\floor{\varrho_\kappa(x)}} .
\end{equation*}

\subsubsection{Optimal upper bound}\label{subsec:optimal}
With the optimal choice of $\tilde R$ given in \eqref{eq:optimal:tildeR}, we have $\Var^{\Ropttilde}_x[N]=\fracpart{\varrho_\kappa (x)}(1-\fracpart{\varrho_\kappa (x)}) \leq 1/4$. Hence, 
\begin{align*}
    \sigma^2(h) &=   \kappa  \tilde \sigma^2(\varrho h_0) + \kappa^{-1} \hat \sigma^2(h_0,\kappa)\\
    & \leq  \kappa  \tilde \sigma^2(\varrho h_0) + \kappa^{-1} \tilde \pi(h_0^2)/4.    
\end{align*}
Therefore, for a given function $h$, optimizing the rhs of the above inequality yields: 
\begin{equation} \label{eq:kappa:opt}
    \kappa= \frac 1 2 \sqrt{\frac{\tilde \pi(h_0^2)}{\tilde \sigma^2(\varrho h_0)}} 
\end{equation}
from which we deduce the upper bound 
\[\sigma^2(h)  \leq \sqrt{\tilde \pi(h_0^2) \tilde \sigma^2(\varrho h_0)}\,.\]
Note that the choice of $\kappa$ in \eqref{eq:kappa:opt} depends on the function $h$ which is usually not given beforehand in practice. We propose another way of choosing $\kappa$ in \Cref{subsec:choice:kappa}, more adapted to practical concerns.

\subsection{Geometric ergodicity}

For any set $\ens{A}\in \Xsigma$, we use the notation $\bar{\ens{A}}=\Xset \setminus \ens{A}$ to denote its complement. Define the set $\ens{C}_\eta \eqdef \sett{x \in\Xset}{\rho_{\tilde R}(x) \geq \eta}$ for $\eta\in [0,1]$ and note that
\begin{equation}
    \label{eq:cns}
    x \in \ens{C}_\eta \Longleftrightarrow  1- \rho_{\tilde R}(x) \leq  1 - \eta \eqsp.
\end{equation}
Finally we denote by $\sigma_\ens{C} = \inf \sett{k\geq 1}{X_k \in \ens{C}}$ the first return time of the set $\ens{C}$.

\begin{lemma}
    \label{lem:smallset}
    Assume that for some $\eta\in(0,1)$, $\ens{C}_\eta$ is a $(1,\varepsilon \nu)$-small set for the kernel $Q$.
    Then, there exists a probability measure $\tilde \nu$ on $\Xset \times \nsetzero$ satisfying
    \begin{enumerate}[label=(\roman*)]
        \item \label{item:smallset:exist} $ \ens{C}_\eta \times \{0\}$ is a $(1,\varepsilon \tilde \nu)$-small set for the kernel $P$.
        \item \label{item:smallset:pos} if $\nu (\ens{C}_\eta\cap\{\tilde R(.,1)>0\}) >0$,
        then 
        \[\tilde \nu (\ens{C}_\eta \times \{0\}) >0.\]
    \end{enumerate}    
\end{lemma}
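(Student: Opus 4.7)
The plan is to derive a minorization for $P$ on $\ens{C}_\eta \times \{0\}$ directly from the $n=0$ branch of \eqref{eq:def:P}, using the series representation of $S$ to transfer the small-set property of $Q$ into a lower bound on $S$, and then to read off both the candidate probability measure $\tilde\nu$ and its mass on $\ens{C}_\eta \times \{0\}$.

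Concretely, keeping only the $k=1$ term in \eqref{eq:def:S} (all other terms being nonnegative) yields, at the kernel level, $S(x,\rmd x') \geq \rho_{\tilde R}(x')\, Q(x, \rmd x')$. Combining with the hypothesis that $\ens{C}_\eta$ is $(1,\varepsilon\nu)$-small for $Q$ gives, for every $x \in \ens{C}_\eta$, the lower bound $S(x,\rmd x') \geq \varepsilon\, \rho_{\tilde R}(x')\, \nu(\rmd x')$. Plugging this into the $n=0$ branch of \eqref{eq:def:P}, I would obtain, for any measurable $B \in \Xsigma \otimes \mathcal{P}(\nsetzero)$ and any $x \in \ens{C}_\eta$,
\[
P((x,0), B) \;\geq\; \varepsilon \int_{\Xset} \nu(\rmd x')\, \rho_{\tilde R}(x') \sum_{n' \geq 0} R(x', n')\, \indi{B}(x', n').
\]
This identifies $\tilde\nu$ as (a suitable renormalization of) the measure on $\Xset \times \nsetzero$ defined by the right-hand side integral; its total mass is $\nu(\rho_{\tilde R}) \in (0,1]$, and absorbing it into the constant yields~\ref{item:smallset:exist}.

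For \ref{item:smallset:pos}, I would use the telescoping identity $\rho_{\tilde R}(x')\, R(x',0) = \tilde R(x',1)$, which is immediate from \eqref{eq:R} on $\{\rho_{\tilde R} > 0\}$ and trivial on the complement since $\tilde R(x',1) \leq \rho_{\tilde R}(x')$. Evaluating $\tilde\nu$ on $\ens{C}_\eta \times \{0\}$ then reduces, up to the positive normalizing constant, to $\int_{\ens{C}_\eta} \tilde R(x',1)\, \nu(\rmd x')$, which is strictly positive precisely when $\nu(\ens{C}_\eta \cap \{\tilde R(\cdot,1) > 0\}) > 0$.

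The only slightly delicate point is the set $\{\rho_{\tilde R} = 0\}$, on which $R$ is not defined by \eqref{eq:R}; however it contributes zero mass in both computations above and can be discarded. No further new idea is needed beyond the one-term truncation of the series defining $S$ and this telescoping identity.
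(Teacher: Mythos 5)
Your proposal is correct and follows essentially the same route as the paper: truncate the series defining $S$ at $k=1$ to get $S(x,\cdot)\geq \varepsilon\,\rho_{\tilde R}\,\nu$ on $\ens{C}_\eta$, plug this into the $n=0$ branch of \eqref{eq:def:P} to read off $\tilde\nu(\rmd x'\,\rmd n')\propto \nu(\rmd x')\rho_{\tilde R}(x')R(x',\rmd n')$, and then evaluate its mass on $\ens{C}_\eta\times\{0\}$. Your identity $\rho_{\tilde R}(x)R(x,0)=\tilde R(x,1)$ gives part \ref{item:smallset:pos} slightly more directly than the paper's lower bound $\rho_{\tilde R}\geq\eta$ on $\ens{C}_\eta$, but the argument is the same in substance.
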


\begin{proof}
    See \ref{appendix:proof:lem:2and3}.
\end{proof}

We now introduce the following assumption:

\begin{hyp}{H}
    \item \label{hyp:majoration N} There exists $\beta_0>1$ such that 
    \[\sup_{x\in \Xset} \int_\nsetzero\beta_0^n \tilde R(x,\rmd n) < \infty.\]
\end{hyp}

\ref{hyp:majoration N} is a relatively weak condition, and it is in fact necessary for geometric ergodicity of a Metropolis-Hastings algorithm, seen as an instance of the IMC algorithm (see \Cref{remark:MH}). 
Indeed, in this case, $\tilde R(x,\cdot)$ is a geometric distribution with success parameter $p(x)$ and therefore \ref{hyp:majoration N} is equivalent to $p$ being lower bounded away from zero. \cite[Proposition 5.1]{robertsTweedie1996} proves that this last condition is a necessary condition for geometric ergodicity of a Metropolis-Hastings algorithm  .

Also, \ref{hyp:majoration N} holds whenever the support of $\tilde R(x,\cdot)$ is uniformly bounded on
$\Xset$ and in particular when $\frac{\rmd \pi}{\rmd \tilde \pi}$ is upper-bounded, and $\tilde R(x,\cdot)$ is the distribution of $\lfloor \varrho_\kappa(x) \rfloor+ U$ with $U \sim \mathrm{Ber}(\langle \varrho_\kappa(x)\rangle)$.

\begin{remark}
    Actually, although condition \ref{item:smallset:pos} of Lemma \ref{lem:smallset} is not verified for any kernel $\tilde R$, it is always possible to transform it slightly into a new kernel $\tilde R'$ satisfying this condition while keeping its other properties untouched, namely assumptions \ref{hyp:unbiased} and \ref{hyp:majoration N}. 
    Indeed, define 
    $\ens{C}_\eta^-=\{x\in \ens{C}_\eta : \tilde R(x,0)>0\}$ 
    and observe that $\nu (\ens{C}_\eta\cap\{\tilde R(.,1)>0\})>\nu (\ens{C}_\eta^-\cap\{\tilde R(.,1)>0\})$ since $\ens{C}_\eta^-\subset \ens{C}_\eta$. 
    We will now construct a kernel $\tilde R'$ such that for all $x\in \ens{C}_\eta^-$, $\tilde R'(x,1)>0$, which satisfies the desired condition as $\nu (\ens{C}_\eta^-\cap\{\tilde R'(.,1)>0\}) = \nu(\ens{C}_\eta^-)>0$. Let $x \in \ens{C}_\eta^-$, implying that $\tilde R(x,0)>0$. 
    Due to the unbiasedness assumption, there exists $k\in\nsetzero, k>1$ such that $\tilde R(x,k)>0$. Now define $\tilde R'$ such that: 

    \begin{itemize}
        \item $\tilde R'(x,n) = \tilde R(x,n)$ if $n\notin \{0, 1, k\}$,
        \item $\tilde R'(x,1) = \epsilon\tilde R(x,0)$,
        \item $\tilde R'(x,0) = \tilde R(x,0) - \frac{(k-1)\epsilon}{k}\tilde R(x,0)$,
        \item $\tilde R'(x,k) = \tilde R(x,k) - \frac{\epsilon}{k} \tilde R(x,0).$
    \end{itemize}
    Note that $\epsilon$ can be chosen small enough to guarantee $\tilde R'(x,k)>0$. One can easily check that
    \begin{itemize}
        \item $\sum_{n\geq 0}\tilde R'(x,n)=\sum_{n\geq 0}\tilde R(x,n) = 1$,
        \item $\sum_{n\geq 0}\tilde R'(x,n)n = \sum_{n\geq 0}\tilde R(x,n)n = \varrho_{\kappa}(x)$.
    \end{itemize}
    Hence \ref{hyp:unbiased} and \ref{hyp:majoration N} hold for $\tilde R'$.
\end{remark}

Recall  that $\ens{A} \in \Xsigma$ is {\em accessible} for the kernel $Q$ if and only if for all $x
\in \Xset$, there exists $n \in \nsetzero$ such that $Q^n(x,\ens{A})>0$ (see for example Lemma 3.5.2 of \cite{doucMarkovChains2018}). We then have the following lemma:

\begin{lemma}
    \label{lem:access}
    Assume \ref{hyp:majoration N}. Let $\ens{A} \in \Xsigma$  be an accessible set for $Q$ such that $\inf_{x \in \ens{A}} \rho_{\tilde R}(x)>0$. Then $\ens{A}$ is accessible for $S$ and $\ens{A} \times \{0\}$ is accessible for $P$.
\end{lemma}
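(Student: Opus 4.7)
The plan is to prove the two accessibility claims in sequence: first that $\ens{A}$ is accessible for $S$, and then lift this to $\ens{A} \times \{0\}$ being accessible for $P$ by using the way the $S$-chain is embedded in the $P$-chain.

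For the $S$-part, I would use the rejection-sampling interpretation of $S$ spelled out after \eqref{eq:def:rej}: on the extended chain $(X_k, U_k)_{k \geq 0}$ driven by $G$, denoting the $j$-th acceptance time by $\tau_j$, one has $S^j(x, \ens{A}) = \PP_{\delta_x \otimes \gamma}^{G}(\tau_j < \infty,\, X_{\tau_j} \in \ens{A})$ by the strong Markov property (here $\gamma$ is any probability measure on $[0,1]$). Set $\eta := \inf_{x \in \ens{A}} \rho_{\tilde R}(x) > 0$ and, by $Q$-accessibility of $\ens{A}$, pick $n$ with $Q^n(x, \ens{A}) > 0$. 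On $\{X_n \in \ens{A}\}$, step $n$ is accepted with conditional probability at least $\eta$, and if it is accepted then $n = \tau_j$ for some $j \in \{1,\dots,n\}$. Summing over $j$ yields
\begin{equation*}
\sum_{j=1}^{n} S^{j}(x, \ens{A}) \;\geq\; \PE_x^Q\!\left[\rho_{\tilde R}(X_n)\, \indi{\ens{A}}(X_n)\right] \;\geq\; \eta\, Q^{n}(x, \ens{A}) \;>\; 0,
\end{equation*}
so some $S^{j}(x, \ens{A})$ must be positive.

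For the $P$-part, the deterministic decrement $P^{n}((x, n), \cdot) = \delta_{(x, 0)}$ reduces the problem to starting from $(x, 0)$. From $(x, 0)$, the $P$-chain successively visits the $S$-chain states $Y_1, Y_2, \dots$, each $Y_k$ being held for $1 + N_k$ consecutive time-steps with $N_k \sim R(Y_k, \cdot)$, which is a probability measure on $\nsetzero$ (because $\rho_{\tilde R}(Y_k) > 0$ on the support of $S$). Consequently $(Y_k, 0)$ is visited exactly once, at the a.s.\ finite time $\sigma_k := k + N_1 + \cdots + N_k$, for every $k$ at which the $S$-chain survives. This gives the bookkeeping identity
\begin{equation*}
\sum_{m \geq 0} P^{m}\!\big((x, 0),\, \ens{A} \times \{0\}\big) \;=\; \sum_{k \geq 1} S^{k}(x, \ens{A}),
\end{equation*}
and the right-hand side is positive by the first part, so some $P^{m}((x, 0), \ens{A} \times \{0\}) > 0$ as required.

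The main subtlety I foresee is purely bookkeeping: one $Q$-trajectory can be compatible with several acceptance indices $j$ depending on the realized uniforms, and the waiting times $N_k$ are random; both are handled by summing and reasoning in terms of expected numbers of visits rather than pinning down a specific $j$ or $m$. Assumption \ref{hyp:majoration N} plays only an ancillary role here, ensuring that $R(y, \cdot)$ is a genuine probability measure on $\nsetzero$ so that the times $\sigma_k$ are almost surely finite.
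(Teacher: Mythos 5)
Your proof is correct and follows essentially the same route as the paper's: the bound $\sum_{j=1}^{n} S^{j}(x,\ens{A}) \geq \eta\, Q^{n}(x,\ens{A})$ via the $G$-representation of $S$ is exactly the paper's first step, and your visit-counting identity relating $\sum_m P^m((x,0),\ens{A}\times\{0\})$ to $\sum_k S^k(x,\ens{A})$ through the return times to $\Xset\times\{0\}$ is the same mechanism the paper implements as a chain of inequalities involving $\sigma_{\ens{F}}^{(m)}$. The only (harmless) slip is an off-by-one in the left-hand sum, which should start at $m=1$ to match $\sum_{k\geq 1}S^k(x,\ens{A})$, or else the term $\indi{\ens{A}}(x)$ must be added on the right; this does not affect the positivity conclusion.
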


\begin{proof}
    See \ref{appendix:proof:lem:2and3}.
\end{proof}

Let us now assume a drift condition on $Q$ in order to deduce an upper bound for $\sup_{x \in C_{\eta}} \PE_x^S\left[\beta^{\sigma_{\ens{C}_\eta}}
        \right]$.
\newline
\begin{hypn}{H$_{\mbox{\tiny dft}}$}
    \item \label{hyp:drift}
    There exist constants
    $(\eta_0,\lambda)\in (0,1)^2$
    and a measurable function $V:\Xset \to   [1,\infty)$ such that
    \begin{enumerate}
        \item for any $x \notin C_{\eta_0}$, we have $QV(x) \leq \lambda V(x)$,
        \item $ b_\infty = \sup_{x \in C_{\eta_0}} \frac{QV}{V}(x) < \infty$,
        \item $ \sup _{x \in C_{\eta_0}} V(x)< \infty$.
    \end{enumerate}
\end{hypn}
\begin{remark}
    Under \ref{hyp:drift}, we have $QV(x) \leq (b_\infty \vee \lambda) V(x)$ for any $x\in \Xset$. A straightforward induction yields for any $n \in\nsetzero$, $1 \leq Q^n V(x)\leq (b_\infty \vee \lambda)^n V(x)$. This implies that $b_\infty \vee \lambda\geq V(x)^{-1/n}$. Since $n$ is arbitrary, $b_\infty \vee \lambda \geq 1$. Finally, $b_\infty \geq 1 > \lambda $ and for all $x\in\Xset$, 
    \begin{equation}
        \label{eq:boundQV}
        QV(x) \leq b_\infty V(x). 
    \end{equation} 
\end{remark}
\begin{lemma} \label{lem:ergo}
    Assume \ref{hyp:majoration N} and \ref{hyp:drift}.
    Then there exist constants $(\eta,\beta_r,\beta_\star) \in (0,\eta_0)
        \times (1,\infty) \times  (0,\infty)$ such that for all $(x,n) \in \Xset\times\nsetzero$ :
    \begin{equation}
        \eset_{(x,n)}^P \left[ \beta_r^{\sigma_{\ens{C}_\eta \times \{ 0\}}} \right]  \leq \beta_\star \beta_r^n V(x)
        \eqsp.
    \end{equation}
\end{lemma}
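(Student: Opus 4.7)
My plan is to establish a geometric drift inequality for $P$ with Lyapunov function $\bar V(x,n) = \beta_r^n V(x)$ for some $\beta_r>1$ chosen close to $1$, and then invoke the standard comparison theorem converting geometric drifts into exponential return-time moments (e.g., \cite[Proposition 14.1.12]{doucMarkovChains2018}). Since $P$ behaves very differently on $\{n\geq 1\}$ (deterministic decrement of $N$) and on $\{n=0\}$ (a fresh draw from $S\otimes R$), I would analyse the two cases separately.

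For $n\geq 1$, \eqref{eq:def:P} yields the identity $P\bar V(x,n)=\bar V(x,n-1)=\beta_r^{-1}\bar V(x,n)$, giving an immediate contraction by $\beta_r^{-1}<1$ uniformly in $x$. The critical case is $n=0$. Starting from \eqref{eq:def:P}, using \eqref{eq:R}, the key identity $\rho_{\tilde R}(x')\int R(x',\rmd n)\beta_r^n = \beta_r^{-1}\bigl[\int\beta_r^m\tilde R(x',\rmd m)-\tilde R(x',0)\bigr]$, and \ref{hyp:majoration N} to bound the latter by $K(\beta_r)/\beta_r$ where $K(\beta_r):=\sup_{y}\int\beta_r^m\tilde R(y,\rmd m)<\infty$ for $\beta_r\in(1,\beta_0]$ with $K(\beta_r)\downarrow 1$ as $\beta_r\downarrow 1$, I would plug in the representation \eqref{eq:def:S} of $S$. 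The factor $\rho_{\tilde R}$ then cancels and leaves
\begin{equation*}
    P\bar V(x,0) \leq \frac{K(\beta_r)}{\beta_r}\, \PE_x^Q\left[\sum_{k=1}^{\tau} V(X_k)\right] \eqsp,
\end{equation*}
where $\tau$ is the first acceptance time of the $Q$-chain thinned with probability $\rho_{\tilde R}(X_k)$ at step $k$.

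The remaining task, which is the main obstacle, is to bound $\Phi(x):=\PE_x^Q[\sum_{k=1}^{\tau} V(X_k)]$ by $cV(x)+d\indi{\ens{C}_\eta}(x)$ for suitable finite constants and an appropriate $\eta\in(0,\eta_0)$. The idea is to combine two ingredients: first, the drift \ref{hyp:drift}, which forces $V(X_k)$ to decay geometrically as long as the $Q$-chain remains in $\ens{C}_{\eta_0}^c$; second, the fact that the acceptance probability is at least $\eta_0$ on $\ens{C}_{\eta_0}$, so that the expected number of visits to $\ens{C}_{\eta_0}$ before $\tau$ is at most $1/\eta_0$. A renewal-type analysis built on the recursive identity $\Phi = QV + Q((1-\rho_{\tilde R})\Phi)$, splitting each term according to $\{X_1\in\ens{C}_{\eta_0}\}$ versus $\{X_1\notin\ens{C}_{\eta_0}\}$, then delivers the desired bound, where $\eta$ is taken strictly below $\eta_0$ to provide enough slack to absorb the boundary contribution of the excursions.

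Combining the two cases and taking $\beta_r>1$ close enough to $1$ so that the product $cK(\beta_r)/\beta_r$ is strictly below $1$ yields a drift inequality
\begin{equation*}
    P\bar V(x,n) \leq \lambda_P\, \bar V(x,n) + b_P\,\indi{\ens{C}_\eta\times\{0\}}(x,n)
\end{equation*}
with $\lambda_P<1$ and $b_P<\infty$. The standard comparison theorem then delivers the conclusion $\PE_{(x,n)}^P[\beta_r^{\sigma_{\ens{C}_\eta\times\{0\}}}] \leq \beta_\star\bar V(x,n) = \beta_\star\beta_r^n V(x)$, as required.
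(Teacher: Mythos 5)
Your high-level plan (a geometric drift for $P$ with $\bar V(x,n)=\beta_r^nV(x)$ followed by the standard drift-to-return-time comparison) is a legitimate alternative in principle to the paper's route, and both the $n\geq1$ case and the algebra reducing $P\bar V(x,0)$ to an $S$-integral are correct. The gap is in the critical step at $n=0$. After replacing $\int_{[1:\infty)}\beta_r^m\tilde R(x',\rmd m)$ by the constant $K(\beta_r)$ you arrive at $P\bar V(x,0)\leq\frac{K(\beta_r)}{\beta_r}\Phi(x)$ with $\Phi(x)=\PE_x^Q\bigl[\sum_{k=1}^{\tau}V(X_k)\bigr]$, and you then need $\Phi(x)\leq cV(x)$ off $\ens{C}_\eta$ with $cK(\beta_r)/\beta_r<1$. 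No such $c$ exists in general: $\Phi$ counts every state visited before acceptance with full weight, so $\Phi(x)\geq\PE_x^Q[\tau]$, which is governed by the expected number of visits to $\ens{C}_{\eta_0}$ before acceptance (of order $1/\eta_0$) and is not dominated by $V(x)$. Concretely, take $\Xset=\{a,b\}$, $Q(a,\cdot)=Q(b,\cdot)=\delta_b$, $\rho_{\tilde R}(a)=0.01$, $\rho_{\tilde R}(b)=\eta_0=0.05$, $V(a)=10$, $V(b)=1$, $\lambda=0.1$: all of \ref{hyp:drift} and \ref{hyp:majoration N} hold, $a\notin\ens{C}_\eta$ for any $\eta\in(0.01,\eta_0)$, yet $\Phi(a)=\PE_a^Q[\tau]\,V(b)=20>V(a)$, forcing $c\geq2$; since moreover $K(\beta)/\beta=\sup_y\int\beta^{m-1}\tilde R(y,\rmd m)\geq\beta^{\sup_y\varrho_\kappa(y)-1}\geq1$ whenever $\sup_y\varrho_\kappa\geq1$ (append one extra absorbing state with $\varrho_\kappa=1$ to arrange this), the requirement $cK(\beta_r)/\beta_r<1$ fails for every admissible $\beta_r$. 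The loss occurs precisely when you discard the weight $\int_{[1:\infty)}\beta_r^m\tilde R(X_k,\rmd m)$, which is of order $\rho_{\tilde R}(X_k)$ for $\beta_r$ near $1$ and is what keeps $P\bar V(x,0)$ comparable to the acceptance-weighted quantity $SV(x)$ rather than to the unweighted accumulated sum $\Phi(x)$. (In addition, the concluding ``renewal-type analysis'' is asserted rather than carried out; the fixed-point argument on $\Phi=QV+Q((1-\rho_{\tilde R})\Phi)$ with the ansatz $cV+d\indi{\ens{C}_{\eta_0}}$ leads to a circular system of constraints on $c$ and $d$ that need not be solvable.)

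The paper sidesteps this by not seeking a drift inequality for $P$ at all. It first proves $\eset^P_{(x,n)}\bigl[\beta^{\sigma_{\ens{C}_\eta\times\{0\}}}\bigr]\leq\beta^n\,\eset^S_x\bigl[D^{\sigma_{\ens{C}_\eta}}\bigr]$ by decomposing over successive visits to $\Xset\times\{0\}$, and then bounds $\PP^S_x(\sigma_{\ens{C}_\eta}>k)$ directly as a sum over acceptance epochs, retaining the factor $\rho_{\tilde R}(X_{s_j})\indi{\bar{\ens{C}}_\eta}(X_{s_j})\leq\eta$ at each of the $k$ intermediate acceptances. The product $\eta^k$ of these small factors, with $\eta<\eta_0$ chosen small only at the very end so that $D_r\gamma\eta<1$, is what delivers the geometric tail. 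Your proposal never exploits the smallness of the acceptance probability outside $\ens{C}_\eta$ — you invoke $\eta<\eta_0$ only as vague ``slack'' — and that is the essential mechanism missing from your argument.
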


\begin{proof}
    See \ref{appendix:proof:lem:geom:ergo}.
\end{proof}

Before stating the main theorem, let us introduce a smallness assumption: 
\begin{hypn}{H$_{\mbox{\tiny sml}}$}
    \item \label{hyp:small} For any $\eta \in (0,1)$ there exist a probability measure $\nu\in\measureset_1(\Xset)$ and a constant $\epsilon>0$ such that, $\ens{C}_\eta$ is a $(1,\epsilon\nu)$-small set and    
    $\nu (\ens{C}_\eta\cap\{\tilde R(\,\cdot\,,1)>0\}) >0$.
\end{hypn}    
\begin{theorem}
    \label{thm:ergo}
    Assume \ref{hyp:Qinv}, \ref{hyp:unbiased}, \ref{hyp:majoration N}, \ref{hyp:small} and \ref{hyp:drift}. Then $P$ has a unique invariant probability measure $\bar\pi$ and there exist constants $\delta,\beta_r >1$, $\zeta < \infty$, such that for all $\xi\in\measureset_{1}(\Xset\times\nsetzero)$,

    \begin{equation} \label{eqn:th4}
        \sum_{k=1}^\infty \delta^k d_{TV}( \xi P^k, \bar \pi) \leq \zeta \int_{\Xset\times\nsetzero}\beta_r^n V(x)\, \xi(\rmd x\rmd n).
    \end{equation}

\end{theorem}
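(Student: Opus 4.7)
The plan is to reduce the claim to a standard $V$-uniform geometric ergodicity theorem (in the spirit of \cite[Theorem 15.1.5]{doucMarkovChains2018}) applied to $P$ with the Lyapunov function $W(x,n) \eqdef \beta_r^n V(x)$, where $\beta_r>1$ and $V$ are provided by Lemma~\ref{lem:ergo}. Such a theorem requires, for $P$, four standard ingredients: a one-step small set $\bar{\ens{C}}$, aperiodicity of $\bar{\ens{C}}$, accessibility of $\bar{\ens{C}}$, and a geometric drift of the form $\eset_{(x,n)}^P\lrb{\beta_r^{\sigma_{\bar{\ens{C}}}}}\leq c\,W(x,n)$. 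I would read off each ingredient from the lemmas already established in the paper.

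First, I would fix the constants $\eta\in(0,\eta_0)$, $\beta_r>1$ and $\beta_\star$ given by Lemma~\ref{lem:ergo}, where $\eta_0$ is the constant from \ref{hyp:drift}. Applying \ref{hyp:small} with this $\eta$ produces a measure $\nu\in\measureset_1(\Xset)$ and $\varepsilon>0$ such that $\ens{C}_\eta$ is a $(1,\varepsilon\nu)$-small set for $Q$ with $\nu(\ens{C}_\eta\cap\{\tilde R(\cdot,1)>0\})>0$. Lemma~\ref{lem:smallset} then upgrades this to a probability measure $\tilde\nu$ on $\Xset\times\nsetzero$ such that $\bar{\ens{C}} \eqdef \ens{C}_\eta\times\{0\}$ is a $(1,\varepsilon\tilde\nu)$-small set for $P$ with $\tilde\nu(\bar{\ens{C}})>0$; the latter positivity yields one-step aperiodicity, since from any point of $\bar{\ens{C}}$ the chain returns to $\bar{\ens{C}}$ in a single step with probability at least $\varepsilon\tilde\nu(\bar{\ens{C}})>0$. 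For accessibility, a Foster--Lyapunov argument using \ref{hyp:drift} shows that $\ens{C}_{\eta_0}$ is accessible for $Q$; since $\eta<\eta_0$ yields $\ens{C}_{\eta_0}\subset \ens{C}_\eta$ and $\inf_{\ens{C}_\eta}\rho_{\tilde R}\geq \eta>0$, Lemma~\ref{lem:access} then transfers accessibility to $\bar{\ens{C}}$ under $P$.

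Second, Lemma~\ref{lem:ergo} directly supplies the geometric drift
\[
    \eset_{(x,n)}^P\lrb{\beta_r^{\sigma_{\bar{\ens{C}}}}} \leq \beta_\star W(x,n), \qquad (x,n)\in\Xset\times\nsetzero.
\]
Plugging the small set $\bar{\ens{C}}$, its aperiodicity, its accessibility and this drift into the standard geometric ergodicity theorem then yields constants $\delta>1$ and $\zeta<\infty$ such that
\[
    \sum_{k\geq 1}\delta^k d_{TV}(\xi P^k,\bar\pi) \leq \zeta \int_{\Xset\times\nsetzero}\beta_r^n V(x)\,\xi(\rmd x\,\rmd n), \qquad \xi\in\measureset_1(\Xset\times\nsetzero),
\]
which is \eqref{eqn:th4} once $\beta_\star$ is absorbed in $\zeta$. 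Existence of $\bar\pi$ is already provided by Proposition~\ref{prop:invariance:imc}; uniqueness then comes for free from the geometric decay, as any other invariant probability would have to coincide with $\bar\pi$ by contraction.

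The main obstacle I foresee is the clean derivation of accessibility of $\bar{\ens{C}}$ for $P$: \ref{hyp:small} only yields a local minorization of $\rho_{\tilde R}$-level sets, not irreducibility of $P$, so the argument must stitch together Lemma~\ref{lem:access} with the Foster-type consequence of \ref{hyp:drift} ensuring that $Q$ hits $\ens{C}_{\eta_0}$ from every initial state, and finally use $\ens{C}_{\eta_0}\subset \ens{C}_\eta$ to transfer this to $\bar{\ens{C}}$. Once accessibility is in hand, the remaining work is routine packaging: matching the small set data, aperiodicity and $(W,\beta_r,\beta_\star)$-drift to the hypothesis list of a textbook geometric ergodicity theorem, and reading off the summable $\delta^k$-weighted TV bound displayed in \eqref{eqn:th4}.
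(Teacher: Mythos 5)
Your proposal is correct and follows essentially the same route as the paper: the paper also reduces the claim to a textbook geometric-ergodicity theorem (Theorem 11.4.2 of Douc et al.) whose hypotheses — an accessible $(1,\varepsilon\tilde\nu)$-small set $\ens{C}_\eta\times\{0\}$ with $\tilde\nu(\ens{C}_\eta\times\{0\})>0$ and a bounded return-time moment $\sup_{x\in\ens{C}_\eta}\eset^P_{(x,0)}[\beta_r^{\sigma_{\ens{C}_\eta\times\{0\}}}]<\infty$ — are read off from Lemmas~\ref{lem:smallset}, \ref{lem:access} and \ref{lem:ergo} exactly as you do, and the final bound is obtained by integrating the Lemma~\ref{lem:ergo} estimate against $\xi$. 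The only difference is cosmetic: you spell out the Foster--Lyapunov argument giving accessibility of $\ens{C}_{\eta_0}$ (hence of $\ens{C}_\eta$) for $Q$, a step the paper leaves implicit.
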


\begin{proof}
    According to Theorem 11.4.2 of \cite{doucMarkovChains2018}, there exists $\zeta_0<\infty$ such that:
    \begin{equation} \label{eq:thm4:1}
        \sum_{k=1}^\infty \delta^k d_{TV}( \xi P^k, \bar \pi) \leq \zeta_0 \EE_\xi^P[\beta_r^{\sigma_{\ens{C}_\eta \times \{0\}}}],
    \end{equation}
    provided that:
    \begin{enumerate}[label=(\roman*)]
        \item \label{item:th4:1}  $\ens{C}_\eta  \times \{0\}$ is an accessible $(1,\varepsilon \tilde \nu)$-small set for $P$ satisfying 
        \[\tilde \nu (\ens{C}_\eta \times \{0\})>0,\]
        \item \label{item:th4:2} $\sup_{x \in \ens{C}_\eta} \EE_{(x,0)}^P[\beta_r^{\sigma_{\ens{C}_\eta \times \{0\}}}] < \infty$ for some $\beta_r >1$.
    \end{enumerate}
    Let us start by proving \ref{item:th4:1}. 
    By Lemma \ref{lem:smallset}, there exists $\tilde\nu\in\measureset_{1}(\Xset)$ such that $\ens{C}_\eta  \times \{0\}$ is a $(1,\varepsilon \tilde \nu)$-small set for $P$ satisfying $\tilde \nu (\ens{C}_\eta \times \{0\})>0$. Moreover it is accessible for $P$, by Lemma \ref{lem:access} since $\ens{C}_\eta$ is accessible for $Q$ and $\inf_{x\in \ens{C}_\eta}\rho_{\tilde R}(x)\geq \eta >0$. Hence \ref{item:th4:1}.

    It remains to show \ref{item:th4:2}.
    We can apply Lemma \ref{lem:ergo} to get, for $\beta >1$ :
    \begin{equation*} 
        \sup_{x \in \ens{C}_\eta} \EE_{(x,0)}^P[\beta_r^{\sigma_{\ens{C}_\eta \times \{0\}}}] \leq \beta_\star \sup_{x \in \ens{C}_\eta} V(x) < \infty,
    \end{equation*}
    which shows \eqref{eq:thm4:1}. 
Then \eqref{eqn:th4} is obtained from \eqref{eq:thm4:1} by noting that 
\[\EE_\xi^P[\beta_r^{\sigma_{\ens{C}_\eta \times \{0\}}}] = \int_{\Xset\times\nsetzero}\EE_{(x,n)}^P[\beta_r^{\sigma_{\ens{C}_\eta \times \{0\}}}]\,\xi(\rmd x\rmd n)\]
and applying \Cref{lem:ergo}.

\end{proof}

\section{Pseudo-marginal IMC}

We develop in this section two different frameworks for pseudo-marginal Importance Markov chain \cite{andrieuPseudomarginalApproachEfficient2009}. The first, simplest one is valid if we want to replace $\pi(x)$ by an unbiased estimate $\hat \pi(x)$. This can be written as a specific kernel $\tilde R$ using the same space as classic IMC. 

The second framework tackles the issue of having the intrumental chain $(\tilde X_i)$ being itself a pseudo-marginal chain, i.e. in this case, both $\pi$ and $\tilde \pi$ are computed through two unbiased estimates. 

\subsection{Pseudo-marginal within IMC}

\subsubsection{Adaptation of the kernel $\tilde R$ in the pseudo-marginal setting}

The first  pseudo-marginal approach of the Importance Markov chain can be directly implemented and fits within the framework we develop in this article. Indeed, knowledge of the density of $\pi$ is never assumed, only the unbiasedness assumption \ref{hyp:unbiased} is needed (and the geometric control of hypothesis \ref{hyp:majoration N} for geometric ergodicity).

Assume that for $x\in \Xset$, the density $\pi(x)$ (with respect to some measure $\mu$) is not directly computable but a nonnegative estimate $\hat \pi (x)$ is available, drawn from a kernel $T_{\pi}(x,\cdot)$ such that $\int_{\rset^+}T_{\pi}(x, \rmd w)w = \pi(x)$ Then, one can replace $\varrho_\kappa (x)$  in \eqref{eq:optimal:tildeR} by $\hat \varrho_\kappa(x)  = \kappa \frac{\hat \pi(x)}{\tilde \pi(x)}$ to get a plug-in kernel $\Rpmtilde$ that satisfies \ref{hyp:unbiased}. 
 
This can be formalized as follows. First, define an extended kernel $\tilde R_\psi$ on $\Xset \times \rset^+ \times \mathcal{P} (\nsetzero)$ by
\begin{align*}
    \tilde R_\psi(x,w,\mathrm{d}n ) = (1-\fracpart{\kappa  w / \tilde \pi (x)}) \delta_{\floor{\kappa  w / \tilde \pi (x)}} (\mathrm{d} n)+ 
    \fracpart{\kappa  w / \tilde \pi (x)} \delta_{\floor{\kappa  w / \tilde \pi (x)}+1}(\mathrm{d}n ).
 \end{align*}
Therefore, $\tilde R_\psi (x,\hat  \pi(x),\cdot)$ corresponds to the plug-in random kernel of $\tilde R_{\scriptscriptstyle\mathrm{opt}}$ of \eqref{eq:optimal:tildeR} using the estimate $\hat \pi(x)$. By construction, $ \int_{\nsetzero} n \tilde R_\psi(x,w,\rmd n) = \kappa  w / \tilde \pi (x)$.
We can now define the integrated kernel $\Rpmtilde$ by 

\begin{equation*}
    \Rpmtilde (x,\mathrm{d} n) = \int_{\rset^+} \tilde R_\psi(x,w,\mathrm{d} n) T_{\pi}(x,\mathrm{d}w ), 
\end{equation*}

and $\Rpm$ using \eqref{eq:R}.
\begin{lemma}
    $\Rpmtilde$ satisfies \ref{hyp:unbiased}.
\end{lemma}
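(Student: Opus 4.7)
The plan is a short two-step computation using Fubini. I would compute the first moment of $N$ under $\Rpmtilde(x,\cdot)$ directly from its definition as a mixture over the auxiliary randomness $w$ drawn from $T_\pi(x,\cdot)$, and then invoke the unbiasedness of the estimator $\hat\pi$ built into $T_\pi$.

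More concretely, fix $x\in\Xset$. By definition,
\begin{equation*}
\int_\nsetzero n\,\Rpmtilde(x,\rmd n) \;=\; \int_\nsetzero n \int_{\rset^+} \tilde R_\psi(x,w,\rmd n)\,T_\pi(x,\rmd w).
\end{equation*}
The first step is to apply Fubini (all integrands are nonnegative, so no integrability issue beyond finiteness, which will be checked \emph{a posteriori}) and exchange the two integrations:
\begin{equation*}
\int_\nsetzero n\,\Rpmtilde(x,\rmd n) \;=\; \int_{\rset^+} T_\pi(x,\rmd w) \int_\nsetzero n\,\tilde R_\psi(x,w,\rmd n).
\end{equation*}

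The second step uses the explicit two-atom form of $\tilde R_\psi(x,w,\cdot)$. By construction, $\tilde R_\psi(x,w,\cdot)$ places mass $1-\fracpart{\kappa w/\tilde\pi(x)}$ at $\floor{\kappa w/\tilde\pi(x)}$ and mass $\fracpart{\kappa w/\tilde\pi(x)}$ at $\floor{\kappa w/\tilde\pi(x)}+1$, so its mean is
\begin{equation*}
\int_\nsetzero n\,\tilde R_\psi(x,w,\rmd n) \;=\; \floor{\kappa w/\tilde\pi(x)} + \fracpart{\kappa w/\tilde\pi(x)} \;=\; \kappa w/\tilde\pi(x),
\end{equation*}
which is exactly the identity already noted in the paragraph defining $\tilde R_\psi$.

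Plugging this back in and using the defining unbiasedness property of $T_\pi$, namely $\int_{\rset^+} T_\pi(x,\rmd w)\,w = \pi(x)$, yields
\begin{equation*}
\int_\nsetzero n\,\Rpmtilde(x,\rmd n) \;=\; \frac{\kappa}{\tilde\pi(x)} \int_{\rset^+} w\,T_\pi(x,\rmd w) \;=\; \kappa\,\frac{\pi(x)}{\tilde\pi(x)} \;=\; \varrho_\kappa(x),
\end{equation*}
which is exactly \ref{hyp:unbiased}. There is no genuine obstacle here: the only mildly delicate point is justifying Fubini, but since every quantity is nonnegative the exchange is valid by Tonelli, and the resulting value $\varrho_\kappa(x)$ is finite whenever $\pi(x)<\infty$.
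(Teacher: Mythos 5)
Your proof is correct and follows exactly the same route as the paper's: exchange the order of integration (justified by nonnegativity), use the fact that the two-atom law $\tilde R_\psi(x,w,\cdot)$ has mean $\kappa w/\tilde\pi(x)$, and conclude with the unbiasedness of $T_\pi$. The only difference is that you make the Tonelli step and the finiteness check explicit, which the paper leaves implicit.
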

\begin{proof}
    Let $x \in \Xset$,
    \begin{align*}
        \int_\nsetzero n \Rpmtilde(x,\mathrm{d}n )= \int_\nsetzero \int_{\rset^+} n \tilde R_\psi (x,w,\mathrm{d}x ) T_\pi(x,\mathrm{d}x ) &= \int_{\rset^+} \left( \int_\nsetzero n \tilde R_\psi(x,w,\mathrm{d}n )\right) T_\pi(x,\mathrm{d} w) \\
        &= \int_{\rset^+} \frac{\kappa w}{\tilde \pi(x)} T_\pi(x,\mathrm{d}w )= \varrho_\kappa(x).
    \end{align*}
\end{proof}

As $\Rpmtilde$ satisfies \ref{hyp:unbiased}, the whole methodology developed in the paper applies to the pseudo marginal case. In particular, the geometric ergodicity result (see next section) still holds if the estimator $\hat  \pi$ is bounded and under \ref{hyp:condition:rejet}, as \ref{hyp:majoration N} will be satisfied.

\subsubsection{Variance of $\Rpmtilde$}

The variance of $\Rpmtilde$ can be expressed as a function of the variance of $\tilde R$. We have the following setup: $W\sim T_\pi(X,\cdot)$ and $N \sim \tilde R_{\psi} (X,W,\cdot)$. Then
\begin{align*}
    \Var(N \vert X) &= \Var \left( \EE[N \vert X,W] \middle \vert X\right) + \EE \left[\Var(N\vert X,W) \middle \vert X \right]\\
    &= \Var\left( \frac{\kappa W}{\tilde \pi(X)} \middle \vert X\right) + \EE [ \Var(N\vert X,W) \vert X ] \\
    &= \frac{\kappa^2 }{\tilde \pi^2(X)}\Var(W \vert X) + \EE \lrb{ \fracpart{\frac{\kappa W}{\tilde \pi(X)}}\lr{1-\fracpart{ \frac{\kappa W}{\tilde \pi(X)}}}\middle\vert X}.    
\end{align*}

The variance can be decomposed into two terms: while the second one is similar to the variance obtained from $\tilde R$ and can also be upper-bounded by $1/4$, the first one is the direct contribution of the variance of the kernel $T$ that generates the estimate. In particular, if $T_\pi(X,\cdot) = \delta_{\pi(X)}$, we recover the same expression as in the previous case.

\subsection{Fully pseudo-marginal IMC}

In this section, we will write that $\pi(\rmd x) = \pi(x) \mu(\rmd x)$ and $\tilde \pi (\rmd x) = \tilde \pi(x) \mu(\rmd x)$ for a common measure $\mu$.

We suppose here that  $(\tilde X_k)$ is itself a pseudo-marginal chain where the estimates of $\tilde \pi$ are drawn from a kernel $T_{\tilde \pi}(x,\cdot)$ such that for all $x\in \Xset$, $\int_{\rset^+} T_{\tilde \pi}(x,\rmd u)u=\tilde \pi(x)$. This construct a two-component Markov chain  $(\tilde X_k,\tilde U_k)$ on  $\Xset \times \rset^+$ that targets $\mu(\rmd x)\, T_{\tilde \pi}(x,\rmd u) u$ \cite{andrieuConvergencePropertiesPseudomarginal2015}.

In order to extend the Importance Markov chain to this case, we need once again to increase the dimension of the chain. The space $\Xset \times \rset^+$ is replaced by $\Xset \times \rset^+ \times \rset^+$. In that case, the second (resp. third) marginal corresponds to the nonnegative estimates of respectively $\tilde \pi $ (resp. $\pi$). The third component $\tilde V_k$ is drawn from a kernel $T_{\pi}(\tilde X_k,\cdot)$ such that for all $x\in \Xset$, $\int_{\rset^+} T_{\pi}(x,\rmd v)v=\pi(x)$. 

This constructs a Markov chain $(\tilde X_k,\tilde U_k, \tilde V_k)$ that targets the probability measure $\tilde \Pi$ defined by 
$$\tilde \Pi(\rmd x\, \rmd u\, \rmd v)= \mu(\rmd x)\, T_{\tilde \pi}(x,\rmd u) u \, T_{\pi}(x,\rmd v).$$

The distribution $\tilde \Pi$ is the \textit{instrumental}  distribution for the extended space. Its first marginal is $\tilde \pi$ as $\tilde \Pi(A\times \rset^+ \times \rset^+)=\tilde \pi(A)$ for any $A\in \Xsigma$. We can define our \textit{target}  distribution $\Pi$ on the same extended space by
$$\Pi(\rmd x\, \rmd u\, \rmd v)= \mu(\rmd x)\, \tilde T_{\tilde \pi}(x,\rmd u)\, T_{\pi}(x,\rmd v)v.$$

We can perform an Importance Markov chain using the instrumental chain  $(\tilde X_k,\tilde U_k, \tilde V_k)$ targeting the instrumental density $\tilde \Pi$ and the target distribution $\Pi$. In this setting, $\varrho_\kappa$ becomes:

$$
\varrho_\kappa(x,u,v)=\kappa \frac{\rmd \Pi}{\rmd \tilde \Pi}(x,u,v)=\kappa \frac{v}{u}.
$$ 

It remains to draw some random integer $\tilde N_k$ with conditional expectation:  

$$\varrho_\kappa(\tilde X_k,\tilde U_k,\tilde V_k)=\kappa \frac{\tilde V_k}{\tilde U_k},$$ 

using for instance $\Ropttilde$.

\section{Numerical experiments}

\subsection{Toy example: mixture of Gaussians}

\subsubsection{Setting}

For starters, we apply Algorithm \ref{algo:imc} on an multidimensional Gaussian mixture. 
The target distribution $\pi$ writes 
\begin{equation*}
    \pi(x) \propto \sum_{i=1}^n \phi_d(x;\mu_i,I_d),
\end{equation*}
where $\phi_d(x; \mu,\Sigma)$ is the density of a Gaussian distribution in dimension $d$ with mean $\mu$ and covariance matrix  $\Sigma$. The means of these distributions are random, i.i.d., $\mu_i \sim \Ncal (0,10^2 I_d)$.
The instrumental distribution $\tilde \pi$ is chosen to be $\tilde \pi(x) \propto \pi(x)^\beta$ for a fixed $\beta \in (0,1)$. The kernel $Q$ targeting $\tilde \pi$ is a No-U-turn Sampler (NUTS) \cite{hoffmanNoUTurnSamplerAdaptively2014}. 

The parameter $\beta$ flattens the instrumental distribution, thus easy out the way for the instrumental chain to move from one mode to another, when compared with the original targetted $\pi$. Conversely, extremely small values of $\beta$ lead to a very flat instrumental distribution from which it is hard to reconstruct the original target $\pi$. Therefore, we test different values of $\beta$ towards finding an optimal tradeoff. The kernel $\tilde R$ used to draw the number of replicas is set to be the same as in  \eqref{eq:optimal:tildeR}, i.e. a shifted Bernoulli.

Note that with such a choice for $\tilde \pi$, the ratio $\frac{\pi}{ \tilde \pi}$ becomes proportional to $\pi^{1 - \beta}$ so that the computation of $\varrho_\kappa$ is just a pointwise evaluation of $\pi$ plus basic float operations. As the choice of $R$ also requires basic float operations (floor, fractional part and multiplication) and a Bernoulli draw, the complexity of Algorithm \ref{algo:imc} is  close to the complexity of directly running $Q$ targeting $\pi$.

To assess the influence of $\beta$ we estimate the mean squared error in approximating the expectation of $\pi$ by running $200$ chains for each $\beta  \in  \{0.004,0.01, 0.04,0.1,1\}$. The results are presented in \Cref{table:MSE:beta}. As expected, for an untempered instrumental distribution, i.e. $\beta=1$, the MSE is high due to a low exploration of the space, and is minimal for $\beta = 0.04$.

\begin{table}[h]
    \begin{center}\begin{tabular}{l|rrrrr}
        {$\beta$} &  0.004 &  0.010 &  0.040 &  0.100 &  1.000 \\
        \hline
        MSE & 16.150 &  6.123 &  \textbf{0.544} & 17.863 & 33.982\\
        \end{tabular}\end{center}
        \caption{\label{table:MSE:beta}MSE for different values of $\beta$ for the Gaussian mixture}        
\end{table}

\subsubsection{Choice of $\kappa$ and analysis of the Effective Sample Size} \label{subsec:choice:kappa}

The parameter $\kappa$ that appears in \eqref{eq:def rho kappa} is somewhat more arbitrary as the other parameters as its value is directly impacted by the normalizing constants of $\pi$ and $\tilde \pi$. Indeed, assume $\pi_U = \pi Z$ (resp. $\tilde \pi_U$) 
is an unnormalized density and write $Z$ (resp. $\tilde Z$ ) the unknown normalization constant. We can then compute the ratio $\varrho_{\kappa,U} := \kappa \frac{\pi_U}{\tilde \pi_U} = \kappa \frac{Z}{\tilde Z} \frac{\pi}{\tilde \pi} = \varrho _{\kappa \frac{Z} {\tilde Z}}$. 
Thus, ignoring the normalizing constants leads to a multiplier term $\kappa$ compared to the case where both densities are normalized.

However it proves possible to overcome this issue by noticing that $\EE [ \sum_i^n \tilde N_i ] = \kappa n$, i.e. that, for an instrumental chain of length $n$, the (expected) length final chain $(X_i)$ is proportional to $\kappa$. So, one way to deal with this issue is to tune $\kappa$ such that the length of the final chain is approximately $\alpha n$, where $\alpha$ is fixed. This is easily solved by taking $\kappa  = \frac{\alpha n}{\sum_{i=1}^n \varrho_U (\tilde X_i)}$, for $\varrho_U = \frac{\pi_U}{\tilde \pi_U}$ .

\begin{remark}
    \label{rmk:diagnosis:kappa}
    The diagnosis of the choice of $\kappa$ can be easily done for a fixed instrumental chain via the computation of the number of replicas for different values of $\kappa$ in parallel is cheap once the vector of the values taken by $\varrho$ is stored. 
\end{remark}

The problem of tuning $\alpha$ remains. We define a metric that may help to this effect, namely, the effective sample size (denoted $\ess_\kappa$, as it depends on $\kappa$). The ESS for an importance Markov chain is similar to the ESS defined for importance sampling, at the difference that here the weights are discretized. Formally,

\begin{equation*}
    \ess_\kappa := \frac{(\sum_{i=1}^n \tilde N_i)^2}{\sum_{i=1}^n \tilde N_i^2}
\end{equation*}
and we also define the usual importance sampling ESS:

\begin{equation*}
    \ess_{\mathrm{IS}} := \frac{(\sum_{i=1}^n \varrho(\tilde X_i))^2}{\sum_{i=1}^n \varrho(\tilde X_i)^2}.
\end{equation*}

\begin{remark}
    Note that the new definition of $\ess_\kappa$ only considers the replication step of the IMC algorithm, and does not take into account the convergence of the instrumental chain to the instrumental target. See \cite{raicescruzIterativeImportanceSampling2022} for a recent work on the effective sample size for IS with dependent proposals. The authors add a term to the $\ess$ to take into account the correlation of the sample.

\end{remark}

Writing $w_i = \frac{\varrho(\tilde X_i)}{\sum_{i=1}^n \varrho(\tilde X_i)}$ for the self-normalized $\varrho$ , we have that $\EE[\tilde N_i \vert \tilde X_i] = \kappa w_i$, so conditionally on $\tilde X_{1:n}$, 
\begin{equation}
    \label{eq:lim:ESS}
    \ess_\kappa \convergesto{\kappa}{\infty} \ess_{\mathrm{IS} },\,\,\,\, \Prob-\text{a.s.}
\end{equation}
As expected by the definition of kernel $R$, as $\kappa$ increases, the stochastic part of the number of replicas vanishes and estimates built with the chain will behave as with importance sampling, in the sense that, for  $M_n := \sum_{i=1}^n \tilde N_i$  : 
\begin{equation*}
    M_n^{-1} \sum_{i=1}^{M_n}h(X_i) = \sum_{i=1}^n   \frac{\tilde N_i}{M_n} h(\tilde X_i)  \convergesto{\kappa}{\infty} \sum_{i=1}^n  w_i h(\tilde X_i).
\end{equation*}

\begin{figure}[H]
    \centering
    \includegraphics[width=0.95\textwidth]{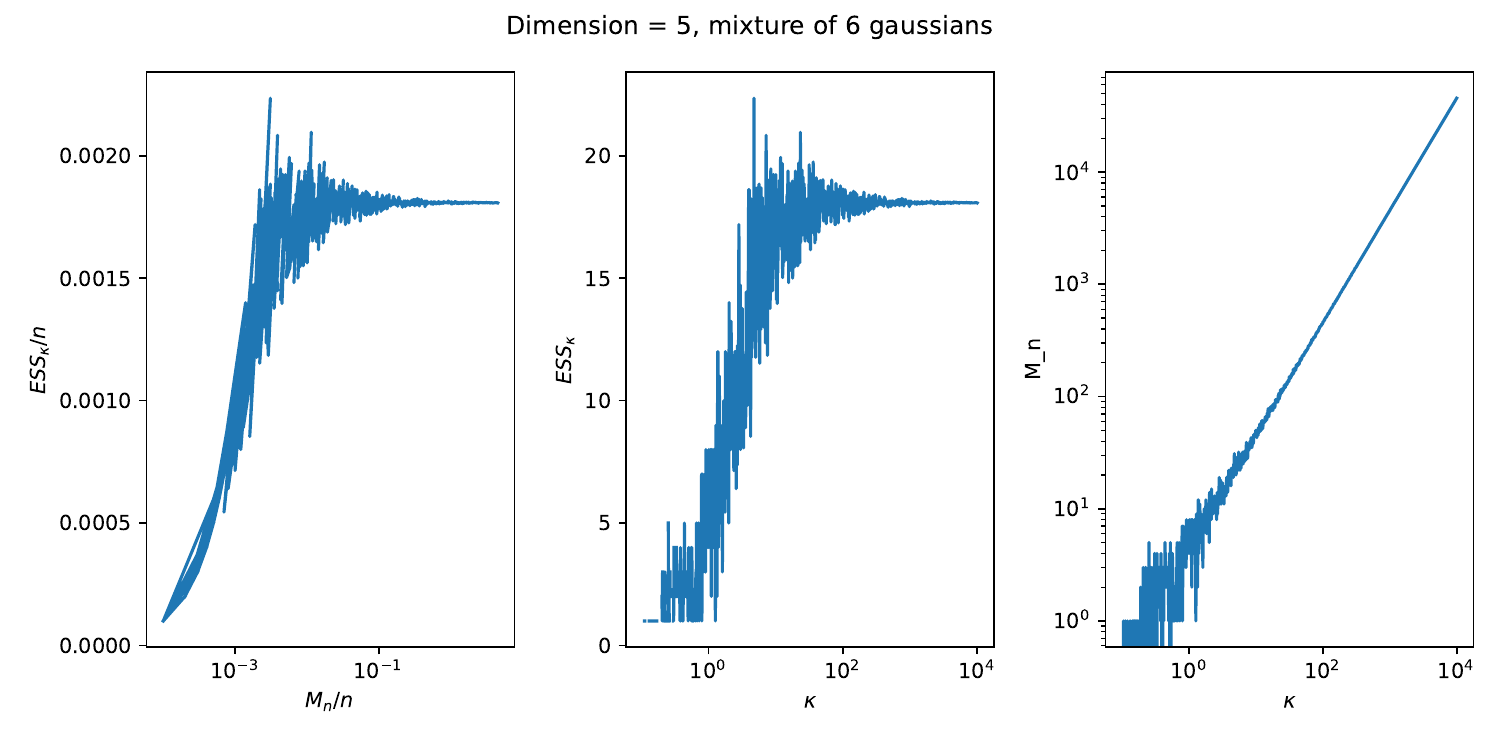}
    \caption{Analysis of the effect of $\kappa$ on the ESS and the length of the chain.
    \label{fig:ESS:kappa}}
\end{figure}

Following \Cref{rmk:diagnosis:kappa}, we plot the effect of $\kappa$ on $\ess_\kappa$ in \Cref{fig:ESS:kappa}. For a fixed instrumental chain of length $n$ and a fixed $\beta = 0.04$, we computed the $\ess_\kappa$  for $10^3$ values of $\kappa$  varying on log scale from $10^{-1}$ to $10^4$. The plot on the right confirms a linear dependence between the length of the final chain and $\kappa$. Both other plots ($\ess_\kappa /n$ as a function of $M_n/n$ and $\ess_\kappa$ as a function of $\kappa$) have the same shape. Overall, the ESS is increasing with $\kappa$ and reaches a stationary regime for $\kappa$ large enough: it converges to $\ess_{\mathrm{IS} }$.  Therefore, taking $\kappa$ too large will not increase the quality of the estimate. In that specific case, this diagnosis leads us to choose $\alpha \simeq 1$.

\subsection{Independent IMC with normalizing flows}

\subsubsection{Settings}

In this section, we compare the Metropolis - Hastings algorithm with independent proposals with the Importance Markov chain algorithm in the specific case where $Q(x,\cdot) = \tilde \pi(\cdot) $ for all $x \in \Xset$.

The target $\pi$ in dimension $d$ is defined, for $x = (x_1,\dots, x_d) \in \rset^d$, by 
\begin{align*}
    \pi(x) \propto &\exp \bigg \{  -\frac{1}{2}\left(\frac{\norm{x}-2}{0.1}\right)^2 + \sum_{i=1}^{d} \log \left( e^{-\frac{1}{2}(\frac{x_i+3}{0.6})^2} + e^{-\frac{1}{2}(\frac{x_i-3}{0.6})^2}\right)\bigg\}.
\end{align*}
This distribution suffers from multimodality  as it has $2$ modes per marginal, for a total of $2^d$ modes. 

The instrumental distribution $\tilde \pi$ is obtained by training a normalizing flow targeting $\pi$. We recall that a normalizing flow is an invertible map $T$ from $\rset^d$ to $\rset^d$. Taking a base distribution $\mu$ on $\rset^d$ (chosen in that case to be the standard Gaussian), the map $T$ should be chosen such that the pushforward measure of $\mu$  through $T$ denoted $T_{\sharp}\mu := \mu(T^{-1}(.))$ is close to $\pi$. This can be done by optimizing $T$ in a family of maps, for instance rational quadratic splines (RQSplines) using neural networks \cite{durkanNeuralSplineFlows2019}. The training is done by minimizing the forward Kullback-Leibler divergence between $T_{\sharp}\mu$ and $\pi$. See \cite{gabrieAdaptiveMonteCarlo2022a} for further details on the training. To get a sample from the flow, one can generate $x\sim \mu$ and derive $T(x)$. The density $\rho$ of $T _\sharp \mu$ is given by, for $x \in \rset^d$:

\begin{equation*}
    \rho(x) = \mu \big(T^{-1}(x)\big) \abs{\det J_{T^{-1}}(x)}.
\end{equation*}
The flows are designed such that $T^{-1}$ and  $J_{T^{-1}}(x)$ are easily computable.

 We used the Python package FlowMC \cite{wongFlowMCNormalizingflowEnhanced2022} with a RQSpline model to train the flow. Every training of a flow yields a different $\tilde \pi = T_\sharp \mu $ as the training is stochastic. For details of implementation, see \ref{sec:details:flows}.
 
 The Self Regenerative Markov Chain Monte Carlo (with no adaptation) \cite{sahuSelfregenerativeMarkovChain2003,gasemyrMarkovChainMonte2002} is close to the Independent IMC, for the special case where the distribution of the number of replicas $\tilde N_i$ is written as 
\[\PP (\tilde N_i =n \vert \tilde X_i = x) = \PP(V S = n),\]
where $V$ is a Bernoulli random variable with parameter $q(x)$ and $S$ is geometric with parameter $\alpha(x)$. In \cite{gasemyrMarkovChainMonte2002}, the author suggests $q(x) = \min(1,\varrho_\kappa(x))$ and $\alpha(x) = \min(1,1/(\varrho_\kappa(x)))$. This method, called \textit{optimal self-regenerative chain} (OSR), is the one we use as a comparison benchmark, with the same tuning of $\kappa$.
 We compare three methods: the independent Metropolis-Hastings (see \cite{robertMonteCarloStatistical2010} for details), the independent importance Markov chain with kernel $\Ropttilde$ of \eqref{eq:optimal:tildeR} and the OSR chain defined above.
 In this case, the IMC algorithm is close to the rejection sampling chain of \cite{tierneyMarkovChainsExploring1994}. 

 The parameter $\kappa$ is tuned such that the length of the final chain is equal to the length of the instrumental chain. The computational cost (and running time) of all three algorithms is similar.

\subsubsection{Results}

\paragraph{Comparison with Metropolis Hastings and OSR}
For each dimension $d \in \{5,10,15,20,25\}$, we trained 10 different flows on the same target. For each flow, 30 i.i.d. samples of size $n = 3 \cdot 10^4$ were generated. 
We display in \Cref{fig:MH.v.IMC.v.OSR} the boxplot of the effective sample size (ESS) of the first marginal for each algorithm, computed using the bulk method.

While the ESS of the OSR is a bit higher than the one obtained with IMH, both are outperformed by the importance Markov chain. Regardless of the dimension, the ESS of the IMC is approximately twice that of MH.
Performances across dimensions are quite similar, even if it is worth noting that the ESS is slightly lower for the dimension 5, probably due to the training of the flow and the fact that the hyperparameters of the flow are not optimized for each dimension.

\begin{figure}[h]
        \centering
        \includegraphics[width=0.8\linewidth]{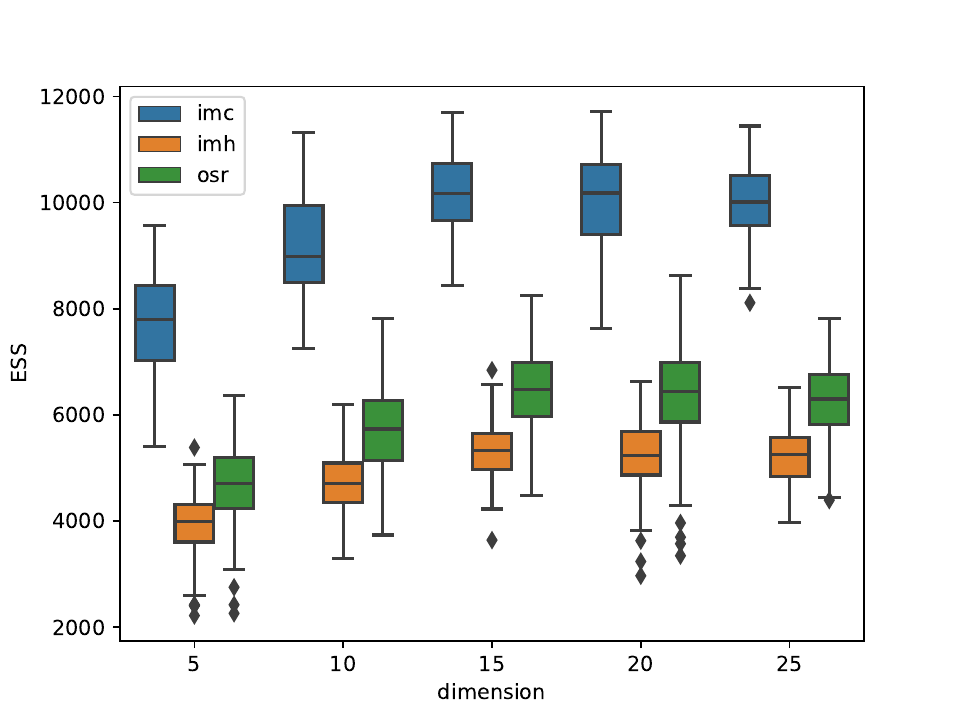}
        \caption{Comparison of the effective sample size between MH, OSR and IMC.}
        \label{fig:MH.v.IMC.v.OSR}
\end{figure}

\paragraph{Comparison with importance sampling}

We compare in \Cref{tab:imc.v.is} the mean squared error (MSE) of the first four odd moments of the first marginal obtained by either the importance Markov chain or with the importance sampling (IS) estimate defined by 
\begin{equation*}
    \hat I_{IS}(h) = \frac{\sum_{i=1}^n \varrho_\kappa(\tilde X_i) h(\tilde X_i)}{\sum_{i=1}^n \varrho_\kappa(\tilde X_i)}
\end{equation*}
against the independent IMC estimate defined by
\begin{equation*}
    \hat I_{IMC} (h)= \frac{1}{k} \sum_{i=1}^{k}h(X_i) = \frac{1}{k}\sum_{i=1}^{n}\tilde N_i h(\tilde X_i),
\end{equation*}
 where we denote $k =\sum_{i=1}^{n} \tilde N_i $ the length of the final chain. The performances of the IMC are very close to the ones of IS, while the gap increases with the dimension. 
The last column shows the mean number points that are replicated once or more by the IMC, which is the length of the chain in the case of importance sampling. This has a strong implication: by storing the importance chain under the representation $(\tilde X_i,\tilde N_i)$, only (around) 16400 points (and their associated number of replicas) are needed to be stored for the dimension 25, instead of $30000$ for importance sampling. If the dimension and the number of points are large, this can be useful to reduce the memory usage. In some contexts, the slight loss in the MSE can be compensated by the gain in memory usage. Moreover, the IMC outputs an actual sample and not a weighted one, sample that approaches the target distribution at mostly a geometric rate.
\begin{table}[H]
    \centering
    \begin{tabular}{|l|l||r|r|r|r|r|}
        \hline
        Dimension & Method & Mean & Third & Fifth  & Seventh  & $\#$ positive copies \\ \hline
        \multirow[c]{2}{*}{5} & IMC & 9.955e-05 & 2.057e-04 & 8.372e-04 & 4.638e-03 & 1.388e+04 \\
        & IS & 9.822e-05 & 2.024e-04 & 8.220e-04 & 4.552e-03 & 3.000e+04 \\ \hline
       \multirow[c]{2}{*}{10} & IMC & 5.176e-05 & 4.146e-05 & 6.799e-05 & 1.653e-04 & 1.512e+04 \\
        & IS & 4.890e-05 & 3.976e-05 & 6.571e-05 & 1.607e-04 & 3.000e+04 \\ \hline
       \multirow[c]{2}{*}{15} & IMC & 3.449e-05 & 1.403e-05 & 1.279e-05 & 1.830e-05 & 1.595e+04 \\
        & IS & 3.229e-05 & 1.341e-05 & 1.254e-05 & 1.804e-05 & 3.000e+04 \\ \hline
       \multirow[c]{2}{*}{20} & IMC & 2.560e-05 & 7.580e-06 & 5.402e-06 & 5.940e-06 & 1.609e+04 \\
        & IS & 2.486e-05 & 7.218e-06 & 5.068e-06 & 5.425e-06 & 3.000e+04 \\ \hline
       \multirow[c]{2}{*}{25} & IMC & 2.356e-05 & 5.585e-06 & 2.962e-06 & 2.483e-06 & 1.646e+04 \\
        & IS & 2.273e-05 & 5.180e-06 & 2.631e-06 & 2.112e-06 & 3.000e+04 \\ \hline
        \end{tabular}
       \caption{Mean square error for the first four odd moments of the first marginal for both the IMC chain and the importance sampling estimate}
       \label{tab:imc.v.is}
\end{table}

\section{Conclusion}

The Importance Markov Chain is a meta-algorithm, in the sense that the produced Markov chain $\seqq{X_k}{k\in\nsetzero}$ is built upon another one, namely $\seqq{\tilde X_k}{k\in\nsetzero}$.
This allows the practitioner to change the target of the MCMC kernel used for the sampling: instead of targeting the distribution of interest directly, our algorithm targets another distribution that may have better properties, and then transforms - with relatively low cost - the obtained sample into an output sample following the distribution of interest. 

A range of applications obviously includes the different instances of Bayesian sensitivity analysis and robustness, such
as a switch of prior distributions in Bayesian robustness \cite{Berger1994, weiss1996}, the inclusion or exclusion of a particular data point for detecting influential observations \cite{jacksonWhiteCarpenter2012}, delayed acceptance MCMC \cite{banterle2019}, safe Bayes modifications of the likelihood \cite{grunwaldVanOmmen2017, Grunwald2023}. A practical illustration is found in the statistical analysis of the standard cosmological model \cite{hilbeDeSouza2017}, where accounting for further satellite observation experiments proves very costly in computing the modified likelihoods.

\section*{Acknowledgments}
Charly Andral is supported by a grant from Région Ile-de-France. Christian Robert is funded by the European Union under the GA 101071601, through the 2023-2029 ERC Synergy grant OCEAN.

\appendix

\section{Postponed proofs}\label{secA1}

The following appendix contains supplementary information that either does not constitute an essential part of the paper, but is helpful in providing a more comprehensive understanding of the research problem, or is too cumbersome to be included in the body of the paper.

\subsection{Uniqueness of the invariant probability measure}
\begin{proof}[(Proof of \Cref{prop:uniqueness})]
    Let $\pi_0$ be an invariant probability measure for $P$ and denote by $\pik{k}$ the measure defined by $\pik{k}(\ens A)=\pi_0(\ens A \times \{k\})$ for any $\ens A \in \Xsigma$. To obtain that $\pi_0=\bar \pi$, we will first express $\pi_0$ using the measure $\pik{0}$ only. Let $f \in \funcset{\Xset}{b+}$. Applying \eqref{eq:def:P} with $h_k(x,n)=f(x)\indiacc{n\neq k}$ and integrating with respect to $\pi_0$ yields for any $k\geq 0$, 
    \begin{equation} \label{eq:recurr:pik}
        \pik{k} f=\pi_0 h_k=\pi_0 Ph_k= \pik{k+1} f + \int_\Xset \pik{0} S(\rmd x) R(x,k) f(x).     
    \end{equation}
    To simplify this equation, we first show that $\pik{0}=\pik{0} S$. Indeed, using again $\pi_0=\pi_0 P$ and \eqref{eq:def:P} with the function $h(x,n)=f(x)$, 
\begin{align*}
    \int_{\Xset\times \nsetzero} \pi_0(\rmd x \rmd n)& f(x)=\PE^P_{\pi_0} [f(X_0)]=\PE^P_{\pi_0} [f(X_1)] = \int_{\Xset \times \nsetzero} \pi_0(\rmd x \rmd n) \indiacc{n \geq 1} f(x) + \int_{\Xset \times \nsetzero} \pi_0(\rmd x \rmd n) \indiacc{n=0} S f(x),  
\end{align*}
which can be equivalently written as $\pik{0} f=\pik{0} S f$. Plugging $\pik{0}=\pik{0} S$ into \eqref{eq:recurr:pik}, we get  
$\pik{k} f= \pik{k+1} f + \int_\Xset \pik{0}(\rmd x) R(x,k) f(x)$ and by straightforward induction, 
$$ 
\pik{k} f=\pik{0}f-\sum_{\ell=0}^{k-1} \int_\Xset \pik{0}(\rmd x) R(x,\ell) f(x)=\int_\Xset \pik{0}(\rmd x) f(x) \lrb{1-\sum_{\ell=0}^{k-1} R(x,\ell)}=\int_\Xset \pik{0}(\rmd x) f(x) \sum_{\ell=k}^{\infty} R(x,\ell). 
$$ 
Hence, for any $h \in \funcset{\Xset \times \nsetzero}{b+}$, 
\begin{align}
    \pi_0 h &=\int_{\Xset \times\nsetzero} \pi_0(\rmd x \rmd n) h(x,n)=\sum_{k=0}^\infty \int_\Xset \pik{k}(\rmd x) h(x,k) \nonumber\\
    &= \sum_{k=0}^\infty  \int_\Xset \pik{0}(\rmd x) h(x,k) \sum_{\ell=k}^{\infty} R(x,\ell)=\sum_{\ell=0}^\infty  \int_\Xset \pik{0}(\rmd x) R(x,\ell)  \sum_{k=0}^{\ell} h(x,k). \label{eq:uniq:zero}
\end{align}
Combining with \eqref{eq:def:pibar}, we can conclude the proof of
\Cref{prop:uniqueness} (ie $\pi_0=\bar \pi$) provided that 
\begin{equation} \label{eq:uniq:two}
    \pik{0}(\rmd x)=\kappa^{-1} \tilde \pi(\rmd x) \rho_{\tilde R}(x). 
\end{equation} 
All that follows consists in proving this identity. 
Denote by $\pi_1$ the measure on $(\Xset \times [0,1],\Xsigma \otimes {\mathcal B}([0,1]))$ defined by: for any function $h \in \funcset{\Xset \times [0,1]}{b+}$,
\begin{equation} \label{eq:defpi1}
    \pi_1 h=\int_{\Xset \times [0,1]} \pik{0}(\rmd x) \rmd u\indi{[0,\rho_{\tilde R}(x)]}(u) \rho_{\tilde R}(x)^{-1} \PE_{(x,u)}^G \lrb{\sum_{k=0}^{\sigma_\ens{D} -1} h(X_k,U_k)},     
\end{equation}
where $\ens D = \sett{(x,u)  \in \Xset \times [0,1]}{u\leq \rho_{\tilde R}(x)}$ and $G$ is defined in \eqref{eq:kernelG}. We first show that $\pi_1=\pi_1 G$.   
\begin{align*}
    \pi_1 G h&=\int_{\Xset \times [0,1]} \pik{0}(\rmd x) \rmd u\ \indi{[0,\rho_{\tilde R}(x)]}(u) \rho_{\tilde R}(x)^{-1} \PE_{(x,u)}^G \lrb{\sum_{k=0}^{\sigma_\ens{D} -1} G h(X_k,U_k)}\\
    &=\sum_{k=0}^\infty \int_{\Xset \times [0,1]} \pik{0}(\rmd x) \rmd u\ \indi{[0,\rho_{\tilde R}(x)]}(u) \rho_{\tilde R}(x)^{-1} \PE_{(x,u)}^G \lrb{ h(X_{k+1},U_{k+1}) \indiacc{k+1 \leq \sigma_\ens{D}}}\\
    &=\int_{\Xset \times [0,1]} \pik{0}(\rmd x) \rmd u\ \indi{[0,\rho_{\tilde R}(x)]}(u) \rho_{\tilde R}(x)^{-1} \PE_{(x,u)}^G \lrb{\sum_{\ell=1}^{\sigma_\ens{D}} h(X_\ell,U_\ell)}.    
\end{align*}
This implies 
\begin{align}
    \pi_1 G h = \pi_1(h)+\int_{\Xset \times [0,1]} \pik{0}(\rmd x) \rmd u\ 
  \indi{[0,\rho_{\tilde R}(x)]}(u) &\rho_{\tilde
    R}(x)^{-1}\PE_{(x,u)}^G \lrb{
    h(X_{\sigma_\ens{D}},U_{\sigma_\ens{D}})
    \indiacc{\sigma_\ens{D} <\infty}} \nonumber \\
    &-\int_{\Xset \times [0,1]} \pik{0}(\rmd x)  \lr{\int_0^{\rho_{\tilde R}(x)}h(x,u) \rmd u} \rho_{\tilde R}(x)^{-1}. 
    \label{eq:uniq:one}
\end{align}
Now, write 
\begin{align*}
    \PE_{(x,u)}^G \lrb{h(X_{\sigma_\ens{D}},U_{\sigma_\ens{D}})\indiacc{\sigma_\ens{D}<\infty} } &=\sum_{\ell =1}^\infty \PE_{(x,u)}^G \lrb{h(X_\ell,U_\ell) \indiacc{U_\ell \leq \rho_{\tilde R}(X_\ell)} \indiacc{\sigma_\ens{D} \geq \ell}} \\
    &=\sum_{\ell =1}^\infty \PE_{(x,u)}^G \lrb{ \lr{\int_0^{\rho_{\tilde R}(X_\ell)} h(X_\ell,u') \rmd u'} \indiacc{\sigma_\ens{D} \geq \ell}}  \\
    &=\sum_{\ell =1}^\infty \PE_{(x,u)}^G \lrb{ \lr{\int_0^{\rho_{\tilde R}(X_\ell)} h(X_\ell,u') \rmd u'} \rho_{\tilde R}(X_\ell)^{-1}\indiacc{U_\ell \leq \rho_{\tilde R}(X_\ell)}\indiacc{\sigma_\ens{D} \geq \ell}} \\
    &=\PE_{(x,u)}^G \lrb{ \lr{\int_0^{\rho_{\tilde R}(X_{\sigma_\ens{D}})} h(X_{\sigma_\ens{D}},u') \rmd u'} \rho_{\tilde R}(X_{\sigma_\ens{D}})^{-1}}\\
    &= \int_\Xset S(x,\rmd x') \lr{\int_0^{\rho_{\tilde R}(x')} h(x',u') \rmd u'} \rho_{\tilde R}(x')^{-1}. 
\end{align*}
Plugging this expression into \eqref{eq:uniq:one} yields: 
$$ 
\pi_1 G h =\pi_1(h) + \int_\Xset \pik{0} S(\rmd x')  \lr{\int_0^{\rho_{\tilde R}(x')}h(x',u') \rmd u'} \rho_{\tilde R}(x')^{-1}-\int_\Xset \pik{0}(\rmd x)  \lr{\int_0^{\rho_{\tilde R}(x)}h(x,u) \rmd u} \rho_{\tilde R}(x)^{-1}=\pi_1 h, 
$$
where we have used that $\pik{0} S=\pik{0}$. Hence $\pi_1$ is an
invariant measure for $G$. Since any invariant measure for $Q$ is
proportional to 
$\tilde \pi$, we deduce from
\eqref{eq:kernelG} that $G$ admits a unique invariant measure (up to a
multiplicative constant) proportional to $\tilde
\pi(\rmd x) \indi{[0,1]}(u) \rmd u$ and hence, $\tilde
\pi(\rmd x) \indi{[0,1]}(u) \rmd u \propto \pi_1$. Now, taking $h(x,n)=\indi{\ens{D}}(x,u) f(x)$ for any arbitrary $f\in \funcset{\Xset}{b+}$, we get, using \eqref{eq:defpi1},
\begin{align*}
    \int_\Xset \tilde \pi(\rmd x) \rho_{\tilde R}(x) f(x) &=\int_{\Xset \times [0,1]} \tilde \pi(\rmd x) \indi{[0,1]}(u) \rmd u\ h(x,u) \\
    &\propto \pi_1 h = \int_{\Xset \times [0,1]} \pik{0}(\rmd x) \rmd u\ \indi{[0,\rho_{\tilde R}(x)]}(u) \rho_{\tilde R}(x)^{-1}  \indi{\ens D}(x,u) f(x)=\pik{0} f.      
\end{align*}
Finally, there exists a constant $\gamma$ such that for any $f\in\funcset{\Xset}{b+}$,  
\begin{equation} \label{eq:uniq:three}
    \pik{0}f= \gamma \int_\Xset \tilde \pi(\rmd x) \rho_{\tilde R}(x) f(x). 
\end{equation}
Applying \eqref{eq:uniq:zero} with $h=\indi{}$ and using sucessively \eqref{eq:R}, \ref{hyp:unbiased} and the identity above, we get 
$$
1=\sum_{\ell=0}^\infty  \int_\Xset \pik{0}(\rmd x) R(x,\ell)  (\ell+1)=\sum_{k=1}^\infty  \int_\Xset \pik{0}(\rmd x) \frac{\tilde R(x,k)k} {\rho_{\tilde R}(x)}=\int_\Xset \pik{0}(\rmd x) \frac{\varrho_\kappa(x)} {\rho_{\tilde R}(x)}=\gamma \int_\Xset \tilde \pi(\rmd x) \varrho_\kappa(x)=\gamma \kappa. 
$$
Combined with \eqref{eq:uniq:three} we finally obtain $\pik{0}f= \kappa^{-1} \int_\Xset \tilde \pi(\rmd x) \rho_{\tilde R}(x) f(x)$ which proves \eqref{eq:uniq:two} and concludes the proof. 
\end{proof}

\subsection{A martingale weak convergence result with random indexes. }

\begin{theorem}
    \label{thm:martingale}
Let $(\Omega,\mcf, \PP)$ be a probability space and let $(\mcf_n)$ be a filtration on $\Omega$ such that $\mcf_n \subset \mcf$ for any $n\in\nsetzero$. 
Let $(M_n)$ be a square-integrable $(\mcf_n)$-martingale such that     
\[
\frac{M_n}{\sqrt{n}} \convlaw{\PP} G\,  \quad \mbox{and} \quad \frac{\EE\lrb{M_n^2}}{n} \rightarrow \sigma^2 
\]
and let $(k_n)$ be a sequence of random integers such that $\frac{k_n}{n}\convproba{\PP} \lambda \in (0,\infty)$. Then  
\[(n\lambda)^{-1/2} M_{k_n} \convlaw{\PP} G.\] 
\end{theorem}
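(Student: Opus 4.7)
The plan is to use a Slutsky-type decomposition around the deterministic index. Set $m_n := \floor{n\lambda}$ and write
\[
\frac{M_{k_n}}{\sqrt{n\lambda}} = \frac{M_{m_n}}{\sqrt{n\lambda}} + \frac{M_{k_n}-M_{m_n}}{\sqrt{n\lambda}}.
\]
I would show that the first summand converges in law to $G$, and that the second converges to $0$ in $\PP$-probability; Slutsky's lemma then yields the conclusion.

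For the first summand, I rewrite $M_{m_n}/\sqrt{n\lambda} = \bigl(M_{m_n}/\sqrt{m_n}\bigr)\cdot \sqrt{m_n/(n\lambda)}$. Since $m_n\to \infty$ and $m_n/(n\lambda)\to 1$, the assumption $M_n/\sqrt{n}\dlim G$ combined with Slutsky's lemma immediately gives $M_{m_n}/\sqrt{n\lambda}\dlim G$.

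For the second summand, I would use an Anscombe-type truncation. Fix $\epsilon>0$ and, for any $\delta>0$, bound
\[
\PP\!\left(\frac{\abs{M_{k_n}-M_{m_n}}}{\sqrt{n\lambda}}>\epsilon\right) \leq \PP(\abs{k_n-m_n}>\delta n) + \PP\!\left(\max_{\abs{j-m_n}\leq \delta n}\abs{M_j-M_{m_n}}>\epsilon\sqrt{n\lambda}\right).
\]
The first probability vanishes by the hypothesis $k_n/n\convproba{\PP} \lambda$. To handle the second, I would control forward and backward fluctuations simultaneously by applying Doob's $L^2$ maximal inequality to the shifted martingale $\bigl(M_{m_n-\floor{\delta n}+i}-M_{m_n-\floor{\delta n}}\bigr)_{i\geq 0}$ with respect to the shifted filtration, after using the triangle inequality
\[
\abs{M_j-M_{m_n}} \leq 2\max_{0\leq i\leq 2\floor{\delta n}}\abs{M_{m_n-\floor{\delta n}+i}-M_{m_n-\floor{\delta n}}},\quad \abs{j-m_n}\leq \delta n.
\]
Orthogonality of martingale increments then gives an upper bound of the form $16\bigl(\EE[M_{m_n+\floor{\delta n}}^2]-\EE[M_{m_n-\floor{\delta n}}^2]\bigr)/(\epsilon^2 n\lambda)$, which by the assumption $\EE[M_n^2]/n\to \sigma^2$ converges to $32\sigma^2\delta/(\epsilon^2\lambda)$. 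Since $\delta$ is arbitrary, the second summand converges to $0$ in probability.

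The main obstacle I anticipate is that Doob's maximal inequality is intrinsically forward-looking, whereas the random index $k_n$ may lie on either side of $m_n$. The trick above—shifting the reference time back to $m_n-\floor{\delta n}$ and invoking the triangle inequality—covers both the forward and backward excursions with a single maximal inequality, at the cost of a harmless factor of two, and is the key technical point of the argument.
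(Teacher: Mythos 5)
Your proof is correct, and its core engine is the same as the paper's: an Anscombe-type reduction to a deterministic index, with the gap between the random and deterministic indices controlled by Doob's $L^2$ maximal inequality and the hypothesis $\EE[M_n^2]/n\to\sigma^2$, which makes the fluctuation over a window of width $O(\delta n)$ contribute only $O(\delta)$ to the variance bound. The organization, however, is genuinely different. You center the comparison at $m_n=\floor{n\lambda}$, which forces you to control \emph{backward} excursions $j<m_n$; your fix — re-anchoring the maximal inequality at $m_n-\floor{\delta n}$ and paying a factor of two via the triangle inequality — is valid (modulo an immaterial off-by-one in the floor, and the exact Doob constant does not matter since $\delta$ is arbitrary). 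The payoff is that $M_{m_n}/\sqrt{n\lambda}\dlim G$ exactly, so plain Slutsky finishes the proof. The paper instead anchors at $\floor{n\lambda^-}$ with $\lambda^-<\lambda$, so only a \emph{forward} maximal inequality over $[\floor{n\lambda^-},\floor{n\lambda^+}]$ is needed; the price is that the comparison term converges to $(\lambda^-/\lambda)^{1/2}G$ rather than $G$, which is why the paper cannot invoke Slutsky at fixed $\lambda^\pm$ and instead runs a three-term characteristic-function estimate followed by the iterated limits $\lambda^+\searrow\lambda$, $\lambda^-\nearrow\lambda$, $\alpha\to 0$. Your version trades the one-sided window for a cleaner limit passage; both are complete proofs.
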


\begin{proof}
Let $\alpha, \lambda^-,\lambda^+$ be positive constants such that $\lambda^- <\lambda <\lambda^+$. 
Define 
\begin{align*}
&B_n=(n\lambda)^{-1/2} M_{k_n},\\
&C_n=(n\lambda)^{-1/2} M_{\floor{n \lambda^-}}, 
\end{align*}
and note that $C_n \weaklim{\PP} (\lambda^-/\lambda)^{1/2} G$.  
Then we have for any $u\in\rset$, 
\begin{align}
    \absm{\EE\lrbm{\rme^{iu B_n}}-\EE\lrbm{\rme^{iu G}}} &\leq\absm{\EE\lrbm{\rme^{iu B_n}}-\EE\lrbm{\rme^{iu C_n}}}+ \absm{\EE\lrbm{\rme^{iu C_n}}-\EE\lrbm{\rme^{iu G}}} \nonumber\\
    &\leq 2\Prob\lr{\absm{B_n - C_n}>\alpha} + \absm{\EE\lrbm{\absm{\rme^{iu B_n}-\rme^{iu C_n}} \indiacc{\absm{B_n -C_n}\leq \alpha}}}  + \absm{\EE\lrbm{\rme^{iu C_n}}-\EE{\rme^{iu G}}} \nonumber \\
    &\leq 2\Prob\lr{\absm{B_n - C_n}>\alpha} +  \sup_{\absm{\beta} \leq \alpha}\absm{\rme^{iu \beta}-1} + \absm{\EE\lrbm{\rme^{iu C_n}}-\EE\lrbm{\rme^{iu G}}}.  \label{eq:one}
\end{align}
Since $(M_i-M_{\floor{n\lambda^-}})_{i \geq \floor{n\lambda^-}}$ is a martingale and $x \to x^2$ is convex,  $\lr{(M_i-M_{\floor{n\lambda^-}})^2}_{i \geq \floor{n\lambda^-}}$ is a non-negative submartingale.  Then, the first term of the right hand side in \eqref{eq:one} may be bounded by applying Doob's maximal inequality to the non-negative submartingale $\lr{(M_i-M_{\floor{n\lambda^-}})^2}_{i \geq \floor{n\lambda^-}}$, 
\begin{align*}
    \Prob(\abs{B_n - C_n}>\alpha) & \leq \Prob\lr{k_n \notin \lrbm{n\lambda^-,n\lambda^+}} + \Prob\lr{\abs{B_n - C_n}>\alpha, k_n \in \lrbm{n\lambda^-,n\lambda^+}} \\
    &\leq \Prob(k_n \notin \lrbm{n\lambda^-,n\lambda^+}) + \Prob\lr{\sup_{i \in \lrbm{n\lambda^-:n\lambda^+}} \absm{M_i - M_{\floor{n \lambda^-}}} >(n \lambda)^{1/2} \alpha }\\
    & \leq \Prob(k_n \notin \lrbm{n\lambda^-,n\lambda^+}) + \frac{\EE\lrb{\lr{M_{\floor{n\lambda^+}} - M_{\floor{n \lambda^-}}}^2}}{n \lambda\alpha^2}\\
    & = \Prob(k_n \notin \lrbm{n\lambda^-,n\lambda^+}) + \frac{\sum_{k=\floor{n\lambda^-}}^{\floor{n\lambda^+}-1}\EE\lrb{\lr{M_{k+1} - M_{k}}^2}}{n \lambda\alpha^2}. 
\end{align*}
Note that since $(M_n)_{n\in\NN}$ is a square-integrable martingale, we have 
\[
    D_n=\frac{\EE[M_n^2]}{n} =\frac{\EE[M_0^2]+\sum_{k=0}^{n-1} \EE[(M_{k+1}-M_k)^2]}{n}, 
\]
and the previous bound writes: 
\[
    \Prob\lr{\abs{B_n - C_n}>\alpha} \leq \Prob(k_n \notin [n\lambda^-,n\lambda^+])+ \frac{\floor{n \lambda^+} D_{\floor{n \lambda^+}}-\floor{n \lambda^-} D_{\floor{n \lambda^-}}}{n \lambda\alpha^2}. 
\]
Finally letting $n$ go to infinity and using successively that $\frac{k_n}{n}\convproba{\PP} \lambda$, $D_n \convergesto{n}{\infty} \sigma^2$ and  $C_n \weaklim{\PP} (\frac{\lambda^-}{\lambda})^{1/2} G$, we obtain
\[
    \limsup_{n \rightarrow \infty} \absm{\EE[\rme^{iu B_n}]-\EE[\rme^{iu G}]} \leq 2\sigma^2 \frac{\lambda^+ - \lambda^-} {\lambda \alpha^2} +  \sup_{\absm{\beta} \leq \alpha}\absm{\rme^{iu \beta}-1} + \absm{\PE[\rme^{iu (\lambda^-/\lambda)^{1/2} G}]-\EE[\rme^{iu G}]}. 
\]
Letting $\lambda^+ \searrow \lambda$ and $\lambda^- \nearrow \lambda$, we get
\[
    \limsup_{n \to \infty}\absm{\PE[\rme^{iu B_n}]-\PE[\rme^{iu G}]} \leq   \sup_{\absm{\beta} \leq \alpha}\absm{\rme^{iu \beta}-1}, 
\] 
and letting $\alpha \to 0$, we finally obtain $\limsup_{n \to \infty}\absm{\PE[\rme^{iu B_n}]-\PE[\rme^{iu G}]}=0$. Therefore $B_n \convlaw{\PP} G$ which concludes the proof. 
\end{proof}

\subsection{Central Limit Theorem}
\label{sec:clt}

\subsubsection{Preliminary results}

Let $\bar Q$ be the Markov kernel on $(\Xset \times\nsetzero) \times (\Xsigma \otimes \mathcal{P}(\nsetzero))$ defined by 
\[
    \bar Q(x,n;\rmd x' \rmd n') = Q(x,\rmd x')\tilde R(x',\rmd n'), 
\]
and $\hat\pi$ be the probability measure on $\Xset\times\NN$ defined by 
    \[
        \hat\pi (\rmd x  \rmd n) = \tilde\pi(\rmd x)\tilde R(x,\rmd n). 
    \]
Let $S_n = \sum_{i=1}^n\tilde N_i$ and define $k_n = \max\lrc{k\in \NN^* : S_k \leq n}$ ensuring that $S_{k_n}\leq n<S_{k_n+1}$.

\begin{lemma}
    \label{lemma:invMeas:Qbar}
    Assume \ref{hyp:Qinv}. Then $\bar Q$ admits $\hat\pi$ as invariant probability measure.
\end{lemma}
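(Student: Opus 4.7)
The proof will be a direct computation showing $\hat\pi \bar Q = \hat\pi$ as measures on $\Xset \times \nsetzero$. The key observation is that the kernel $\bar Q(x,n;\rmd x'\rmd n')$ does not depend on $n$, so the first-component marginal of $\hat\pi$ (which is $\tilde\pi$) is all that matters on the input side, and invariance of $\tilde\pi$ under $Q$ (hypothesis \ref{hyp:Qinv}) will take care of propagating through $Q$.

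The plan is as follows. Let $h \in \funcset{\Xset\times\nsetzero}{b+}$. I would compute
\begin{align*}
\hat\pi \bar Q (h)
&= \int_{\Xset\times\nsetzero}\tilde\pi(\rmd x)\tilde R(x,\rmd n) \int_{\Xset\times\nsetzero} Q(x,\rmd x')\tilde R(x',\rmd n')\, h(x',n') \\
&= \int_\Xset \tilde\pi(\rmd x) \lr{\int_\nsetzero \tilde R(x,\rmd n)} \int_{\Xset\times\nsetzero} Q(x,\rmd x')\tilde R(x',\rmd n')\, h(x',n').
\end{align*}
Since $\tilde R(x,\cdot)$ is a probability measure on $\nsetzero$, the bracketed integral equals $1$, so this reduces to $\int_\Xset \tilde\pi(\rmd x) \int_{\Xset\times\nsetzero} Q(x,\rmd x') \tilde R(x',\rmd n') h(x',n')$. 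Swapping the order of integration in $x$ and $x'$ and applying \ref{hyp:Qinv} (i.e. $\tilde\pi Q = \tilde\pi$) yields $\int_{\Xset\times\nsetzero} \tilde\pi(\rmd x')\tilde R(x',\rmd n') h(x',n') = \hat\pi(h)$.

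There is essentially no obstacle here; the structure of $\bar Q$ is specifically designed to make $\hat\pi$ invariant. The only thing to be careful about is writing the Fubini/Tonelli exchanges cleanly (justified since $h$ is nonnegative and everything else is a probability kernel). A one-line remark at the end could note that $\hat\pi$ is indeed a probability measure because $\tilde\pi$ is and $\tilde R(x,\cdot)$ integrates to $1$ for every $x$.
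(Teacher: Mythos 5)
Your proposal is correct and matches the paper's proof: both integrate out the $n$-variable using that $\tilde R(x,\cdot)$ is a probability measure and then apply $\tilde\pi Q=\tilde\pi$ from \ref{hyp:Qinv}; the paper simply compresses these two steps into a single line for an indicator test function $\indi{\ens A}$ rather than a general $h\in\funcset{\Xset\times\nsetzero}{b+}$. No gaps.
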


\begin{proof}
    Let $\ens A \in \Xsigma \otimes \mathcal{P}(\nsetzero)$. Then
\begin{align*}
    \hat \pi \bar Q(\ens A) &= \int_{(\Xset \times \nsetzero)^2} \tilde \pi (\rmd x ) \tilde R (x,\rmd n) Q(x,\rmd x') \tilde R(x',\rmd n') \indi{\ens A}(x',n')\\
    &= \int_{\Xset\times \nsetzero}\tilde \pi (\rmd x' ) \tilde R (x',\rmd n')\indi{\ens A}(x',n') \\
    &= \hat \pi (\ens A). 
\end{align*} 
\end{proof}
    
\begin{lemma}
    \label{lemma:SLLN:Qbar}
    Assume \ref{hyp:Qinv} and \ref{hyp:SLLN of Q}. 
    Then, for every $\xip \in \measureset_1(\Xset \times \nsetzero)$ and measurable function $g: \Xset \times \nsetzero \rightarrow \rset$ such that $ \hat \pi( \abs{g}) < \infty$,
    \begin{equation}
        \lim_{n \rightarrow \infty} n^{-1} \sum_{k=0}^{n-1} g(\tilde X_k, \tilde N_k) =  \hat \pi(g), \quad \PP_\xip^{\bar Q}-a.s.
    \end{equation}
\end{lemma}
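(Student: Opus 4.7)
The plan is to mimic exactly the strategy used in the proof of Theorem~\ref{thm: SLLN of P}, reducing the SLLN to the triviality of bounded harmonic functions. By Lemma~\ref{lemma:invMeas:Qbar}, the kernel $\bar Q$ admits $\hat\pi$ as an invariant probability measure, so the equivalence in \cite[Proposition 3.5]{doucBoostYourFavorite2022} applies: the SLLN for $\bar Q$ starting from any initial distribution $\xip\in\measureset_1(\Xset\times\nsetzero)$ is equivalent to the statement that every bounded harmonic function for $\bar Q$ is constant.

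Thus it suffices to take an arbitrary bounded measurable function $\bar h:\Xset\times\nsetzero\to\rset$ satisfying $\bar Q\bar h=\bar h$ and show that $\bar h$ is constant. The first observation is that, by the very form of $\bar Q$,
\begin{equation*}
    \bar Q\bar h(x,n)=\int_{\Xset\times\nsetzero} Q(x,\rmd x')\tilde R(x',\rmd n')\bar h(x',n'),
\end{equation*}
so the right-hand side does not depend on $n$. Hence the harmonicity relation forces $\bar h(x,n)$ to depend only on $x$, and we can define a bounded measurable $h:\Xset\to\rset$ by $h(x)=\bar h(x,n)$ for any (equivalently, all) $n\in\nsetzero$.

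Next, integrating out $n'$ in the above display (since $h$ is independent of its second argument and $\tilde R(x',\cdot)$ is a probability measure), we obtain
\begin{equation*}
    h(x)=\bar h(x,0)=\bar Q\bar h(x,0)=\int_{\Xset} Q(x,\rmd x')h(x')=Qh(x),
\end{equation*}
so that $h$ is a bounded harmonic function for $Q$. By \ref{hyp:SLLN of Q} together with \cite[Proposition 3.5]{doucBoostYourFavorite2022} applied to $Q$ (exactly as in the proof of Theorem~\ref{thm: SLLN of P}), $h$ must be constant, and therefore so is $\bar h$. This yields the desired SLLN for $\bar Q$.

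There is no real obstacle here: the proof is essentially a straight transposition of the harmonic-function argument of Theorem~\ref{thm: SLLN of P}. The only mild point to double-check is that $\bar Q$ is indeed $\hat\pi$-invariant (which is the content of Lemma~\ref{lemma:invMeas:Qbar}) so that the equivalence between SLLN and triviality of bounded harmonic functions is applicable, and that the integrability condition $\hat\pi(|g|)<\infty$ in the statement matches the integrability condition required by \cite[Proposition 3.5]{doucBoostYourFavorite2022}.
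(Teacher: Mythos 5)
Your proof is correct and follows essentially the same route as the paper's: invoke the $\hat\pi$-invariance of $\bar Q$ from Lemma~\ref{lemma:invMeas:Qbar}, apply the equivalence of the SLLN with triviality of bounded harmonic functions, observe that $\bar Q\bar h(x,n)$ is independent of $n$ so any harmonic $\bar h$ reduces to a function $h$ of $x$ alone satisfying $Qh=h$, and conclude by \ref{hyp:SLLN of Q}. No substantive differences from the paper's argument.
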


\begin{proof}
    The proof follows that of \Cref{thm: SLLN of P} and also relies on \cite[Proposition 3.5]{doucBoostYourFavorite2022}. 
    Let $\bar h$ be a harmonic function for the kernel $\bar Q$, i.e. for all $(x,n)\in\Xset\times\NN$,
    \[\bar Q \bar h (x,n) = \bar h(x,n), \]
    and let us prove that $\bar h$ is constant. 
    For all $(x,n)\in\Xset\times\NN$,
    \begin{equation}
        \label{lemma:SLLN:Qbar:eq1}
        \bar h(x,n) = \bar Q \bar h(x,n) = \int_{\Xset \times \nsetzero} Q(x,\rmd x')\tilde R(x',\rmd n')\bar h(x',n'). 
    \end{equation}
    The integral on the right hand side of the equation above does not depend on $n$, therefore $\bar h$ is also independant of $n$ and we can write for all $(x,n)\in\Xset\times\NN$,
    \[\bar h (x,n) = \bar h(x,0) =: h_0(x).\]
    Then from \eqref{lemma:SLLN:Qbar:eq1} we have for all $x\in\Xset$,
    \begin{equation}
        \label{lemma:SLLN:Qbar:eq2}
        h_0(x) = \int_{\Xset}Q(x,\rmd x') h_0(x') = Qh_0(x). 
    \end{equation}
    which proves that $h_0$ is harmonic for $Q$. Using \cite[Proposition 3.5]{doucBoostYourFavorite2022}, we get that $h_0$ is constant. Therefore so is $\bar h$ and using \cite[Proposition 3.5]{doucBoostYourFavorite2022} again, we have completed the proof of the lemma. 
\end{proof}

\begin{lemma}
    \label{lm:knn:convprob}
    For all $\zeta \in\measureset_1(\Xset \times \nsetzero)$, $\frac{k_n}{n}\convproba{\PP^{\bar Q}_\zeta}\kappa^{-1}$.
\end{lemma}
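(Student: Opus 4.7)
The plan is to deduce this from the strong law of large numbers on $(\tilde X_k,\tilde N_k)$ applied to the coordinate function $g(x,n)=n$, followed by a standard sandwich argument for renewal-type counting processes.

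First, I would verify that $g(x,n)=n$ is $\hat\pi$-integrable. Using the definition of $\hat\pi$, the unbiasedness assumption \ref{hyp:unbiased}, and the definition of $\varrho_\kappa$ in \eqref{eq:def rho kappa}, one has
\[
\hat\pi(g)=\int_{\Xset}\tilde\pi(\rmd x)\int_{\nsetzero} n\,\tilde R(x,\rmd n)=\int_{\Xset}\tilde\pi(\rmd x)\varrho_\kappa(x)=\kappa\int_{\Xset}\tilde\pi(\rmd x)\frac{\rmd\pi}{\rmd\tilde\pi}(x)=\kappa<\infty.
\]
Applying \Cref{lemma:SLLN:Qbar} to this $g$ then gives $n^{-1}S_n = n^{-1}\sum_{i=1}^n \tilde N_i \to \kappa$, $\PP^{\bar Q}_\zeta$-almost surely.

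Next I would argue that $k_n\to\infty$ almost surely. Since $\kappa>0$, the above convergence implies $S_n\to\infty$. If $k_n$ were bounded along some subsequence by some integer $K$, then by monotonicity of $n\mapsto S_n$ we would have $S_{k_n+1}\leq S_{K+1}<\infty$ along that subsequence, contradicting the defining inequality $n<S_{k_n+1}$ as $n\to\infty$. Hence $k_n\to\infty$ a.s.

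Finally I would conclude by the sandwich $S_{k_n}\leq n<S_{k_n+1}$. Dividing by $k_n$ gives
\[
\frac{S_{k_n}}{k_n}\leq\frac{n}{k_n}<\frac{S_{k_n+1}}{k_n+1}\cdot\frac{k_n+1}{k_n}.
\]
Since $k_n\to\infty$ a.s., both outer terms converge a.s. to $\kappa$ by the LLN established above, so $n/k_n\to\kappa$ and therefore $k_n/n\to\kappa^{-1}$, $\PP^{\bar Q}_\zeta$-almost surely; a.s.\ convergence implies the stated convergence in probability. There is no real obstacle here; the only point requiring a little care is verifying that $k_n\to\infty$, which rests on $\kappa>0$ (which itself uses \ref{hyp:unbiased} together with the fact that $\pi$ and $\tilde\pi$ are probability measures so that $\int\varrho_\kappa\,\rmd\tilde\pi=\kappa$).
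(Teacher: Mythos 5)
Your proof is correct, and it in fact establishes something slightly stronger than the statement: almost-sure convergence of $k_n/n$ to $\kappa^{-1}$ rather than mere convergence in probability. Both your argument and the paper's rest on the same key ingredient, namely \Cref{lemma:SLLN:Qbar} applied to $g(x,n)=n$ (under \ref{hyp:Qinv} and \ref{hyp:SLLN of Q}), which gives $S_n/n\to\kappa$ $\PP^{\bar Q}_\zeta$-a.s.; the difference lies in how this is inverted into a statement about $k_n$. The paper uses the duality $\{k_n\geq \beta n\}=\{S_{\floor{\beta n}}\leq n\}$ for fixed $\beta>\kappa^{-1}$ (and its mirror image for $\beta<\kappa^{-1}$) to drive the tail probabilities of $k_n/n$ to zero directly; this keeps every index deterministic, so it needs neither $k_n\to\infty$ nor any discussion of the LLN along random subsequences, at the price of delivering only convergence in probability --- which is all that the CLT proof downstream requires. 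Your renewal-type sandwich $S_{k_n}\leq n<S_{k_n+1}$ requires the extra (easy, and correctly handled) step of checking $k_n\to\infty$ a.s., but yields the a.s.\ limit. One very minor point, which the paper's proof also glosses over: for small $n$ the set $\{k\in\NN^*: S_k\leq n\}$ can be empty (when $\tilde N_1>n$), so $k_n$ is only well defined once $n\geq \tilde N_1$; since $\tilde N_1<\infty$ a.s.\ this is harmless for the asymptotics, but a sentence acknowledging it would make your write-up airtight.
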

\begin{proof}
Let $\beta>\kappa^{-1}$, we will prove that $\Prob^{\bar Q}_\zeta\lr{\frac{k_n}{n}\geq\beta}\convergesto{n}{\infty}0$. From the definition of $k_n$ we have that $\lrcb{k_n\geq\beta n}=\lrcb{\sum^{\floor{\beta n}}_{i=1}\tilde N_i\leq n}$, hence
    \begin{align*}
        \Prob^{\bar Q}_\zeta\lr{\frac{k_n}{n}\geq\beta}
        = \Prob^{\bar Q}_\zeta\lr{\frac{1}{\floor{\beta n}}\sum^{\floor{\beta n}}_{i=1}\tilde N_i\leq \frac{n}{\floor{\beta n}}},
    \end{align*}
    which converges to 0 as $n$ goes to infinity since by applying \Cref{lemma:SLLN:Qbar} with $g(x,\ell)=\ell$ we have
    \begin{equation*}
        \frac{1}{\floor{\beta n}}\sum_{i=1}^{\floor{\beta n}}\tilde N_i \convaszeta \int_{\Xset \times \nsetzero}\tilde\pi(\rmd x)\tilde R(x,\rmd\ell)\ell = \int_{\Xset}\tilde\pi(\rmd x)\varrho_\kappa(x) = \kappa > \beta^{-1}.
    \end{equation*}
    Similarly we prove that for any $\beta<\kappa^{-1}$,
    \[\Prob^{\bar Q}_\zeta\lr{\frac{k_n}{n}<\beta}\convergesto{n}{\infty}0,\]
    by using that $\lrcb{k_n<\beta n}=\lrcb{\sum^{\floor{\beta n}}_{i=1}\tilde N_i> n}$, which completes the proof. 
\end{proof}

\begin{lemma}
    \label{lemma:convProba0:Qbar}
    Assume \ref{hyp:Qinv} and \ref{hyp:SLLN of Q}.
    Let $\zeta\in\measureset_1(\Xset \times \nsetzero)$, $f:\Xset\times\NN\longrightarrow\rset$ be a measurable function, and $(k_n)_n\in\NN^{\NN}$ be a sequence of random variables such that $\frac{k_n}{n}\convproba{\PP^{\bar Q}_\zeta}\kappa^{-1}$ and $\hat\pi f^2<\infty$.  
    Then, 
    \[\frac{f(\tilde X_{k_n}, \tilde N_{k_n})}{\sqrt{n}}\convproba{\PP^{\bar Q}_\zeta}0.\]
\end{lemma}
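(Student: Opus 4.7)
My plan is to first establish a pointwise decay along the full sequence, then transfer it to the random index $k_n$ by a standard truncation argument.

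The starting point is to apply Lemma~\ref{lemma:SLLN:Qbar} to the function $g=f^2$, which is allowed since $\hat\pi f^2<\infty$. This gives
\[
\frac{1}{n}\sum_{i=1}^n f^2(\tilde X_i,\tilde N_i)\ \convaszeta\ \hat\pi f^2\eqsp.
\]
A standard consequence of the convergence of Cesàro averages is that individual terms divided by $n$ must vanish: writing $T_n=\sum_{i=1}^n f^2(\tilde X_i,\tilde N_i)$, we have $\frac{f^2(\tilde X_n,\tilde N_n)}{n}=\frac{T_n}{n}-\frac{n-1}{n}\frac{T_{n-1}}{n-1}$, which tends to $\hat\pi f^2-\hat\pi f^2=0$, $\PP^{\bar Q}_\zeta$-a.s. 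Equivalently, $\frac{|f(\tilde X_n,\tilde N_n)|}{\sqrt{n}}\to 0$, $\PP^{\bar Q}_\zeta$-a.s.

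Next, I will pass from deterministic to random indexing. Since $k_n/n\to\kappa^{-1}>0$ in probability by Lemma~\ref{lm:knn:convprob}, we have in particular $k_n\to\infty$ in $\PP^{\bar Q}_\zeta$-probability. I will decompose
\[
\frac{f^2(\tilde X_{k_n},\tilde N_{k_n})}{n}=\frac{k_n}{n}\cdot\frac{f^2(\tilde X_{k_n},\tilde N_{k_n})}{k_n}\eqsp.
\]
For the second factor, fix $\epsilon,\delta>0$. By the a.s.\ convergence of $f^2(\tilde X_k,\tilde N_k)/k$ to $0$, there exists $N$ such that
$\PP^{\bar Q}_\zeta\big(\sup_{k\geq N}f^2(\tilde X_k,\tilde N_k)/k\geq\epsilon\big)<\delta/2$, and by $k_n\to\infty$ in probability there exists $n_0$ such that $\PP^{\bar Q}_\zeta(k_n<N)<\delta/2$ for all $n\geq n_0$. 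A union bound on the events $\{k_n<N\}$ and $\{k_n\geq N\}$ yields $\PP^{\bar Q}_\zeta\big(f^2(\tilde X_{k_n},\tilde N_{k_n})/k_n\geq\epsilon\big)<\delta$ for $n\geq n_0$, so $f^2(\tilde X_{k_n},\tilde N_{k_n})/k_n\to 0$ in $\PP^{\bar Q}_\zeta$-probability.

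Combining this with $k_n/n\to\kappa^{-1}$ in probability (Slutsky), the product above tends to $0$ in probability, hence $|f(\tilde X_{k_n},\tilde N_{k_n})|/\sqrt{n}\to 0$ in $\PP^{\bar Q}_\zeta$-probability. The only mildly delicate point is the truncation step handling the random index $k_n$, but it is routine once the a.s.\ decay along the deterministic index has been established.
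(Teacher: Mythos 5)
Your proof is correct, and it takes a genuinely different route from the paper's, although both arguments hinge on applying Lemma~\ref{lemma:SLLN:Qbar} to $g=f^2$. You first extract the almost-sure decay $f^2(\tilde X_n,\tilde N_n)/n\to 0$ along the deterministic index by differencing the Cesàro averages, and then transfer it to the random index via the $\sup_{k\geq N}$ truncation; this step only uses $k_n\to\infty$ in probability, the precise limit $\kappa^{-1}$ entering solely through the final Slutsky product (any positive limit would do). The paper instead estimates $\PP^{\bar Q}_\zeta\lr{f^2(\tilde X_{k_n},\tilde N_{k_n})>\epsilon^2 n}$ directly: on the event $k_n/n\in[\alpha,\beta]$ with $\alpha<\kappa^{-1}<\beta$, the single nonnegative term is dominated by the block sum $\sum_{i=\lfloor n\alpha\rfloor}^{\lfloor n\beta\rfloor}f^2(\tilde X_i,\tilde N_i)$, whose normalized limit $\hat\pi(f^2)(\beta-\alpha)$ is forced below $\epsilon^2$ by taking the window narrow. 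Your version is somewhat more modular — the a.s.\ statement along the deterministic index and the random-index substitution are each reusable standard facts — while the paper's window bound is a one-shot estimate that never needs almost-sure control of a supremum and makes explicit use of the nonnegativity of $f^2$ to dominate a single term by a block sum. Both are equally elementary and equally rigorous; the only points worth double-checking in yours (the measurability and monotone shrinking of $\sup_{k\geq N}$, and $k_n\geq 1$ so the division by $k_n$ is licit) all go through.
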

    
\begin{proof}
Let $\epsilon>0$, we will prove that 
\[\Prob^{\bar Q}_\zeta\lr{g(\tilde X_{k_n}, \tilde N_{k_n})>\epsilon^2n}\convergesto{n}{\infty}0,\]
where $g=f^2$. Let $\alpha,\beta\in\rset^+$ be two nonnegative numbers such that $\alpha<\kappa^{-1}<\beta$ and $\beta-\alpha<\frac{\epsilon^2}{\hat\pi(g)}$. 
\begin{align*}
    \Prob^{\bar Q}_\zeta\lr{g(\tilde X_{k_n}, \tilde N_{k_n})>\epsilon^2n} &\leq \Prob^{\bar Q}_\zeta\lr{\frac{k_n}{n}\notin\lrb{\alpha,\beta}} +
    \Prob^{\bar Q}_\zeta\lr{g(\tilde X_{k_n}, \tilde N_{k_n})>\epsilon^2n, \frac{k_n}{n}\in\lrb{\alpha,\beta}}.  
\end{align*}
From $\frac{k_n}{n}\convproba{\PP^{\bar Q}_\zeta}\kappa^{-1}$ we get $\limsup_n\Prob^{\bar Q}_\zeta\lr{\frac{k_n}{n}\notin\lrb{\alpha,\beta}}=0$ and therefore,
\[\limsup_n\Prob^{\bar Q}_\zeta\lr{g(\tilde X_{k_n}, \tilde N_{k_n})>\epsilon^2n} \leq \limsup_n\Prob^{\bar Q}_\zeta\lr{g(\tilde X_{k_n}, \tilde N_{k_n})>\epsilon^2n, \frac{k_n}{n}\in\lrb{\alpha,\beta}}.\]
Defining $A_n = \frac{1}{n}\sum_{i=1}^{n}g(\tilde X_i, \tilde N_i)$, we have
    \begin{align*}
        \Prob^{\bar Q}_\zeta\lr{g(\tilde X_{k_n}, \tilde N_{k_n})>\epsilon^2n, \frac{k_n}{n}\in\lrb{\alpha,\beta}} &\leq \Prob^{\bar Q}_\zeta\lr{\sum^{\floor{n\beta}}_{i=\floor{n\alpha}}g(\tilde X_i, \tilde N_i)>\epsilon^2n} \\
        &= \Prob^{\bar Q}_\zeta\lr{\frac{\floor{n\beta}}{n} A_{\floor{n\beta}}-\frac{\floor{n\alpha}}{n}A_{\floor{n\alpha}}>\epsilon^2}\convergesto{n}{\infty}0, 
    \end{align*}
since $A_n\convaszeta\hat\pi g$ by \Cref{lemma:SLLN:Qbar} and $\hat\pi(g)(\beta-\alpha)<\epsilon^2$. This concludes the proof.
\end{proof}   

\begin{lemma} \label{lm:martingaleExpr}
    Let $(Y_k)_k$ be a Markov chain on $\Yset$ generated by a kernel $T$ on $\Yset\times\Ysigma$ with invariant measure $\mu$. 
    Assume that there exists a measurable function $j:\Yset\rightarrow\rset$ such that the Poisson equation on $\Yset$ for the kernel $T$ associated to the function $j$ admits a solution $J$, i.e. for all $y\in\Yset$
    \[J(y) - TJ(y) = j(y) - \mu(j).\]
    Then, considering the filtration $\Fcal_n = \sigma(Y_{0:n})$,
    \[\sum_{i=0}^{n-1}j(Y_i) - \mu(j) = M_n - J(Y_n) + J(Y_0) \label{clt:eq:Mn},\]
    where $M_n := \sum_{i=1}^{n}J(Y_i) - TJ(Y_{i-1})$ is a $\Fcal_n$-martingale.
\end{lemma}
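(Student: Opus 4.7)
The plan is to combine the Poisson equation with a telescoping identity, splitting off boundary terms and then reindexing so that what remains is exactly the claimed martingale. I will then verify the martingale property using the Markov property.

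First, since $J$ solves the Poisson equation, for every $i\geq 0$,
\[
j(Y_i)-\mu(j)=J(Y_i)-TJ(Y_i).
\]
Summing from $i=0$ to $n-1$ gives
\[
\sum_{i=0}^{n-1}\bigl(j(Y_i)-\mu(j)\bigr)=\sum_{i=0}^{n-1}J(Y_i)-\sum_{i=0}^{n-1}TJ(Y_i).
\]

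Next I would telescope. Adding and subtracting $\sum_{i=1}^{n}J(Y_i)$ isolates the boundary terms $J(Y_0)$ and $J(Y_n)$:
\[
\sum_{i=0}^{n-1}J(Y_i)-\sum_{i=0}^{n-1}TJ(Y_i) = J(Y_0)-J(Y_n)+\sum_{i=1}^{n}J(Y_i)-\sum_{i=0}^{n-1}TJ(Y_i).
\]
Reindexing the second sum by $i\mapsto i-1$, the two remaining sums combine into $\sum_{i=1}^{n}\bigl(J(Y_i)-TJ(Y_{i-1})\bigr)=M_n$. This yields the claimed identity
\[
\sum_{i=0}^{n-1}\bigl(j(Y_i)-\mu(j)\bigr)=M_n-J(Y_n)+J(Y_0).
\]

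Finally I verify that $(M_n)$ is an $(\Fcal_n)$-martingale. Adaptedness is immediate since each summand $J(Y_i)-TJ(Y_{i-1})$ is $\Fcal_i$-measurable. For integrability, the hypothesis that $J$ solves the Poisson equation is understood to include $J(Y_i)\in L^1$ (otherwise $TJ$ is not well-defined), so $M_n\in L^1$. For the martingale increment, the Markov property yields
\[
\EE\bigl[J(Y_{i})-TJ(Y_{i-1})\,\bigl|\,\Fcal_{i-1}\bigr]=TJ(Y_{i-1})-TJ(Y_{i-1})=0,
\]
so $\EE[M_n\mid\Fcal_{n-1}]=M_{n-1}$, concluding the proof.

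No step is particularly delicate: this is a purely algebraic rearrangement plus a one-line application of the Markov property. The only thing worth being careful about is the reindexing in the telescoping step, which is where off-by-one errors tend to creep in.
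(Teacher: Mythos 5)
Your proof is correct and follows essentially the same route as the paper: apply the Poisson equation pointwise, sum, and reindex/telescope to isolate the boundary terms and the martingale. You are in fact slightly more careful than the paper's own one-line computation (whose displayed boundary term reads $J(Y_1)$ where it should be $J(Y_0)$), and you additionally verify the martingale property via the Markov property, which the paper leaves implicit.
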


\begin{proof} A simple index shift gives that
\[\sum_{i=0}^{n-1}j(Y_i) - \mu(j) = \sum_{i=1}^{n-1}J(Y_i) - TJ(Y_i) 
= \sum_{i=1}^{n}J(Y_i) - TJ(Y_{i-1}) - J(Y_n) + J(Y_1) = M_n - J(Y_n) + J(Y_1).\]
\end{proof}

Let us state as a reminder the following theorem from \cite{doucLimitTheoremsWeighted2008} around which the proof of the following lemma will revolve.

\begin{theorem}[Theorem A.3. of \cite{doucMoulines08}]
    \label{thm:doucmoulines_clt}
    Let $(\Omega,\mcf, \PP)$ be a probability space and let $(\mcf_{n,i})_{i\leq n}$ be a filtration on $\Omega$.
    Assume $\espcond{U_{n,i}^2}{\Fcal_{n,i-1}}<\infty$ for any $n\in\nsetzero$ and any $i=1, \dots,n$, and 
    \begin{align*}
    \label{hyp:doucmoulines_clt_1}
    &\sum_{i=1}^n \left( \espcond{U_{n,i}^2}{\Fcal_{n,i-1}} - \espcond{U_{n,i}}{\Fcal_{n,i-1}}^2 \right) \longrightarrow \sigma^2 \tag{H3} &\text{for some $\sigma>0$}\\
    \label{hyp:doucmoulines_clt_2}
    &\sum_{i=1}^n \espcond{U_{n,i}^2  \indi{\{ \abs{U_{n,i}} \geq \varepsilon \} } }{\Fcal_{n,i-1}}\longrightarrow 0 &\text{for any $\varepsilon>0$} \tag{H4}
    \end{align*}
    Then, for any real $u$, 

    \begin{equation*}
        \espcond{ \exp \left( iu   \sum_{i=1}^n (U_{n,i} - \espcond{U_{n,i}}{\Fcal_{n,i-1}})\right)}{\Fcal_{n,0}} \longrightarrow \exp(-(u^2/2) \sigma^2). 
    \end{equation*}
\end{theorem}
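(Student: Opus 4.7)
The plan is to reduce the statement to the standard triangular-array martingale CLT by a characteristic-function telescoping argument. First, set $\xi_{n,i} := U_{n,i} - \espcond{U_{n,i}}{\Fcal_{n,i-1}}$, so that $(\xi_{n,i})_{1\leq i\leq n}$ is a martingale difference array with respect to $(\Fcal_{n,i})$. The conditional variances read $\espcond{\xi_{n,i}^2}{\Fcal_{n,i-1}} = \espcond{U_{n,i}^2}{\Fcal_{n,i-1}} - \espcond{U_{n,i}}{\Fcal_{n,i-1}}^2$, so that assumption (H3) becomes exactly $V_n := \sum_{i=1}^n\espcond{\xi_{n,i}^2}{\Fcal_{n,i-1}} \longrightarrow \sigma^2$. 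The problem is therefore reduced to a conditional CLT for the martingale $S_n := \sum_{i=1}^n \xi_{n,i}$.

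Next I would derive the conditional Lindeberg condition for $(\xi_{n,i})$ from (H4). Using the pointwise bound $\xi_{n,i}^2 \leq 2U_{n,i}^2 + 2\espcond{U_{n,i}}{\Fcal_{n,i-1}}^2$ together with the inclusion $\{|\xi_{n,i}|\geq \varepsilon\}\subset \{|U_{n,i}|\geq \varepsilon/2\}\cup \{|\espcond{U_{n,i}}{\Fcal_{n,i-1}}|\geq \varepsilon/2\}$, the first piece is controlled directly by (H4). The second piece is handled by showing that the $\Fcal_{n,i-1}$-measurable centering terms are uniformly small, which again follows from (H4) via Jensen's inequality applied to the conditional expectation.

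The core of the argument is then a characteristic-function telescoping. Using the Taylor expansion $\rme^{iu x} = 1 + iu x - \tfrac{1}{2}u^2 x^2 + r(x)$ with $|r(x)| \leq (|ux|^3/6)\wedge u^2 x^2$, I would condition successively on $\Fcal_{n,i-1}$: the linear piece vanishes by the martingale property, the quadratic piece contributes a factor of $\rme^{-u^2 V_n/2}$, and the remainder $r(\xi_{n,i})$ is split according to whether $|\xi_{n,i}|\geq \varepsilon$ to exploit Lindeberg on the tail and the cubic bound on the bulk. Iterating a one-step inequality of the form
$$\lrav{\espcond{\rme^{iu\xi_{n,i}}}{\Fcal_{n,i-1}} - \rme^{-u^2 \espcond{\xi_{n,i}^2}{\Fcal_{n,i-1}}/2}} \leq C(u)\,\espcond{\xi_{n,i}^2\indiacc{|\xi_{n,i}|\geq \varepsilon}}{\Fcal_{n,i-1}} + C(u,\varepsilon)\espcond{\xi_{n,i}^2}{\Fcal_{n,i-1}}^2,$$
the conditional characteristic function $\espcond{\rme^{iu S_n}}{\Fcal_{n,0}}$ is shown to converge to $\rme^{-u^2\sigma^2/2}$, combining (H3) for the $V_n$-term and the conditional Lindeberg estimate for the accumulated errors.

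The main obstacle will be the careful bookkeeping in the telescoping step: one must control the compounding of the one-step errors uniformly in $i$, despite the randomness of $V_n$ and of each $\espcond{\xi_{n,i}^2}{\Fcal_{n,i-1}}$, while preserving the $\Fcal_{n,0}$-conditioning throughout the nested expectations. This is precisely the mechanism underlying Brown's and McLeish's martingale CLT, and once the two hypotheses have been reformulated in terms of $(\xi_{n,i})$ as above, the conditional version stated here is obtained by carrying the $\Fcal_{n,0}$-conditioning through every step of the characteristic-function argument.
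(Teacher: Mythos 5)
The paper does not actually prove this statement: it is quoted verbatim as ``Theorem 13 of Douc--Moulines'' and used as an imported ingredient, so there is no in-paper proof to compare against. Your sketch reconstructs the standard conditional Lindeberg martingale CLT (McLeish/Brown, Hall--Heyde Thm.~3.2), which is exactly the mechanism behind the cited result, and the overall plan --- center to get the martingale difference array $\xi_{n,i}$, identify (H3) with the sum of conditional variances of $\xi_{n,i}$, transfer (H4) into a conditional Lindeberg condition for $\xi_{n,i}$, then telescope the conditional characteristic function --- is correct.

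Two steps are thinner than you acknowledge. First, to pass from (H4) for $U_{n,i}$ to Lindeberg for $\xi_{n,i}$ you need $\max_{i\leq n}\lrav{\espcond{U_{n,i}}{\Fcal_{n,i-1}}}\to 0$; Jensen alone does not give this. You must split $\lrav{U_{n,i}}$ at level $\varepsilon$, giving $\lrav{\espcond{U_{n,i}}{\Fcal_{n,i-1}}}\leq \varepsilon+\varepsilon^{-1}\espcond{U_{n,i}^2\indi{\{\lvert U_{n,i}\rvert\geq\varepsilon\}}}{\Fcal_{n,i-1}}$, and then invoke (H4) on the second term; only after that does the inclusion $\{\lvert\xi_{n,i}\rvert\geq\varepsilon\}\subset\{\lvert U_{n,i}\rvert\geq\varepsilon/2\}$ hold on an event of probability tending to one. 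Second, the ``bookkeeping obstacle'' you flag in the telescoping is a genuine issue, not just bookkeeping: since (H3) is only convergence in probability, the accumulated conditional variance $V_n$ (and the compensating factor $\rme^{u^2V_n/2}$ in the telescoped product) is not bounded, and the one-step errors cannot be summed directly. The standard fix is to stop the array at $\tau_n=\max\{k\leq n:\sum_{i\leq k}\espcond{\xi_{n,i}^2}{\Fcal_{n,i-1}}\leq\sigma^2+1\}$ (or to truncate the increments), prove the CLT for the stopped array, and note that $\PP(\tau_n<n)\to 0$. With these two repairs your argument goes through and coincides with the proof in the cited reference.
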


\begin{lemma} \label{lem:martingaleTCL}
    Let $(Y_k)_k$ be a Markov chain on $\Yset$ with kernel $T$ on $\Yset\times\Ysigma$ admitting an invariant probability measure $\mu$. Assume that for every $\nu\in\measureset_1(\Yset)$ and any measurable function $g:\Yset\rightarrow\rset$ such that $\mu(\abs{g})<\infty$,
    \begin{equation} \label{eq:martingale:LLN}
        \lim_{n \rightarrow \infty} n^{-1} \sum_{k=0}^{n-1} g(Y_k) = \mu(g), \quad \PP_\nu^T-a.s.
    \end{equation} 
    Let $J:\Yset\rightarrow\rset$ be a measurable function such that $\mu(J^2)<\infty$. Consider the filtration $\Fcal_n = \sigma(Y_{0:n})$ and the $\Fcal_n$-martingale $M_n = \sum_{i=1}^{n}J(Y_i) - TJ(Y_{i-1}) = \sum_{i=1}^{n}\Delta M_i$. Then,
    \[n^{-1/2}M_n\convlaw{\Prob^M_\nu}\Ncal(0,\sigma_J^2(h)),\]
    with $\sigma_J^2(h)=\EE^T_\mu\lrb{\Delta M_1^2}$.
\end{lemma}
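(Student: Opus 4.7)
The strategy is to apply the Douc--Moulines Theorem~\ref{thm:doucmoulines_clt} directly to the triangular array $U_{n,i} = n^{-1/2}\Delta M_i$ with the filtration $\mcf_{n,i} = \Fcal_i$. By the martingale property, $\EE[\Delta M_i \mid \Fcal_{i-1}] = \EE[J(Y_i)\mid \Fcal_{i-1}] - TJ(Y_{i-1}) = 0$, so \eqref{hyp:doucmoulines_clt_1} reduces to showing
\[
\frac{1}{n}\sum_{i=1}^{n} \EE\lrb{\Delta M_i^2 \mid \Fcal_{i-1}} \longrightarrow \sigma_J^2(h), \qquad \text{in }\Prob_\nu^T\text{-probability.}
\]
Setting $\phi(y) = T(J^2)(y) - (TJ)^2(y) \geq 0$, this conditional second moment equals $\phi(Y_{i-1})$. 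Since $\mu$ is $T$-invariant and $\mu(J^2)<\infty$, Jensen gives $\mu(\phi) \leq \mu(T J^2) = \mu(J^2) <\infty$, so the LLN assumption \eqref{eq:martingale:LLN} applies and yields $n^{-1}\sum_{i=1}^n \phi(Y_{i-1}) \to \mu(\phi) = \EE_\mu^T[\Delta M_1^2] = \sigma_J^2(h)$, $\Prob_\nu^T$-almost surely.

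The Lindeberg condition \eqref{hyp:doucmoulines_clt_2} is the main obstacle, because the truncation threshold $\epsilon\sqrt{n}$ depends on $n$, preventing a direct use of \eqref{eq:martingale:LLN}. The plan is to decouple $n$ via a monotone truncation argument: define
\[
\psi_K(y) \eqdef \int T(y,\rmd y')\, \lr{J(y') - TJ(y)}^2 \indiacc{|J(y') - TJ(y)| \geq \epsilon \sqrt{K}},
\]
so that $\EE[\Delta M_i^2 \indiacc{|\Delta M_i|\geq \epsilon\sqrt{n}} \mid \Fcal_{i-1}] = \psi_n(Y_{i-1}) \leq \psi_K(Y_{i-1})$ for every $n \geq K$, and $\psi_K \leq \phi$, $\mu$-a.e. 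By dominated convergence applied under $\Prob_\mu^T$, $\mu(\psi_K) = \EE_\mu^T[\Delta M_1^2 \indiacc{|\Delta M_1|\geq \epsilon\sqrt{K}}] \to 0$ as $K\to\infty$. Fixing $\eta>0$, applying \eqref{eq:martingale:LLN} to $\psi_K$ and choosing $K$ such that $\mu(\psi_K)<\eta$ gives
\[
\limsup_{n\to\infty}\Prob_\nu^T\!\lr{\frac{1}{n}\sum_{i=1}^n \psi_n(Y_{i-1}) > \eta} \leq \limsup_{n\to\infty}\Prob_\nu^T\!\lr{\frac{1}{n}\sum_{i=1}^n \psi_K(Y_{i-1}) > \eta} = 0,
\]
which is the Lindeberg condition.

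With both hypotheses of Theorem~\ref{thm:doucmoulines_clt} verified, one obtains $\EE_\nu^T[\exp(\rmi u M_n/\sqrt{n}) \mid \Fcal_0] \to \exp(-u^2 \sigma_J^2(h)/2)$ in probability. Since this limit is deterministic and the conditional characteristic function is bounded by $1$, dominated convergence yields convergence of the unconditional characteristic function, giving $n^{-1/2} M_n \convlaw{\Prob_\nu^T} \Ncal(0,\sigma_J^2(h))$ and concluding the proof. The only genuinely delicate point is the Lindeberg step above; the rest is essentially bookkeeping around the martingale structure.
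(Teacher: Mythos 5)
Your proposal is correct and follows essentially the same route as the paper: both verify the two hypotheses of Theorem~\ref{thm:doucmoulines_clt} for the array $U_{n,i}=n^{-1/2}\Delta M_i$, handle \eqref{hyp:doucmoulines_clt_1} by applying the LLN to the conditional second moment $y\mapsto\EE^T_y[\Delta M_1^2]$, and treat the Lindeberg condition by bounding the $n$-dependent truncation by a fixed-level one, applying the LLN at that level, and letting the level tend to infinity. Your write-up is if anything slightly more explicit (integrability of $\phi$ via Jensen, and the passage from conditional to unconditional characteristic functions), but the argument is the same.
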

\begin{proof} 
Define $U_{n,i} = \frac{\Delta M_i}{\sqrt{n}}$ for any $n\geq i\geq 1$ and $\Fcal_{n,i} = \Fcal_{i} =\sigma(Y_{0:i})$ for any $i,n\in\NN$. Let us verify the hypotheses of \Cref{thm:doucmoulines_clt} :
\begin{itemize}
    \item For \eqref{hyp:doucmoulines_clt_1}, we have 
    \begin{align*}
        \sum_{i=1}^n \lr{\espcondmkv{\nu}{T}{U_{n,i}^2}{\Fcal_{n,i-1}} - \espcondmkv{\nu}{T}{U_{n,i}}{\Fcal_{n,i-1}}^2} &= \frac{1}{n}\sum_{i=1}^n \espcondmkv{\nu}{T}{\Delta M_i^2}{\Fcal_{i-1}} \\
        &= \frac{1}{n}\sum_{i=1}^n \EE^{T}_{Y_{i-1}}\lrb{\Delta M_1^2} \convergesto{n}{\infty}\EE^{T}_{\mu}\lrb{\Delta M_1^2},
    \end{align*}
    where the limit is obtained from \eqref{eq:martingale:LLN} with the function $g:y\mapsto\EE^{T}_{y}\lrb{\Delta M_1^2}$.
    \item For \eqref{hyp:doucmoulines_clt_2}, let $A>0$ be a positive integer,
    \begin{align*}
        \sum_{i=1}^n \espcondmkv{\nu}{T}{U_{n,i}^2  \indi{\lrcb{\abs{U_{n,i}} \geq \varepsilon}}}{\Fcal_{n,i-1}} &= \frac{1}{n}\sum_{i=1}^n \espcondmkv{\nu}{T}{\Delta M_i^2\indi{\lrcb{\abs{\Delta M_i} \geq \varepsilon \sqrt{n}}}}{\Fcal_{i-1}} \\
        &\leq \frac{1}{n}\sum_{i=1}^n \espcondmkv{\nu}{T}{\Delta M_i^2\indi{\lrcb{\abs{\Delta M_i} \geq A}}}{\Fcal_{i-1}},
    \end{align*}
    for large enough values of $n$. Then, the Markov property gives us that 
    \[\frac{1}{n}\sum_{i=1}^n \espcondmkv{\nu}{T}{\Delta M_i^2\indi{\lrcb{\abs{\Delta M_i} \geq A}}}{\Fcal_{i-1}} = \frac{1}{n}\sum_{i=1}^n \EE^{T}_{Y_{i-1}}\lrbm{\Delta M_1^2\indi{\lrcb{\abs{\Delta M_1} \geq A}}}.\]
    Applying \eqref{eq:martingale:LLN} on the right hand side with $g:y\mapsto\EE^{T}_{y}\lrbm{\Delta M_1^2\indi{\lrcb{\abs{\Delta M_1} \geq A}}}$ gives that 
    \begin{align*}
        \frac{1}{n}\sum_{i=1}^n \EE^{T}_{Y_{i-1}}\lrbm{\Delta M_1^2\indi{\lrcb{\abs{\Delta M_1} \geq A}}}\convasbis \mu g = \EE^{T}_{\mu}\lrbm{\Delta M_1^2\indi{\lrcb{\abs{\Delta M_1} \geq A}}}.
    \end{align*}
    Now let $A\longrightarrow\infty$, $(\EE^{T}_{\mu}\lrbm{\Delta M_1^2\indi{\lrcb{\abs{\Delta M_1} \geq A}}})_A$ converges to 0 by monotone convergence.
    Hence 
    \[\frac{M_n}{\sqrt{n}}\convlaw{\Prob^{T}_{\nu}}\Ncal(0,\sigma_J^2(h)) \text{ with } \sigma_J^2(h)=\EE^{T}_{\mu}\lrb{\Delta M_1^2}.\]  
\end{itemize} 
\end{proof}

\subsubsection{Proof of \Cref{thm:CLT}}
Let $h$ be a bounded measurable function and denote $h_0 := h - \pi h$.
We want to prove that :
\[\frac{1}{\sqrt{n}}\sum_{i=0}^{n-1}h_0(X_i) \convlaw{\Prob^P_{\chi}} \Ncal(0,\sigma^2(h)).\]
Let $U_n = \frac{1}{\sqrt{n}}\sum_{i=0}^{n-1}h_0(X_i)$. Let $\xip=\xi\otimes\mu$ where $\mu$ is any probability measure on $\NN$. The probability measure associated to the trajectories $(X_i)_i$ obtained from the Markov chains $(\tilde X_i, \tilde N_i)_i$ generated by $\bar Q$ starting from $\xip$ is the same as the one associated to the sequence $(X_i)_i$ produced as the first component of the Markov chain $(X_i,N_i)_i$ with kernel $P$ starting from $\chi$ defined by $\chi(f)=\int\xi(\rmd x) S(x,\rmd x')R(x',n')f(x',n')$. We will work with the former probability distribution (i.e. under $\Prob_\xip^{\bar Q}$) as it is best suited for our proof.
Denote $V_n = \frac{1}{\sqrt{n}}\sum_{i=1}^{k_n-1}\tilde N_i h_0(\tilde X_i)$ and let us start by proving that $U_n-V_n\convproba{\Prob_\xip^{\bar Q}} 0$.
By definition of $k_n$, $h_0(\tilde  X_{k_n})$ appears less than $\tilde N_{k_n}$ times in $U_n$ and therefore,
\[\abs{U_n - V_n} = (n - S_{k_n})\frac{\abs{h_0(\tilde X_{k_n})}}{\sqrt{n}}\leq \tilde N_{k_n}\frac{\abs{h_0( \tilde X_{k_n})}}{\sqrt{n}}.\]
From \Cref{lm:knn:convprob}, $\frac{k_n}{n}\convproba{\PP_\xip^{\bar Q}}\kappa^{-1}$ and we can apply \Cref{lemma:convProba0:Qbar} to the function $(x,n)\mapsto n\abs{h_0(x)}$ to obtain $\tilde N_{k_n}\frac{\abs{h_0( \tilde X_{k_n})}}{\sqrt{n}}\convproba{\PP_\xip^{\bar Q}}0$.
Let us write $V_n = \frac{1}{\sqrt{n}}\sum_{i=1}^{k_n-1}f(\tilde X_i, \tilde N_i)$ where $f : (x,n)\mapsto nh_0(x)$.
It now suffices to prove that $V_n\convlaw{\Prob^{\bar Q}_{\xip}}\Ncal(0,\sigma^2(h))$. We will procede in two steps :
\begin{enumerate}[label=(\roman*)]
    \item rewrite $V_n = \frac{1}{\sqrt{n}}M_{k_n} + \delta_n$ using a solution to the Poisson equation associated to $f$, where $(M_n)_{n\in\NN}$ is a martingale and $\delta_n\convproba{\PP_\xip^{\bar Q}}0$ ;\label{clt:enum:i}
    \item prove that $\frac{1}{\sqrt{n}}M_n\convlaw{\Prob^{\bar Q}_{\xip}}\Ncal(0,\sigma_M^2(h))$ 
    and apply \Cref{thm:martingale} to obtain $\frac{1}{\sqrt{n}}M_{k_n}\convlaw{\Prob^{\bar Q}_{\xip}}\Ncal(0,\kappa^{-1}\sigma_F^2(h))$.\label{clt:enum:ii}
\end{enumerate}
Starting with \ref{clt:enum:i}, let $H$ be the solution to the Poisson equation associated to $\varrho h_0$ for the kernel $Q$ on $\Xset$ given by \ref{hyp:CLT}, i.e. for all $x\in\Xset$,
\[H(x) - QH(x) = \varrho(x)h_0(x).\]
Then, $H_\kappa := \kappa H$ is a solution to the Poisson equation associated to $\varrho_\kappa h_0$ for the kernel $Q$ on $\Xset$, and
\begin{equation} \label{eq:Poisson:F}
    F(x,n) := H_\kappa(x) + nh_0(x) - \varrho_\kappa(x)h_0(x)
\end{equation}
 is a solution to the Poisson equation associated to $f$ for the Markov kernel $\bar Q$ such that $\hat\pi F^2<\infty$. Indeed, for $(x,n)\in\Xset\times\NN$ we have
\begin{align*}
    \bar Q F(x,n) &= \int_{\Xset \times \nsetzero} Q(x,\rmd x')\tilde R(x',\rmd n')H(x') + \int_{\Xset \times \nsetzero} Q(x,\rmd x')\tilde R(x',\rmd n')n'h(x') \\
    &\quad \quad - \int_{\Xset \times \nsetzero} Q(x,\rmd x')\tilde R(x',\rmd n')\varrho_\kappa(x')h(x') \\
    &= QH_\kappa(x)
\end{align*}
since $\int_{\nsetzero} \tilde R(x',\rmd n')n' = \varrho_\kappa(x')$, and therefore
\begin{align*}
    F(x,n) - \bar QF(x,n) = H_\kappa(x) + nh_0(x) - \varrho_\kappa(x)h_0(x) - QH_\kappa(x) = nh_0(x) = f(x,n).
\end{align*}
Moreover, $\hat\pi F^2 \leq 4\lr{\kappa^2\hat\pi H^2 + \hat\pi f^2 + \hat\pi(\varrho_\kappa h_0)^2}<\infty$ since $\hat\pi H^2 = \tilde\pi H^2<\infty$ and $\hat\pi f^2<\infty$ from \ref{hyp:CLT}. Then,
\begin{align*}
    \hat\pi f^2&=\int_\Xset\lr{\int_\nsetzero {n^2\tilde R(x,\rmd n)}}h_0(x)^2\tilde\pi(\rmd x) \geq\int_\Xset\lr{\int_\nsetzero n \tilde R(x,\rmd n)}^2h_0(x)^2\tilde\pi(\rmd x) = \int_{\Xset}\varrho_\kappa(x)^2 h_0(x)^2\tilde\pi(\rmd x),
\end{align*}
proving that $\hat\pi\lr{\varrho_\kappa h_0}^2 = \tilde\pi\lr{\varrho_\kappa h_0}^2<\infty$.
Now let $M_n = \sum_{i=2}^{n}F(\tilde X_i, \tilde N_i) - \bar QF(\tilde X_{i-1}, \tilde N_{i-1})$ and consider the filtration $\Fcal_n = \sigma(\tilde X_{1:n}, \tilde N_{1:n})$. From \Cref{lm:martingaleExpr}, $(M_n)_n$ is a $\Fcal_n$-martingale, and
\begin{equation}
    \sum_{i=1}^{n-1}f(\tilde X_i, \tilde N_i) = M_n - F(\tilde X_n, \tilde N_n) + F(\tilde X_1, \tilde N_1) . \label{clt:eq:Mn}
\end{equation}
Therefore, $V_n = \frac{1}{\sqrt{n}}M_{k_n} + \delta_n$
with $\delta_n = \frac{F(\tilde X_{k_n}, \tilde N_{k_n})}{\sqrt{n}} + \frac{F(\tilde X_1, \tilde N_1)}{\sqrt{n}}$. 
Note that $\frac{F(\tilde X_1, \tilde N_1)}{\sqrt{n}}\convproba{\PP_\xip^{\bar Q}}0$ trivially, and $\frac{F(\tilde X_{k_n}, \tilde N_{k_n})}{\sqrt{n}}\convproba{\PP_\xip^{\bar Q}}0$ as a consequence of \Cref{lemma:convProba0:Qbar}. Using \Cref{lem:martingaleTCL} combined to \Cref{lemma:SLLN:Qbar} with $Y_i = (\tilde X_i, \tilde N_i)$,
\[\frac{M_n}{\sqrt{n}}\convlaw{\Prob^{\bar Q}_{\xip}}\Ncal(0,\sigma_F^2(h)) \text{ with } \sigma_F^2(h)=\EE^{\bar Q}_{\hat\pi}\lrb{\Delta M_1^2}.\] 
which proves \ref{clt:enum:ii} and concludes the first part of the proof.
Let us now turn our attention to the expression of the variance claimed by the theorem :
\[\sigma^2(h) = \kappa\tilde\sigma^2(\varrho h_0) + \kappa^{-1}\hat\sigma^2(h_0,\kappa),\]
with $\tilde\sigma^2(\varrho h_0) = 2\tilde\pi\lr{\varrho h_0H} - \tilde\pi\lr{(\varrho h_0)^2}$ and $\hat\sigma^2(h_0,\kappa) = \int_{\Xset}h_0(x)^2 \Var^{\tilde R(x,\cdot)}[N] \tilde\pi(\rmd x)$. From the expression of $F$ in \eqref{eq:Poisson:F} and denoting $\Delta H_1 = H(\tilde X_1)-QH(\tilde X_0)$ we have
    \begin{align}
        \sigma^2(h) &= \kappa^{-1}\EE^{\bar Q}_{\hat\pi}\lrb{\lr{\kappa\Delta H_1+(\tilde N_1-\varrho_\kappa(\tilde X_1))h(\tilde X_1)}^2} \nonumber\\
        &= \kappa\EE^{\bar Q}_{\hat\pi}\lrb{\Delta H_1^2} + \kappa^{-1}\EE^{\bar Q}_{\hat\pi}\lrb{(\tilde N_1-\varrho_\kappa(\tilde X_1))^2h(\tilde X_1)^2} + 2\EE^{\bar Q}_{\hat\pi}\lrb{\Delta H_1(\tilde N_1-\varrho_\kappa(\tilde X_1))h(\tilde X_1)}. \label{eq:sigmah:expression}
    \end{align}
Let us take a look at the first term of the rhs :
\begin{align*}
    \EE^{\bar Q}_{\hat\pi}\lrb{\Delta H_1^2} &= \EE^{\bar Q}_{\hat\pi}\lrb{H(\tilde X_1)^2} + \EE^{\bar Q}_{\hat\pi}\lrb{QH(\tilde X_0)^2} - 2\EE^{\bar Q}_{\hat\pi}\lrb{H(\tilde X_1)QH(\tilde X_0)} \\
    &= \EE^{\bar Q}_{\hat\pi}\lrb{H(\tilde X_0)^2} - \EE^{\bar Q}_{\hat\pi}\lrb{QH(\tilde X_0)^2},
\end{align*}
where we used that $\hat\pi$ is a stationary probability measure.
Noting that $H^2 - QH^2 = (H-QH)(H+QH) = \varrho h _0(2H - \varrho h_0)$, we finally obtain using \ref{hyp:CLT}
\begin{equation}
    \EE^{\bar Q}_{\hat\pi}\lrb{\Delta H_1^2} = 2\tilde\pi\lr{\varrho h_0H} - \tilde\pi\lr{(\varrho h_0)^2}.  \label{eq:CLT:var:2}
\end{equation}
Let us now rewrite the second term of \eqref{eq:sigmah:expression} :
\begin{align*}
    \EE^{\bar Q}_{\hat\pi}\lrb{(\tilde N_1-\varrho_\kappa(\tilde X_1))^2h_0(\tilde X_1)^2} &= \int_\Xset\lr{\int_\nsetzero(n-\varrho_\kappa(x))^2 \tilde R(x,\rmd n)} h_0(x)^2 \tilde\pi(\rmd x) \\
    &= \int_{\Xset}h_0(x)^2 \Var^{\tilde R(x,\cdot)}[N] \tilde\pi(\rmd x), 
\end{align*}
where $\Var^{\tilde R(x,\cdot)}[N] = \int(n-\varrho_\kappa(x))^2 \tilde R(x,\rmd n)$ due to \ref{hyp:unbiased}. For the last term of \ref{eq:sigmah:expression}, the same hypothesis gives that
\begin{align*}
    \EE^{\bar Q}_{\hat\pi}\lrb{\Delta H_1(\tilde N_1-\varrho_\kappa(\tilde X_1))h_0(\tilde X_1)} = \EE^{\bar Q}_{\hat\pi}\lrb{{\EE^{\bar Q}_{\hat\pi}\lrb{\tilde N_1-\varrho_\kappa(\tilde X_1)\vert \tilde X_1,\tilde X_0}}\Delta H_1 h_0(\tilde X_1)} = 0, 
\end{align*}
which concludes the second part of the proof.

\subsection{Geometric ergodicity}

\subsubsection{Proof of \Cref{lem:smallset} and \Cref{lem:access}}
\label{appendix:proof:lem:2and3}

\begin{proof}[Proof of \Cref{lem:smallset}]
    We start with the first point \ref{item:smallset:exist}.
    From the definition of $S$ in \eqref{eq:def:S} and noting that $\ens{C}_\eta$ is a $(1,\varepsilon \nu)$-small set for $Q$, we have for all $x\in \ens{C}_\eta$ and $\ens{A}\in\Xsigma$,
    \begin{align*}
        S(x,\ens{A}) & = \sum_{k =1}^\infty  \eset_x^Q \left[ \rho_{\tilde R} (X_k) \indi{\ens{A}}(X_k)\prod_{i=1}^{k-1} \left( 1 - \rho_{\tilde R}(X_i)\right)  \right]\\
        &\geq Q(x,\rho_{\tilde R}\indi{\ens{A}}) \geq \varepsilon \nu(\rho_{\tilde R}\indi{\ens{A}}).
    \end{align*}
    Applying \eqref{eq:def:P} with $n=0$, we deduce that for all $x\in \ens{C}_\eta$ and all $\ens{B} \in \Xsigma \otimes \mathcal{P}(\nsetzero)$,
    \begin{align*}
        P(x,0;\ens{B})& = \int_{\ens{B}}S(x,\rmd x')R(x',\rmd u') \\
        &\geq \varepsilon \int_{\ens{B}} \nu(\rmd x')\rho_{\tilde R}(x')R(x',\rmd u') \\
        &=: \varepsilon \tilde{\nu}(\ens{B}),
    \end{align*}
    which shows \ref{item:smallset:exist}. 
    Let us now turn to \ref{item:smallset:pos}. 
    Noting that $\ens{C}_\eta^+ := \ens{C}_\eta\cap\{\tilde R(.,1)>0\} = \ens{C}_\eta\cap\{R(.,0)>0\}$ we have,
    \begin{align*}
        \tilde \nu (\ens{C}_\eta \times \{0 \}) &= \int_{\ens{C}_\eta} \nu(\rmd x)\rho_{\tilde R}(x) R(x,0) \\
        &\geq \eta\int_{\ens{C}_\eta} \nu(\rmd x)R(x,0)\\
        &= \eta \int_{\ens{C}_\eta^+} \nu(\rmd x)R(x,0) > 0,
    \end{align*}
    where the last inequality stems from $\nu(\ens{C}_\eta^+)>0$ and $R(x,0)>0$ for all $x\in \ens{C}_\eta^+$. Hence \ref{item:smallset:pos}.
\end{proof}

\begin{proof}[Proof of \Cref{lem:access}]
    Let $\ens{A}$ be accessible for $Q$ such that $\epsilon_\ens{A} \eqdef \inf_{x \in \ens{A}} \rho_{\tilde R}(x)>0$.
    We first show that $\ens{A}$ is  accessible for $S$. Let $x\in \Xset$ and $n \in \nsetzero$ such that $Q^n(x,\ens{A})>0$. Let us 
    consider the representation of $S$ using the kernel $G$ defined in \eqref{eq:kernelG}. Define $\ens{D} \eqdef
        \sett{(x,u) \in  \Xset \times [0,1]}{u \leq \rho_{\tilde R}(x) }$ and $\sigma_{\ens{D}}^{(m)}$ the
    $m$-th return time to the set $\ens{D}$. Then, for any probability measure $\mu$ on $[0,1]$, 
    \begin{align*}
        \int_\Xset Q^n(x,\rmd y) \rho_{\tilde R}(y) \indi{\ens{A}}(y) 
        & = \Prob^G_{\delta_x \otimes \mu} ( (X_n,U_n) \in \ens{D}, X_n \in \ens{A})  \\
            & = \Prob^G_{\delta_x \otimes \mu}  (\exists m \in [1:n], n = \sigma_\ens{D}^{(m)}, X_{\sigma_\ens{D}^{(m)}} \in \ens{A}) \\
            & \leq \Prob^G_{\delta_x \otimes \mu}  (\exists m \in [1:n], X_{\sigma_\ens{D}^{(m)}} \in \ens{A} )               \\
            & = \Prob^S_x (\exists m \in [1:n],X_m \in \ens{A}) \\
            & \leq \sum_{m=1}^n \Prob^S_x (X_m \in \ens{A}).
    \end{align*}
    Hence, $\sum_{m=1}^n S^m(x,\ens{A}) \geq \epsilon_\ens{A} Q^n(x,\ens A)>0$ and there exists an integer $m \leq n$ such that $S^m(x,\ens A)>0$.  Thus, $A$ is an accessible set for $S$.

    We now show that $\ens A \times \{0\}$ is accessible for $P$. Let $(x,k) \in \Xset \times \nsetzero$. From the first part of the proof, there exists $m \in \nsetzero$ such that $S^m(x,\ens A)>0$. 
    Indeed, let $\ens B \eqdef \ens{A}\times \{0\}$, $\ens F:=\Xset\times \{0\}$ and $(Y_n=(X_n,N_n))_{n\in\nsetzero}$ be a Markov chain of kernel $P$. Start by noting that if $x\in\Xset$ and $N\sim R(x,\cdot)$,
    \[\Prob^P_{(x,0)}\lr{ \sigma_\ens{F} < \infty } = \Prob\lr{N < \infty} = 1\]
    since \ref{hyp:majoration N} ensures that $\eset\lrb{N}<\infty$. Then, using the strong Markov inequality, we obtain by induction that 
    \begin{equation*}
        \Prob^P_{(x,0)}\lr{ \sigma^{(m)}_\ens{F} < \infty } = 1.
    \end{equation*}
    Hence,
    \begin{align*}
        \Prob^P_{(x,0)} (Y_{\sigma_\ens{F}^{(m)}} \in \ens B) &=\Prob^P_{(x,0)} (Y_{\sigma_\ens{F}^{(m)}} \in \ens{B}, \sigma_\ens{F}^{(m)} < \infty) \\
        &=\sum_{\ell=1}^{\infty}\Prob^P_{(x,0)} (Y_{\sigma_\ens{F}^{(m)}} \in \ens B, \sigma_\ens{F}^{(m)}=\ell)\\
        &= \sum_{\ell=1}^{\infty}\Prob^P_{(x,0)} (Y_{\ell} \in \ens B, \sigma_\ens{F}^{(m)}=\ell)\\
        &\leq \sum_{\ell=1}^{\infty}\Prob^P_{(x,0)} (Y_\ell \in \ens B)\\
        &= \sum_{\ell=1}^{\infty}P^\ell(x,0; \ens B).
    \end{align*}
    Since by definition of $P$ in \eqref{eq:def:P} we have 
    \begin{equation*}
        \Prob^P_{(x,0)} (Y_{\sigma_\ens{F}^{(m)}} \in \ens B) = \Prob^S_x (X_m \in \ens A) >0,
    \end{equation*}
   at least one of the terms in the sum above is positive and therefore there exists $\ell\in\nsetzero$ such that
    \[P^\ell(x,0\,; \ens A \times\{0\})>0.\]
    Using the definition of $P$ in \eqref{eq:def:P} once more, we have that $P^k(x,k; \{(x,0)\})=1$ and so
    \[P^{k+\ell}(x,k; \ens A \times\{0\})>0,\]
    which concludes the proof.

\end{proof}

\subsubsection{Proof of \Cref{lem:ergo}}
\label{appendix:proof:lem:geom:ergo}
\begin{proof}[(Proof of Lemma \ref{lem:ergo})]
Let $\ens B := {\ens{C}_\eta \times \{ 0\}}$ and $\ens F:=\Xset\times\{0\}$. 
Let $\beta\in(1,\infty)$ be an arbitrary constant and let $D<\infty$ be any positive constant (assuming it exists) such that 
\[\sup_{x\in\Xset}\int_\nsetzero\beta^{n+1} R(x,\rmd n) \leq D.\]
We first show that for all $(x,n)\in\Xset\times\nsetzero$ and $\beta>1$ we have:
\begin{equation}\label{eq:lemma3:1}
    \eset^P_{(x,n)}\lrb{\beta^{\sigma_\ens{B}}} \leq \beta^n\eset_x^S \lrb{D^{\sigma_{\ens{C}_\eta}}}.
\end{equation}
Let us start with the case $n=0$ :
\begin{equation} \label{eq:geom:defEsp}
    \eset_{(x,0)}^P \lrb{ \beta^{\sigma_\ens{B}} } = \sum_{\ell=1}^\infty g^{(\ell)}(x),
\end{equation}
where $g^{(\ell)}(x) := \eset_{(x,0)}^P \lrb{ \beta^{\sigma_\ens{F}^{(\ell)}} \indiacc{\sigma_\ens{F}^{(\ell)}=\sigma_\ens{B}} }$.
\begin{itemize}
    \item For $\ell = 1$,  
    \begin{equation}
        \label{eq:geom:case1}
        g^{(1)}(x) = \eset_{(x,0)}^P \lrb{\beta^{N_1 +1 } \indi{\ens{C}_\eta}(X_{N_1 + 1})} = \int_\Xset\lr{\int_\nsetzero\beta^{n+1} R(x,\rmd n)}\indi{\ens{C}_\eta}(x')S(x,\rmd x') \leq D \times \Prob_x^S\lr{\sigma_{\ens{C}_\eta} = 1},
    \end{equation}
    where the second equality holds from the definition of $P$ in \eqref{eq:def:P} ensuring that $X_{N_1 + 1} = X_1\quad$ $\Prob_{(x,0)}^P$-a.s.

    \item For $\ell > 1$, let us note that $\lrcb{\sigma_\ens{F}^{(\ell)} = \sigma_B} = \lrcb{\sigma_\ens{F}^{(\ell-1)}\circ\theta^{\sigma_\ens{F}} = \sigma_B\circ\theta^{\sigma_\ens{F}}, X_{\sigma_\ens{F}}\notin\ens{C}_\eta}$ as $\Prob_{(x,0)}^P$-a.s. we have $\sigma_\ens{F}^{(\ell)} = \sigma_\ens{F}^{(\ell-1)}\circ\theta^{\sigma_\ens{F}} + \sigma_\ens{F}$, and $\sigma_B = \sigma_B\circ\theta^{\sigma_\ens{F}} + \sigma_\ens{F}$ under the event $\lrcb{\sigma_B>\sigma_\ens{F}}\supset\{\sigma_\ens{F}^{(\ell)} = \sigma_B\}$ for $\ell>1$. 
    Hence,
    \begin{align}
        g^{(\ell)}(x) &= \eset_{(x,0)}^P \lrb{ \beta^{\sigma_\ens{F}^{(\ell)}} \indiacc{\sigma_\ens{F}^{(\ell)}=\sigma_\ens{B}} } \nonumber \\
        &\leq \eset_{(x,0)}^P \lrb{ \beta^{\sigma_\ens{F}} \indi{\bar{\ens{C}}_\eta}(X_{\sigma_\ens{F}})\eset_{(x,0)}^P \lrb{ \beta^{\sigma_\ens{F}^{(\ell-1)}\circ\theta^{\sigma_\ens{F}}} \indiacc{\sigma_\ens{F}^{(\ell-1)}\circ\theta^{\sigma_\ens{F}}=\sigma_\ens{B}\circ\theta^{\sigma_\ens{F}}} \vert \Fcal_{\sigma_\ens{F}} } } \nonumber \\
        &= \eset_{(x,0)}^P \lrb{ \beta^{\sigma_\ens{F}} \indi{\bar{\ens{C}}_\eta}(X_{\sigma_\ens{F}}) \eset_{(X_{\sigma_\ens{F}},0)}^P \lrb{\beta^{\sigma_\ens{F}^{(\ell-1)}}\indiacc{\sigma_\ens{F}^{(\ell-1)}=\sigma_\ens{B}} } } \label{eq:lm5:markov}\\
        &= \eset_{(x,0)}^P \lrb{ \beta^{N_1+1} \indi{\bar{\ens{C}}_\eta}(X_1) \eset_{(X_1,0)}^P \lrb{\beta^{\sigma_\ens{F}^{(\ell-1)}}\indiacc{\sigma_\ens{F}^{(\ell-1)}=\sigma_\ens{B}} } }, \label{eq:lm5:split}
    \end{align}
\end{itemize}
where \eqref{eq:lm5:markov} comes from the strong Markov property applied to the Markov chain $(X_i,N_i)_{i\in\nsetzero}$ with the stopping time $\sigma_\ens{F}$ and \eqref{eq:lm5:split} comes from the definition of the kernel $P$ since $X_1$ is repeated $N_1 + 1$ times while the second component decreases by one at each iteration until reaching zero. The definition of $P$ in \eqref{eq:def:P} gives
\begin{align*}
    \eset_{(x,0)}^P \lrb{ \beta^{N_1+1} \indi{\bar{\ens{C}}_\eta}(X_1) \eset_{(X_1,0)}^P \lrb{\beta^{\sigma_\ens{F}^{(\ell-1)}}\indiacc{\sigma_\ens{F}^{(\ell-1)}=\sigma_\ens{B}} } } &= \int_\Xset\lr{\int_\nsetzero\beta^{n+1} R(x,\rmd n)}\indi{\bar{\ens{C}}_\eta}(x')g^{(\ell-1)}(x')S(x,\rmd x') \\
    &\leq D \int_\Xset \indi{\bar{\ens{C}}_\eta}(x') g^{(\ell-1)}(x')S(x,\rmd x') \\
    &= D\times\eset_{(x,0)}^P \lrb{\indi{\bar{\ens{C}}_\eta}(X_1) g^{(\ell-1)}(X_1) }.
\end{align*}
Hence, 
\begin{equation*}
    g^{(\ell)}(x) = \eset_{(x,0)}^P \lrb{ \beta^{\sigma_\ens{F}^{(\ell)}} \indiacc{\sigma_\ens{F}^{(\ell)}=\sigma_\ens{B}} } \leq D\times\eset_{(x,0)}^P \lrb{ \indi{\bar{\ens{C}}_\eta}(X_{\sigma_\ens{F}}) g^{(\ell-1)}\lr{X_{\sigma_\ens{F}}} },
\end{equation*}
which used in conjunction with \eqref{eq:geom:case1} gives 
\begin{align}
    g^{(\ell)}(x) &\leq D^\ell \times \eset_{(x,0)}^P \lrb{ \indi{\bar{\ens{C}}_\eta}(X_{\sigma_\ens{F}})...\indi{\bar{\ens{C}}_\eta}(X_{\sigma^{(\ell-1)}_\ens{F}}) \Prob_{X_{{\sigma_\ens{F}}^{(\ell-1)}}}^S\lr{\sigma_{\ens{C}_\eta} = 1} } \nonumber \\
    &= D^\ell \times \eset_x^S \lrb{ \indi{\bar{\ens{C}}_\eta}(X_1)...\indi{\bar{\ens{C}}_\eta}(X_{\ell-1}) \eset_{X_{\ell-1}}^S\lrb{\indi{\ens{C}_\eta}(X_1)} } \label{eq:geom:PversS}  \\
    &= D^\ell \times \eset_x^S \lrb{ \indi{\bar{\ens{C}}_\eta}(X_1)...\indi{\bar{\ens{C}}_\eta}(X_{\ell-1}) \eset_x^S\lrb{\indi{\ens{C}_\eta}(X_\ell) \vert \Fcal_{\ell-1}} } \nonumber \\
    &= D^\ell \times \eset_x^S \lrb{ \indi{\bar{\ens{C}}_\eta}(X_1)...\indi{\bar{\ens{C}}_\eta}(X_{\ell-1})\indi{\ens{C}_\eta}(X_\ell) } \nonumber \\
    &= D^\ell \times \Prob_x^S\lr{\sigma_{\ens{C}_\eta} = \ell}, \nonumber
\end{align}
 where \eqref{eq:geom:PversS} comes from the definition of $S$ and $P$ in \eqref{eq:def:S} and \eqref{eq:def:P}. Plugging the above into \eqref{eq:geom:defEsp} leads to the following upper bound,
\begin{align*}
    \eset_{(x,0)}^P \lrb{ \beta^{\sigma_\ens{B}} } \leq \sum_{\ell=1}^\infty D^\ell \Prob_x^S \lr{\sigma_{\ens{C}_\eta}=\ell} = \eset_x^S \lrb{D^{\sigma_{\ens{C}_\eta}}}.
\end{align*}

If $x\in \ens{C}_\eta$, then $\eset^P_{(x,n)}\lrb{\beta^{\sigma_\ens{B}}} = \beta^n$, showing \eqref{eq:lemma3:1} for $x\in \ens{C}_\eta$.
If $x\notin \ens{C}_\eta$, then we have $\sigma_\ens{B} = \sigma_\ens{B}\circ\theta^n+n$, $\Prob_{(x,n)}$-a.s. Thus,
\begin{align*}
    \eset^P_{(x,n)}\lrb{\beta^{\sigma_\ens{B}}} &= \beta^n\eset^P_{(x,n)}\lrb{\beta^{\sigma_\ens{B}\circ\theta^n}} = \beta^n\eset^P_{(x,0)}\lrb{\beta^{\sigma_\ens{B}}} \leq \beta^n\eset_x^S \lrb{D^{\sigma_{\ens{C}_\eta}}},
\end{align*}
which completes the proof of the inequality \eqref{eq:lemma3:1}.

For any $\eta \in (0,\eta_0)$,
\begin{equation*}
        \PP_x^S \lr{ \sigma_{\ens{C}_\eta} >k } = 
        \sum_{\ell_1,\dots,\ell_k =1}^\infty  \PE_x^Q \left[ \prod_{j=1}^{k}\left[ \prod_{i = s_{j-1}+1}^{s_j-1} \left(1-\rho_{\tilde R}(X_i)\right)  \right] \rho_{\tilde R}(X_{s_j}) \indi{\bar{\ens{C}}_\eta} (X_{s_j})\right],
\end{equation*}
where we have set $s_j=\sum_{i=1}^j \ell_i$ for $j \geq 1$ and $s_0=0$. Using that $ \rho_{\tilde R}
\indi{\bar{\ens{C}}_\eta}  \leq \eta \indi{\bar{\ens{C}}_\eta} $ and $1-\rho_{\tilde R} \leq (1-\eta_0)^{\indi{\ens{C}_{\eta_0}}}$, 
\begin{align}
    \PP_x^S \lr{\sigma_{\ens{C}_\eta} >k} &\leq \sum_{\ell_1,\dots,\ell_k =1}^\infty \eta^k\PE_x^Q \left[
        (1-\eta_0)^{\sum_{j=1}^{k} \sum_{i = s_{j-1}+1}^{s_j-1} \indi{\ens{C}_{\eta_0}}(X_i)} 
     \prod_{j=1}^k \indi{\bar{\ens{C}}_\eta} (X_{s_j})\right] \nonumber \\
     &\leq  \sum_{\ell_1,\dots,\ell_k =1}^\infty \eta^k\PE_x^Q \left[
        (1-\eta_0)^{M_{[1:s_k-1]\setminus \{s_1,\ldots,s_{k-1}\}}} \prod_{j=1}^{k-1} \indi{\bar{\ens{C}}_\eta} (X_{s_j})\right], \label{eq:geom:zero}
\end{align}
where $M_I\eqdef \sum_{i \in I} \indi{\ens{C}_{\eta_0}}(X_i)$ for any $I \subset \nsetzero$. Moreover, since $\eta <\eta_0$, we have $\ens{C}_{\eta_0} \subset \ens{C}_\eta$, hence  
\begin{equation}\label{eq:geom:one}
    \indi{\bar{\ens{C}}_\eta} (x)=(1-\eta_0)^{\indi{\ens{C}_{\eta_0}} (x)}\indi{\bar{\ens{C}}_\eta} (x) \leq (1-\eta_0)^{\indi{\ens{C}_{\eta_0}} (x)}.    
\end{equation}
Finally, setting for any arbitrary $ \alpha \in (0,1)$,   
\begin{align*}
    & N_\ell=\sum_{i=1}^{\ell-1} \indi{\ens{C}_{\eta_0}}(X_i),\\
    &R_{\ell}\eqdef \PE_x^Q \lrb{(1-\eta_0)^{N_{\ell}}},\\
    &R_{\ell,1} = \PE_x^Q \lrb{(1-\eta_0)^{N_{\ell}} \indiacc{N_{\ell} > \alpha(\ell-k) }},\\
    &R_{\ell,2}=  \PE_x^Q \lrb{(1-\eta_0)^{N_{\ell}}  V(X_{\ell})\indiacc{N_{\ell} \leq \alpha(\ell-k) }},   
\end{align*}
and plugging \eqref{eq:geom:one} into \eqref{eq:geom:zero} combined with $V\geq 1$, we get
\begin{equation}
    \PP_x^S \lr{\sigma_{\ens{C}_\eta} >k} \leq \sum_{\ell_1,\dots,\ell_k =1}^\infty \eta^k R_{s_k} \leq \sum_{\ell_1,\dots,\ell_k =1}^\infty \eta^k \lrcb{R_{s_k,1} +R_{s_k,2}} \leq \sum_{\ell_1,\dots,\ell_k =1}^\infty \eta^k \lrb{(1-\eta_0)^{\alpha(s_k-k)} +R_{s_k,2}}. \label{eq:geom:three}
\end{equation}
We now give an explicit upper bound for $R_{\ell,2}$ for $\ell\geq k$. Under \ref{hyp:drift}, $QV(x) \leq \lambda V(x)$ for $x\notin \ens{C}_{\eta_0}$ and $QV(x) \leq b_\infty V(x)$ for $x\in \ens{C}_{\eta_0}$. Therefore, for any $x\in \Xset$, 
$$ 
 (1-\eta_0)^{\indi{\ens{C}_{\eta_0}}(x)} QV(x) \leq (b_\infty (1-\eta_0) \lambda^{-1})^{\indi{\ens{C}_{\eta_0}}(x)} \lambda V(x) \leq A^{\indi{\ens{C}_{\eta_0}}(x)} \lambda V(x),
$$ 
where $A:= 1 \vee  b_\infty(1-\eta_0) \lambda^{-1}$. This implies by applying the tower rule conditionally on $X_{1:\ell-1}$, then $X_{1:\ell-2}$ and so on,  
\begin{equation}
    \PE_x^Q\lrb{V(X_\ell) \frac{(1-\eta_0)^{N_\ell} }{A^{N_\ell} \lambda^{\ell-1}} }=\PE_x^Q\lrb{V(X_1)\prod_{i=1}^{\ell-1} \frac{(1-\eta_0)^{\indi{\ens{C}_{\eta_0}}(X_{i})} V(X_{i+1})}{A^{\indi{\ens{C}_{\eta_0}}(X_{i})} \lambda V(X_{i})} }\leq \PE_x^Q\lrb{V(X_1)}=QV(x)\leq b_\infty V(x), \label{eq:geom:four}
\end{equation}
where the last inequality follows from \eqref{eq:boundQV}. Since $A\geq 1$, we have $1\leq A^{\alpha(\ell-k)}/A^{N_{\ell}}$ on $\{N_{\ell} \leq \alpha(\ell-k)\}$ and hence
\begin{equation*}
    R_{\ell,2}\leq   A^{\alpha(\ell-k)} \lambda^{\ell -1}\PE_x^Q\lrb{V(X_\ell) \frac{(1-\eta_0)^{N_\ell} }{A^{N_\ell} \lambda^{\ell-1}} } \leq A^{\alpha(\ell-k)} \lambda^{\ell -1} b_\infty V(x),
\end{equation*}
where the last inequality follows from \eqref{eq:geom:four}. 
Pick $\alpha$ small enough so that $\lambda A^\alpha <1 $. Plugging the inequality above (with $\ell$ replaced by $s_k=\ell_1+\cdots+\ell_k$) into \eqref{eq:geom:three} yields 
\begin{align*}
    \PP_x^S \lr{\sigma_{\ens{C}_\eta} >k}  & \leq \sum_{\ell_1,\dots,\ell_k =1}^\infty \eta^k \left[  (1- \eta_0)^{\alpha(\ell_1 + \dots + \ell_k -k)} + \left[ \lambda A^\alpha \right]^{\ell_1 + \dots + \ell_k}A^{-\alpha k}\lambda^{-1} b_\infty  V(x) \right]                          \\
    & = \eta^k \left[ \left( \frac{(1- \eta_0)^\alpha}{1 - (1 - \eta_0)^\alpha} \right)^k \frac{1}{(1-\eta_0)^{\alpha k}} + \left( \frac{\lambda A^\alpha}{1 - \lambda A^\alpha} \right)^k \frac{1}{A^{\alpha k}} \lambda^{-1} b_\infty  V(x)\right] \\
    & = \eta^k \lrb{\lr{\frac{1}{1 - (1-\eta_0)^\alpha}}^k + \lr{\frac{ \lambda}{1-\lambda A ^\alpha}} ^k \lambda^{-1} b_\infty  V(x)}.
\end{align*}
Now set $\gamma  := \max\lr{\frac{1}{1 - (1-\eta_0)^\alpha},\frac{\lambda}{1-\lambda A ^\alpha}}$ and choose $\eta < \eta_0$ sufficiently small so that $\eta \gamma  <1$.  Then, 
\begin{equation*}
    \PP_x^S  \lr{\sigma_{\ens{C}_\eta} >k} \leq \eta^k \gamma ^k(1+\lambda^{-1} b_\infty  V(x)),
\end{equation*}
and if $D \in (1,\eta^{-1}\gamma ^{-1})$, 
\begin{align*}
    \PE_x^S \left[ \frac{D^{\sigma_{\ens{C}_\eta}}-1}{D-1} \right]&= \PE_x^S \left[ \sum_{k=0}^\infty D^k \indiacc{k<\sigma_{\ens{C}_\eta}} \right] =\sum_{k=0}^\infty D^k  \PP_x^S \left( \sigma_{\ens{C}_\eta} >k  \right)\\
&\leq \sum_{k=0}^\infty D^k \gamma ^k \eta^k (1+\lambda^{-1} b_\infty  V(x)) = \frac{1+\lambda^{-1} b_\infty  V(x)}{1-D \gamma  \eta}.
\end{align*}
From \ref{hyp:majoration N} there exists $\beta_0\in(1,\infty)$ and $D_0<\infty$ such that 
\[\sup_{x\in\Xset}\int_\nsetzero\beta_0^{n+1} R(x,\rmd n)\leq D_0.\]
Let $r\in(0,1)$ and consider $\beta_r=\beta_0^r$. From Hölder's inequality,
\begin{equation}
    \int_\nsetzero\beta_r^{n+1} R(x,\rmd n) \leq \lr{\int_\nsetzero\beta_0^{n+1} R(x,\rmd n)}^r \leq D_0^r.
\end{equation}
Choose $r$ such that $D_r := D_0^r\in (1,\eta^{-1}\gamma ^{-1})$. We can now apply the above inequality in combination with \eqref{eq:lemma3:1} using the couple $(\beta_r, D_r)$ instead of $(\beta,D)$ and obtain for all $(x,n)\in\Xset\times\nsetzero$,

\begin{align*}
    \eset^P_{(x,n)}\lrb{\beta_r^{\sigma_\ens{B}}} \leq \beta_r^n \PE^S_x \lrb{ D_r^{\sigma_{\ens{C}_\eta}} } & \leq \beta_r^n\lrb{1 +   (D_r-1) \frac{1 + \lambda^{-1} b_\infty  V(x)}{1 - D_r \gamma  \eta } }\leq \beta_\star \beta_r^n V(x)
\end{align*}
since $V\geq 1$, and where 
\begin{equation*}
    \beta_\star = 1+(D_r-1)\frac{1+\lambda^{-1}b_\infty}{1 - D_r \gamma  \eta }.
\end{equation*}
The proof is completed. 

\end{proof}

\section{Numerical experiments}
\subsection{Details of the hyperparameters for the normalizing flows}
\label{sec:details:flows}
The normalizing flow is a RQSpline with $10$ layers, $8$ bins, and a $(128,128)$ hidden size. The local sampler is a MALA algorithm with step size $0.1$. Training consists of a total of 30 loops with a unique epoch each time. 10 global steps are implemented with a further 10 local steps between each, and the optimizer (Adam) has a learning rate of $8 \cdot 10^{-4}$ and a momentum of 0.9. The seed is 1250. The code generating the figures is available at \url{https://github.com/charlyandral/importance_markov_chain}.




\bibliographystyle{elsarticle-num}
\bibliography{spa_biblio.bib}







\end{document}